\newtheorem{theorem}{Theorem}[section]
\newtheorem{lemma}[theorem]{Lemma}
\theoremstyle{definition}
\newtheorem{remark}[theorem]{Remark}
\newtheorem{proposition}[theorem]{Proposition}
\numberwithin{equation}{section}  
\begin{document}

\title[NG modes for lattice NJL models]{Nambu--Goldstone modes in a lattice Nambu--Jona-Lasinio model with multi flavor  symmetries}

\author[Y. Goto and T. Koma]{Yukimi Goto\textsuperscript{1}${}^\dagger$ \and Tohru Koma\textsuperscript{2}}
\thanks{${}^\dagger$Email:~{\tt yukimi.goto@gakushuin.ac.jp}}
\address{$1.$ Gakushuin University,~Department of Mathematics,\\ Mejiro, Toshima-ku, Tokyo 171-8588, Japan\\
	$2.$ Gakushuin University (retired), Department of Physics,\\ Mejiro, Toshima-ku, Tokyo 171-8588, Japan}

\maketitle

\medskip

\noindent
{\bf Abstract:} 
We study a lattice Nambu--Jona-Lasinio model with $\mathrm{SU}(2)$ and $\mathrm{SU}(3)$ flavor symmetries of staggered fermions  
in the Kogut--Susskind Hamiltonian formalism. 
This type of four-fermion interactions has been widely used for describing low-energy behaviors of strongly interacting quarks 
as an effective model. In particular, we focus on the Nambu--Goldstone modes associated with the spontaneous breakdown of the flavor symmetries.
In the strong coupling regime for the interactions, we prove the following:
(i) For the spatial dimension $\nu \ge 5$,  the $\mathrm{SU}(3)$ model shows a long-range order at sufficiently low temperatures.
(ii) In the case of the $\mathrm{SU}(2)$ symmetry, there appears a long-range order in the spatial dimension $\nu \ge 3$
at sufficiently low temperatures. (iii) These results hold in the ground states as well.
(iv) In general, if a long-range order emerges in this type of models, then there appear gapless excitations 
above the sector of the infinite-volume ground states. These are nothing but the Nambu--Goldstone modes
associated with the spontaneous breakdown of the global rotational symmetry of flavors.
(v) In particular, we establish that the number of the linearly independent Nambu--Goldstone modes is equal to 
the number of the broken symmetry generators on the Hilbert space constructed from a certain symmetry-breaking
infinite-volume ground state.
\bigskip

\tableofcontents

\section{Introduction}
We continue here our study of the phase transition for a lattice model that is related to quantum chromodynamics (QCD).
In our previous work~\cite{GK}, by using the method of reflection positivity, 
we proved the existence of the long-range order at sufficiently low and zero temperatures in a lattice Nambu--Jona-Lasinio (NJL) model, which has an effective four-fermion interaction~\cite{Nambu, NJL, NJL2}.  
More precisely, the long-range order is associated with a charge-density wave in condensed matter language. 
Actually it yields the spontaneous breakdown of the particle-hole symmetry.  
In the staggered fermion formulation~\cite{Susskind, GK}, this implies the spontaneous mass generation for the fermions, 
which yields the breaking of the chiral symmetry. 
Prior to our previous work~\cite{GK}, the chiral symmetry breaking for staggered fermions was already proved by Salmhofer and Seiler~\cite{SS1, SS2}, however, the Nambu--Goldstone mode was not pursued.
As already mentioned in~\cite{GK}, Salmhofer and Seiler~\cite{SS1}
reduced a model of lattice gauge theory with $\mathrm{U}(N)$ symmetry 
to a NJL-type model by integrating over the gauge fields in the strong coupling limit. Their model so obtained is slightly different from our NJL model in the present paper. See also~\cite[Rem.~2.3]{GK}.

In this paper, we focus on Nambu--Goldstone modes in a lattice NJL model~\cite{ABG,Hatsuda} 
with $\mathrm{SU}(2)$ and $\mathrm{SU}(3)$ flavor symmetries of staggered fermions 
in the Kogut--Susskind Hamiltonian formalism~\cite{Susskind, KS}.
As to the details of the staggered fermions and other related models in lattice field theories, the references~\cite{Rothe, Nakamura, Seiler} may be useful to readers.
The purpose of our study is threefold: 
First, we will prove that in a strong coupling regime for the four-fermion interaction at sufficiently low and zero temperatures, 
a phase transition occurs in the $\mathrm{SU}(3)$ (resp.~$\mathrm{SU}(2)$) NJL model, where the $\mathrm{SU}(3)$ (resp.~$\mathrm{SU}(2)$) symmetry is spontaneously broken.
Second, we will establish that, in the infinite-volume limit, the NJL models exhibit an excitation mode  
which has an infinitesimally small excitation energy above the ground state.
This is nothing but the Nambu--Goldstone mode associated with the breakdown of the continuous flavor symmetry. 
As far as we know, these are the first mathematically rigorous results 
which prove the flavor symmetry breaking for lattice fermions.
Besides, we will prove that the number of the linearly independent Nambu--Goldstone modes is equal to 
the number of the broken symmetry generators.
To the best of our knowledge, it is also the first time that this equality between the two numbers has been proved to hold. 
In fact, there has been no proof about this type equality so far even for the anti-ferromagnetic Heisenberg model, 
which is the typical example~\cite{DLS}.

Taking benefit from the celebrated work by Dyson, Lieb and Simon~\cite{DLS}, 
the main idea to prove the existence of long-range order is to check 
that our model has fermion reflection positivity~\cite{JP, Koma4}, 
and we examine whether many of the techniques used in~\cite{GK, Koma2, Koma3, Koma4, Koma5} are useful here.
Indeed, since the method of the infrared bound and reflection positivity can also be applied to lattice systems 
with continuous symmetry~\cite{DLS, FILS, FSS}, it is expected that the method used in~\cite{Koma4, GK} will work 
for proving the existence of a phase transition in the $\mathrm{SU}(N)$ NJL model.
Here we should remark that, as is well-known, there is an essential difference between the basic structure of the reflection positivity in the Hamiltonian formalism and that in the Euclidean formalism~\cite{SS1,SS2, Seiler, Luscher, BS, OS}.
Of course, although the reflection positivity for staggered fermions in the Euclidean formalism\footnote{As is well known, in the Euclidean formalism, the reflection positivity is needed to make the corresponding Hamiltonian self-adjoint.} is well-known, it does not imply the reflection positivity in the corresponding Hamiltonian formalism.
Actually the two formalisms are totally different from each other on the two different lattices.
See~\cite[Rem.~2.3]{GK} for more technical and conceptual differences between the lattice Hamiltonian and Lagrangian formalisms.

Although the approach to prove the long-range order in this paper is similar to that of~\cite{GK, Koma4, JP} 
and relies on the techniques in~\cite{KLS1, KLS2},  
it is worth emphasizing that the main obstacle we will encounter comes from the $\mathrm{SU}(N)$ symmetry, particularly, in the large $N$ case.
In fact, we have not been able to prove the existence of the phase transition for the SU(3) model in the lower spatial dimensions $\nu = 3,4$, 
whereas we already proved the existence of long-range order at sufficiently low (resp.~zero) temperatures in dimension $\nu =3$ (resp.~$\nu=2$) for the NJL model without the present $\mathrm{SU}(N)$ flavor symmetry that was studied in~\cite{GK}, see Remark~\ref{rem.SU(2)} and~\ref{rem.dim} below. Thus, in the case with the higher $N$ of $\mathrm{SU}(N)$, 
there appears a difficulty. This is contrast to the antiferromagnetic Heisenberg models with higher spin $S$, 
in which case, the higher spins show an advantage for using the sum rule \cite{DLS} about the magnitude of the spin. 
Unfortunately, the present $\mathrm{SU}(N)$ models seem to lack such a sum rule because of the fermion system. 
Actually, the corresponding fermion operators to the spin operators in the Heisenberg models 
do not necessarily satisfy a sum rule \cite{Koma4}.  
Instead of the sum rule, we have to rely on the technique developed in~\cite{KLS1, KLS2}.  
More precisely, we need to estimate the two-point correlation function in the left-hand side of the inequality (\ref{eq.S1expv}) below. 
The function is related to the energy density. 
Once the lower bound  in (\ref{eq.S1expv}) is improved, the assumption on the dimensionality $\nu$ might be relaxed.

Once a continuous symmetry is broken, a massless particle called the Nambu--Goldstone boson 
is also expected to appear~\cite{Nambu2, Goldstone, GSW}.
With regard to the proof of the existence of Nambu--Goldstone modes, we should remark that Momoi studied 
the spin-wave spectrum above the ground state in the Heisenberg antiferromagnets in \cite{Momoi, Momoi2}. 
In particular, in~\cite{Momoi2}, he gave a bound on the two-point correlation that shows the Nambu--Goldstone type slow decay.
His works have stimulated one of the present authors \cite{Koma2, Koma3, Koma4}, 
who showed the existence of a gapless excited state above a symmetry breaking ground state.  
Their extensions are very useful for the present paper.

Although our model treated here does not include gauge fields, we believe that similar results can be obtained in a gauge theory under certain conditions, such as the strong coupling limit in~\cite{SS1, SS2}. Furthermore,
we should also note that the emergence of spontaneous flavor symmetry breaking is expected to be restricted to a region of the model parameters in continuum gauge theories that include a four-fermion interacting NJL model~\cite{VW}.
Actually, for the usual continuum NJL model in the Lagrangian formalism~\cite{Hatsuda}, the flavor symmetry $\mathrm{SU}(N)$ is often assumed to be preserved in the vacuum.
On the other hand, in lattice QCD, there is a region known as the `Aoki phase' where the flavor symmetry of Wilson fermions is broken~\cite{Aoki1, Aoki2}.
In particular, it was predicted in~\cite{ABG} that the parity-flavor symmetry of the NJL model with Wilson fermions is spontaneously broken, using the large $N_c$ (number of colors) expansion.

Since these observations are of importance for the physical interpretation of our results, we should make the following remarks:
In the vector-like gauge theories such as QCD, Vafa and Witten \cite{VW} showed the absence of 
the breaking of continuous flavor symmetry such as $\mathrm{SU}(N)$ with $N\ge 2$ under the assumption that 
the masses of the quarks are all strictly positive. Because of the $\mathrm{SU}(N)$ symmetry, all the masses of the quarks take the same strictly positive value. From this result, a reader might think that our results in the present paper contradict with 
Vafa--Witten result if our NJL model indeed contains an essence of  QCD theory as an effective one. However, this is not the case. Actually, the strictly positive masses stabilize a symmetric state 
because the mass term in the Hamiltonian control the state of the system as an external field. 
This is the physical meaning of Vafa--Witten result. 
In order to induce a symmetry breaking, we must choose a massless state as the initial state, 
and then apply an infinitesimally weak symmetry breaking field which has the form of the mass term in the Hamiltonian. 
In this strategy, Aoki and Gocksch \cite{AG} found numerical evidences for the existence of a phase 
in which both parity and flavor symmetries are spontaneously broken.  
However, Vafa--Witten result still makes many people believe that the flavor symmetry cannot be broken in the continuum limit of QCD-like theories.
As mentioned above, our model with the four-fermion interaction shows flavor symmetry breaking. 
Since Vafa and Witten also treated a similar effective fermion model with a four-fermion interaction and showed the absence of the flavor symmetry breaking by using the same argument as in their paper, a reader might think that our result gives a conterexample to Vafa--Witten result for a class of the lattice models. 
However, the flavor symmetry breaking is proved to occur only in a class of lattice models, and it does not necessarily imply the existence of the corresponding symmetry-breaking phase in usual continuum theory.
Therefore, if a suitable continuum limit is necessary to be taken, Vafa--Witten issue is still controversial.
While it is not clear whether or not our models have a physical continuum limit with the spontaneous breakdown of flavor symmetry, we believe that our results proved here can be useful for understanding of quantum field theories such as QCD.

The number of Nambu--Goldstone modes has been discussed in the physics literature.
Their arguments insist that the number of the linearly independent Nambu--Goldstone modes can be determined by the ground-state expectation values of the commutators between the symmetry generators\footnote{The generators $Q^{(a)}$ are ill-defined in the infinite-volume limit. However, we will always use them in finite-volume systems. Therefore, mathematical problems in the infinite-volume systems do not arise concerning the generators.} $Q^{(a)}$. 
Using the information about the commutators, the general formula for counting the number of Nambu--Goldstone modes has been given in~\cite{WM, Hidaka}. 
However it is difficult to obtain the expectation values of the commutators $ [Q^{(a)}, Q^{(b)}]$ in the ground states. In the present paper, we do not rely on the general counting rule.
Instead of the rule, we use a certain property of the staggered magnetization in a direct way, in order to prove the linear independence of the Nambu-Goldstone modes. As a result, we can count the number of the linearly independent Nambu-Goldstone modes.

The present paper is organized as follows: We give the precise definition of the SU(3) NJL model 
in Sec.~\ref{Sect.model}, and the particle-hole symmetry is discussed in Sec.~\ref{PHsymmetry}. 
Our results are given in Sec.~\ref{sect.results}. The reflection positivity and the Gaussian domination 
are proved in Sec.~\ref{sect.RP}. The existence of the long-range order is proved for non-zero temperatures in Sec.~\ref{sect.LRO1}
and for zero temperature in Sec.~\ref{LRO2}. We deal with the case of the SU(2) flavor symmetry in Sec.~\ref{Sect.SU(2)}. 
The existence of the Nambu--Goldstone modes is proved in Sec.~\ref{sect.NG}. 
Finally, Sec.~\ref{sect:NGn} deals with the problem of showing that the number of the Nambu--Goldstone modes equals the number of broken symmetry generators.
Appendices \ref{sect.SU(3)}--\ref{Appendix:eq.DuhamelEq} are devoted to the derivations of the relations associated to 
the SU(3) symmetry and of technical estimates.


\section{Hamiltonian of a Lattice SU(3) Nambu--Jona-Lasinio model}
\label{Sect.model}

In order to describe the present model, we begin with some notation.
Let $\Lambda := \{x = (x^{(1)}, \dots, x^{(\nu)}) \in \mathbb{Z}^\nu \colon -L+1\le x^{(i)}\le  L, i=1,\dots,\nu \}$ denote 
the finite hypercubic lattice in $\mathbb{Z}^\nu$ with a positive integer $L$. 
Here, $\nu$ is the spatial dimension, and we impose the periodic boundary condition for the lattice.
Namely, we consider the $\nu$-dimensional torus as a finite lattice.
Let $\psi_1, \psi_2, \psi_3$ be three fermion operators satisfying the anti-commutation relations, 
\[
\{\psi_i(x), \psi_j^\dagger(y)\} = \delta_{x, y}\delta_{i, j}, \quad \{\psi_i(x), \psi_j(y)\}=0,
\] 
for the sites $x, y \in \Lambda$ and $i, j = 1, 2, 3$.
We write
\[\Psi(x):=
\begin{pmatrix}
	\psi_1(x)  \\
	\psi_2(x)  \\
	\psi_3(x) \\
\end{pmatrix},
\quad
\Psi^\dagger(x) :=(\psi_1^\dagger(x), \psi_2^\dagger(x), \psi_3^\dagger(x)).
\]
Let $\lambda^{(a)}$ be the $\mathrm{SU}(3)$ Gell-Mann matrices satisfying the commutation relations
\[
[\lambda^{(a)}, \lambda^{(b)}]=i\sum_{a=1}^8f_{abc}\lambda^{(c)},
\]
where $f_{abc}$ are the structure constants, each of which is totally anti-symmetric in the indices. 
See Appendix~\ref{sect.SU(3)} for details. 
Since we will deal with staggered fermions on the lattice $\Lambda$, 
the three types of the fermion operators $\psi_1, \psi_2, \psi_3$ can be interpreted as the three generations of the quarks, 
e.g., when $\nu =3$, the set of the operators ($\psi_1, \psi_2, \psi_3$) yields the three generations 
((u, d), (c, s), (t, b)) in the standard notation.
Each of the generations has two fermions (called `tastes') in the case of the three spatial dimensions $\nu=3$ 
because of the staggered fermions \cite{KS,Susskind}. 
In this paper, the term `flavors' refers to these three (or two) generations, whereas our previous paper did not differentiate between `taste' and `flavor'.

The Hamiltonian of the $\mathrm{SU}(3)$ Nambu--Jona-Lasinio model~\cite{Hatsuda} is defined by 
\begin{equation}
	\label{HamSU(3)}
	\begin{split}
		H^{(\Lambda)}(m)
		&:=i\kappa \sum_{x \in \Lambda} \sum_{\mu=1}^\nu(-1)^{\theta_\mu(x)}
		[\Psi^\dagger(x)\Psi(x+e_\mu) - \Psi^\dagger(x+e_\mu)\Psi(x)] \\
		&\quad+
		m\sum_{x\in \Lambda}(-1)^{x^{(1)} + \cdots +x^{(\nu)}}  S^{(2)}(x) + 
		g\sum_{x \in \Lambda}\sum_{\mu=1}^\nu\sum_{a=1}^8S^{(a)}(x) S^{(a)}(x+e_\mu),
	\end{split}
\end{equation}
on the finite lattice $\Lambda$,  
where 
the staggered amplitudes of the hopping Hamiltonian are determined by 
\begin{equation*}
	\theta_1(x):=
	\begin{cases}
		0 & \mbox{for \ } x^{(1)}\ne L;\\
		1 & \mbox{for \ } x^{(1)}=L,
	\end{cases}
\end{equation*}
and for $\mu=2,3,\ldots,\nu$, 
\begin{equation*}
	\theta_\mu(x):=
	\begin{cases}
		x^{(1)}+\cdots+x^{(\mu-1)} & \mbox{for \ } x^{(\mu)}\ne L;\\
		x^{(1)}+\cdots+x^{(\mu-1)}+1 & \mbox{for \ } x^{(\mu)}= L;\\
	\end{cases}
\end{equation*}
and we have written $S^{(a)}(x) := \Psi^\dagger(x) \lambda^{(a)} \Psi(x)$, ($a=1, \dots, 8$), $\kappa \in \mathbb{R}$ and $m, g \ge 0$.
The parameters $m$ and $g$ stand for the quark mass and coupling constant, respectively.
Moreover, $\kappa$ is interpreted as the lattice spacing $a\simeq \kappa^{-1}$.
We define the order parameter~\cite{GK}
\[
O^{(\Lambda)}:=\sum_{x\in \Lambda}(-1)^{x^{(1)} + \cdots +x^{(\nu)}}  S^{(2)}(x).
\]

As shown in Appendix~\ref{sect.SU(3)}, when $m=0$, the Hamiltonian $H^{(\Lambda)}(0)$ is invariant under 
the SU(3) rotations. The term $mO^{(\Lambda)}$ in the Hamiltonian $H^{(\Lambda)}(m)$ is introduced so as to apply 
an infinitesimally weak symmetry breaking field. As we will prove below, it indeed causes a SU(3) symmetry breaking. 
The choice of the order parameter $O^{(\Lambda)}$ is not specific. 
Actually we can use any $S^{(a)}$ instead of $S^{(2)}$ as the order parameter by Proposition~\ref{prop.rot}.
Physically, the operators $S^{(3)}$ and $S^{(8)}$ correspond to the mass difference between the generations of the quarks.  
Therefore, a natural choice of the order parameter is a superposition of $S^{(3)}$ and $S^{(8)}$.  
However, the operator $S^{(8)}$ is a nuisance to use the reflection positivity. 
We use the operator $S^{(2)}$ because it is easy to handle in the present formalism,
and can be transformed into $S^{(3)}$ by a rotation if necessary.

\section{Particle-hole symmetry}
\label{PHsymmetry}

The reason why we use $S^{(2)}$ for the symmetry breaking field in the present paper is that 
such $H^{(\Lambda)}(m)$ has the particle-hole symmetry as follows.
For $x \in \Lambda$ and $i\in\{1,2,3\}$, we introduce the particle-hole transformation by
\begin{equation}
	\label{eq.minus}
	u_i(x) := \left[\prod_{\substack{y \in \Lambda,\ j\in\{1,2,3\}: \\ y \neq x\ {\rm or}\ j\ne i}}(-1)^{n_j(y)} \right]
	[\psi^\dagger_i(x) + \psi_i(x)],
\end{equation}
where we have written $n_i(x) := \psi^\dagger_i(x) \psi_i(x)$ with $i=1, 2, 3$. Since 
\begin{align*}
	e^{i\pi  n_j(x) } \psi_k(y) e^{- i\pi  n_j(x) } &= \psi_k(y) \,\, (k \neq j \text{ or } x\neq y);\\
	e^{i\pi n_j(x) } \psi_j(x) e^{-i\pi n_j(x) } &= -\psi_j(x),
\end{align*}
we have
\begin{equation}
	\label{eq.conj}
	u_i(x)^\dagger \psi_j(y)u_i(x)
	=
	\begin{cases}
		\psi_j^\dagger(x) & \text{if } y=x, j=i; \\
		\psi_j(y) & \text{otherwise}.
	\end{cases}
\end{equation}
Then we define the unitary operator for the particle-hole transformation by
\[
U_\mathrm{PH}^{(\Lambda)}
:=
\prod_{x \in \Lambda}\prod_{i=1}^3 u_i(x).
\]
For any $x \in \Lambda$ and $j=1,2,3$, we have
\[
\left(U_\mathrm{PH}^{(\Lambda)}\right)^\dagger
\psi_j(x)
U_\mathrm{PH}^{(\Lambda)}
=\psi_j^\dagger(x).
\]
Noting that 
\begin{equation}
	\label{eq.su}
	S^{(a)}(x)
	=
	\begin{cases}
		\psi_i^\dagger(x)\psi_j(x) + \psi^\dagger_j(x)\psi_i(x) & (a=1,4,6);\\
		i(\psi_i^\dagger(x)\psi_j(x) - \psi^\dagger_j(x)\psi_i(x)) &(a=2, 5, 7) \\
	\end{cases}
\end{equation}
for certain $i\neq j$,  
$$
S^{(3)}(x) = n_1(x)- n_2(x)\quad \mbox{and}\quad  S^{(8)}(x) = (n_1(x)+n_2(x)-2n_3(x))/\sqrt{3},
$$ 
we have that 
\begin{equation}
	\label{eq.ph}
	(U_\mathrm{PH}^{(\Lambda)})^\dagger S^{(a)}(x)U_\mathrm{PH}^{(\Lambda)} = 
	\begin{cases}
		S^{(a)}(x) & \text{if } a=2,5,7;\\
		-S^{(a)}(x) &\text{otherwise}
	\end{cases}
\end{equation}
by the anti-commutation relations. Further, we deduce from the anti-commutation relations that for all $j=1, 2, 3$
\begin{align*}
	\left(U_\mathrm{PH}^{(\Lambda)}\right)^\dagger
	&[\psi_j^\dagger(x)\psi_j(x+e_\mu) - \psi^\dagger_j(x+e_\mu)\psi_j(x)]
	U_\mathrm{PH}^{(\Lambda)}\\
	&=
	[\psi_j^\dagger(x)\psi_j(x+e_\mu) - \psi^\dagger_j(x+e_\mu)\psi_j(x)].
\end{align*}
Together with these properties, we can see that our Hamiltonian is invariant under the particle-hole transformation, 
i.e., for any $m$, 
\begin{equation}
	\label{eq.phHam}
	\left(U_\mathrm{PH}^{(\Lambda)}\right)^\dagger
	H^{(\Lambda)}(m)
	U_\mathrm{PH}^{(\Lambda)}
	=
	H^{(\Lambda)}(m).
\end{equation}
This invariance will be used in Section~\ref{sect.NG}. 

In passing, we remark the following. The usual mass term of the Hamiltonian is given by 
\begin{equation}
	m\sum_{x\in\Lambda}(-1)^{x^{(1)}+\cdots +x^{(\nu)}}[\psi_1^\dagger(x)\psi_1(x)+\psi_2^\dagger(x)\psi_2(x)
	+\psi_3^\dagger(x)\psi_3(x)]
\end{equation}
in the staggered fermion formalism. Clearly, this is not invariant under the particle-hole transformation. 
Therefore, a strictly positive mass parameter $m$ stabilizes a phase 
which is different from both of the massless phase and the symmetry-breaking phase which is induced from the massless phase.

\section{Main results} 
\label{sect.results}
Let us describe our main results in the present paper. We write 
$$
Z_{\beta, m}^{(\Lambda)} := \mathrm{Tr} \exp[-\beta H^{(\Lambda)}(m)]
$$ 
for the partition function, where $\beta\ge 0$ is the inverse temperature.
The expectation value is given by
\begin{equation}
	\label{eq.expv}
	\langle A \rangle_{\beta,m}^{(\Lambda)}:= \frac{1}{Z_{\beta, m}^{(\Lambda)}} \mathrm{Tr}\left\{A\exp[-\beta H^{(\Lambda)}(m)]\right\}
\end{equation}
for an observable $A$. 
For finite temperature field theories, see, e.g.,~\cite[Ch.17--20]{Rothe}.

The first goal of our paper is to prove the existence of long-range order as in~\cite{GK}.
We define the long-range order parameter~\cite{KT, Tasaki} by
\[
m_{\mathrm{LRO}}^{(\Lambda)}(\beta):= \vert \Lambda\vert^{-1}\sqrt{\langle [O^{(\Lambda)}]^2 \rangle^{(\Lambda)}_{\beta, 0}},
\]
and
\[
m_{\mathrm{LRO}}(\beta):=\lim_{\Lambda \nearrow \mathbb{Z}^\nu}m_{\mathrm{LRO}}^{(\Lambda)}(\beta).
\]
Then we will prove the following theorems:

\begin{theorem}
	\label{th.LT}
	For the spatial dimension $\nu \ge 5$, there exist a positive number $\alpha_0$ small enough and sufficiently large $\beta_0>0$ 
	such that  $m_{\mathrm{LRO}}(\beta)> 0$ for all $\vert\kappa\vert/g \le \alpha_0$ and $\beta \ge \beta_0$. 
	Namely, there appears a long-range order for a sufficiently strong coupling constant of the interactions and 
	for a sufficiently low temperatures. 
\end{theorem}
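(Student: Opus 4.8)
The plan is to prove Theorem~\ref{th.LT} by the Dyson--Lieb--Simon strategy built on reflection positivity, Gaussian domination and the infrared bound, carried out in the fermionic framework of~\cite{JP, Koma4}. Since the order parameter carries the staggered sign $(-1)^{x^{(1)}+\cdots+x^{(\nu)}}=e^{-i\pi\cdot x}$ with $\pi:=(\pi,\dots,\pi)$, I would work with the Fourier modes $\hat S^{(a)}_p:=\sum_{x\in\Lambda}e^{-ip\cdot x}S^{(a)}(x)$, for which $\hat S^{(2)}_\pi=O^{(\Lambda)}$ and therefore $[m^{(\Lambda)}_{\mathrm{LRO}}(\beta)]^2=|\Lambda|^{-2}\langle \hat S^{(2)}_\pi\hat S^{(2)}_{-\pi}\rangle^{(\Lambda)}_{\beta,0}$. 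The first step is to use the reflection positivity of $H^{(\Lambda)}(0)$ proved in Section~\ref{sect.RP}. After completing the square,
\[
g\sum_{a=1}^8 S^{(a)}(x)S^{(a)}(x+e_\mu)=\frac{g}{2}\sum_{a=1}^8\bigl[(S^{(a)}(x))^2+(S^{(a)}(x+e_\mu))^2\bigr]-\frac{g}{2}\sum_{a=1}^8\bigl(S^{(a)}(x)-S^{(a)}(x+e_\mu)\bigr)^2 ,
\]
the cross term is exactly of the form needed for Gaussian domination, while the hopping term is compatible with the reflections only once the strong-coupling restriction $|\kappa|/g\le\alpha_0$ is imposed.

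From Gaussian domination I would extract the infrared bound for the Duhamel two-point function $(\,\cdot\,,\cdot\,)^{(\Lambda)}_{\beta,0}$, of the schematic form
\[
\bigl(\hat S^{(a)}_p,\hat S^{(a)}_{-p}\bigr)^{(\Lambda)}_{\beta,0}\le \frac{|\Lambda|\,C}{\beta\,E_p},\qquad E_p:=g\sum_{\mu=1}^\nu\bigl(1+\cos p_\mu\bigr),
\]
so that the stiffness $E_p$ vanishes quadratically as $p\to\pi$, reflecting the (sublattice-rotated) ferromagnetic character of the interaction. The next step is to pass from the Duhamel function to the physical correlation $b_p:=\tfrac12\langle\{\hat S^{(2)}_p,\hat S^{(2)}_{-p}\}\rangle^{(\Lambda)}_{\beta,0}$ by the Falk--Bruch inequality, which also requires an upper bound on the double commutator $c_p:=\langle[\hat S^{(2)}_{-p},[H^{(\Lambda)}(0),\hat S^{(2)}_p]]\rangle^{(\Lambda)}_{\beta,0}$. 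Using the closed relations $[S^{(a)}(x),S^{(b)}(x)]=i\sum_c f_{abc}S^{(c)}(x)$, this commutator reduces, after Fourier transformation, to nearest-neighbour correlations of the generators weighted by $\sum_\mu(1+\cos p_\mu)$ together with a hopping contribution of order $|\kappa|/g$.

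With these ingredients I would invoke the sum identity
\[
[m^{(\Lambda)}_{\mathrm{LRO}}(\beta)]^2=\langle (S^{(2)}(0))^2\rangle^{(\Lambda)}_{\beta,0}-\frac{1}{|\Lambda|^2}\sum_{p\ne\pi}\langle \hat S^{(2)}_p\hat S^{(2)}_{-p}\rangle^{(\Lambda)}_{\beta,0},
\]
bound the non-condensate sum by $2|\Lambda|^{-2}\sum_{p\ne\pi}b_p$, and let $\Lambda\nearrow\mathbb Z^\nu$ so that it becomes a Brillouin-zone integral of the Falk--Bruch bound. For the local term I would use Proposition~\ref{prop.rot}: in the symmetric state all eight components are equivalent, so $\langle (S^{(2)}(0))^2\rangle^{(\Lambda)}_{\beta,0}=\tfrac18\langle C_2(0)\rangle^{(\Lambda)}_{\beta,0}$ with $C_2(x):=\sum_{a=1}^8(S^{(a)}(x))^2$, and a strictly positive lower bound on $\langle C_2(0)\rangle^{(\Lambda)}_{\beta,0}$ must be obtained in the strong-coupling regime by comparison with the interaction Hamiltonian at $\kappa=0$ and a perturbation in $|\kappa|/g$. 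Theorem~\ref{th.LT} follows once $\beta\ge\beta_0$ and $|\kappa|/g\le\alpha_0$ are chosen so that the Brillouin-zone integral is strictly smaller than this lower bound, leaving a positive deficit.

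The hard part will be the convergence and smallness of that integral, and this is exactly where the SU(3) symmetry --- unlike the SU(2)/Heisenberg case --- offers no help. In the Heisenberg models the magnitude of the spin provides the sum rule $C_2\equiv S(S+1)$, a \emph{c-number} that both fixes $\langle(S^{(a)}(0))^2\rangle$ from below and tames the double commutator; here $C_2(x)$ is a genuine operator, not a constant, so the lower bound on the local moment has to be extracted from the energetics of the strongly coupled model and turns out to be comparatively small, while the estimate for $c_p$ is correspondingly weaker. The combined effect is that the momentum integral controlling the fluctuations can be made small enough to be beaten by the local-moment lower bound only when $\nu\ge5$, in contrast to the threshold $\nu\ge3$ reached for the SU(2) model; see Remark~\ref{rem.SU(2)} and Remark~\ref{rem.dim}. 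Keeping the hopping term under control uniformly --- both in the reflection positivity and inside $c_p$ and the local-moment bound --- is the reason the result is restricted to the strong-coupling window $|\kappa|/g\le\alpha_0$.
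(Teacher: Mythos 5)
Your skeleton (fermionic reflection positivity, Gaussian domination, infrared bound, Falk--Bruch) is the same as the paper's, but your closing step is the \emph{plain, unweighted} Dyson--Lieb--Simon sum rule: long-range order equals local moment minus the Brillouin-zone fluctuation integral, with the local moment $\langle (S^{(2)}(0))^2\rangle^{(\Lambda)}_{\beta,0}$ to be bounded below by strong-coupling energetics. This step cannot be closed, in any dimension, and that is precisely the obstruction the paper flags in its introduction (the fermionic $S^{(a)}$ admit no sum rule). Concretely: the double-commutator bound (the analogue of (\ref{eq.Dinteract})) gives $C_p\le 24|\kappa|\nu+24gE_p\mathcal{E}_0^{(\Lambda)}$ with $\mathcal{E}_0^{(\Lambda)}=-|\Lambda|^{-1}\sum_x\langle S^{(3)}(x)S^{(3)}(x+e_1)\rangle^{(\Lambda)}_{\beta,0}$, so after Falk--Bruch the dominant part of your unweighted fluctuation integral is $\sqrt{3\mathcal{E}_0}\,\tilde K_\nu$, where
\[
\tilde K_\nu:=\frac{1}{(2\pi)^\nu}\int_{[-\pi,\pi]^\nu}\sqrt{\frac{E_p}{E_{p+Q}}}\,dp\;\ge\;1\quad\text{for every }\nu,
\]
by the substitution $p\mapsto p+Q$ and $\sqrt{t}+1/\sqrt{t}\ge 2$. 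On the other side, the local moment is capped by the Casimir: $\sum_{a=1}^8(S^{(a)}(x))^2\le\frac{16}{3}$ on the local Fock space (it vanishes on empty and triply occupied sites), so by SU(3) invariance $\langle (S^{(2)}(0))^2\rangle\le\frac{2}{3}$; moreover $\mathcal{E}_0\le\langle (S^{(3)}(0))^2\rangle\le\frac23$ by Cauchy--Schwarz, while the strong-coupling Peierls argument forces $\mathcal{E}_0\ge\frac16-\varepsilon$. Hence your fluctuation bound is at least $\sqrt{3\cdot\tfrac16}\cdot 1\approx 0.707>\tfrac23$, i.e.\ it provably exceeds the largest conceivable local moment, and with the only available a priori upper bound $\mathcal{E}_0\le\frac23$ it is $\ge\sqrt2$. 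The "positive deficit" you want to extract is therefore non-positive for all $\beta$, all $|\kappa|/g$ and all $\nu$. The paper avoids this by the Kennedy--Lieb--Shastry device: it multiplies the momentum sum by the weight $\frac1\nu\bigl(-\sum_\mu\cos p^{(\mu)}\bigr)_+$, which by (\ref{Eq.LROequi}) turns the left-hand side into $2\mathcal{E}_0^{(\Lambda)}$ itself (not the local moment) and replaces $\tilde K_\nu\ge1$ by the much smaller $K_\nu$ ($K_5=0.2069\ldots$), yielding the self-consistent inequality (\ref{eq.LROfin}); positivity then only needs $\sqrt{\mathcal{E}_0}>\sqrt3K_\nu$, i.e.\ $1/\sqrt6>\sqrt3K_\nu$, which holds for $\nu\ge5$, together with the lower bound on $\mathcal{E}_0$ from the N\'eel trial state, Peierls's inequality and the entropy estimate (\ref{eq.S1expv}). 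No lower bound on the local moment is ever required.

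Three further inaccuracies, each less serious but worth correcting. First, reflection positivity and Gaussian domination (Proposition~\ref{GaussianD}) hold for \emph{all} $\kappa$; the hopping term is handled by the Majorana decomposition and the Jaffe--Pedrocchi argument, and the restriction $|\kappa|/g\le\alpha_0$ enters only through the terms $\sqrt{3|\kappa|\nu/g}\,J_\nu$ and the $C|\kappa|/g$ correction in (\ref{eq.S1expv}), not through compatibility of hopping with reflections. Second, your uniform completion of squares $-\frac g2(S^{(a)}(x)-S^{(a)}(x+e_\mu))^2$ for all eight generators does not support the source insertion needed for the infrared bound: the field $h$ must sit inside squares carrying a \emph{positive} coefficient, and because the reflection $\vartheta$ is antilinear the real generators ($a\ne2,5,7$) and the pure imaginary ones ($a=2,5,7$) must be completed with opposite signs, as in (\ref{Hinth}); this sign bookkeeping is exactly what makes the crossing terms factor as $A\vartheta(A)$ after the staggered transformation. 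Third, the double-commutator weight is $\sum_\mu(1-\cos p^{(\mu)})$, vanishing at $p=0$ because $\hat S^{(2)}_{p=0}=Q^{(2)}$ commutes with $H^{(\Lambda)}(0)$, not $\sum_\mu(1+\cos p_\mu)$ as you wrote; with your (incorrect) weight the fluctuation integrand would not even reproduce the correct Goldstone structure.
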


Let 
\[
\omega_m^{(\Lambda)}(A):=\lim_{\beta \nearrow \infty}\langle A \rangle_{\beta, m}^{(\Lambda)}
\]
be the expectation value for an observable $A$ in the ground state (the zero temperature limit). 
Then, there appears the corresponding long-range order also in the ground state as follows: 

\begin{theorem}
	\label{th.gs}
	For the spatial dimension $\nu \ge 5$, there exist a positive number $\alpha_0$ small enough such that  
	\[
	\lim_{\Lambda \nearrow \mathbb{Z}^\nu} \frac{1}{\vert\Lambda\vert^2}\omega_0^{(\Lambda)}([O^{(\Lambda)}]^2) >0
	\]
	for all $\vert\kappa\vert/g \le \alpha_0$.
\end{theorem}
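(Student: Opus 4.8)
The plan is to run the same infrared-bound argument as for Theorem~\ref{th.LT}, but in its zero-temperature form, taking the limit $\beta\nearrow\infty$ inside each finite $\Lambda$ before letting $\Lambda\nearrow\mathbb Z^\nu$. Writing $\hat S^{(2)}(p):=|\Lambda|^{-1/2}\sum_{x\in\Lambda}e^{ip\cdot x}S^{(2)}(x)$ and noting that the staggering phase $(-1)^{x^{(1)}+\cdots+x^{(\nu)}}$ is the Fourier factor at the ordering momentum $p_0=(\pi,\dots,\pi)$, we have $O^{(\Lambda)}=|\Lambda|^{1/2}\hat S^{(2)}(p_0)$ with $\hat S^{(2)}(p_0)$ self-adjoint, so the target quantity is $|\Lambda|^{-2}\omega_0^{(\Lambda)}([O^{(\Lambda)}]^2)=|\Lambda|^{-1}\omega_0^{(\Lambda)}(\hat S^{(2)}(p_0)^2)$. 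It therefore suffices to bound the single Fourier mode at $p_0$ from below, and this we do by bounding all the other modes from above and invoking a Parseval sum rule.

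First I would pass the Gaussian domination bound of Section~\ref{sect.RP} to the ground state. For fixed $\Lambda$ the spectrum is discrete, so $\omega_0^{(\Lambda)}=\lim_{\beta\to\infty}\langle\cdot\rangle_{\beta,0}^{(\Lambda)}$ projects onto the ground-state sector and the Duhamel two-point function $(\hat S^{(2)}(p),\hat S^{(2)}(p))_{\beta,0}^{(\Lambda)}$ converges to its ground-state value $b_0(\hat S^{(2)}(p))$; the infrared bound survives the limit as $b_0(\hat S^{(2)}(p))\le C/\mathcal E(p)$, with $\mathcal E(p)$ the dispersion produced by the hopping term and $C$ proportional to $|\kappa|$. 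I would then replace the finite-temperature Falk--Bruch inequality by its clean ground-state analogue: a spectral Cauchy--Schwarz argument gives, for self-adjoint $A$ with $\omega_0^{(\Lambda)}(A)=0$,
\[
\omega_0^{(\Lambda)}(A^2)\ \le\ \tfrac12\sqrt{\,b_0(A)\,\omega_0^{(\Lambda)}\!\bigl([A,[H^{(\Lambda)}(0),A]]\bigr)\,}\,,
\]
with no $\coth$ factor and hence no degeneration as $\beta\to\infty$. Taking $A=\hat S^{(2)}(p)$ and combining this with the infrared bound and an extensive double-commutator bound $\omega_0^{(\Lambda)}([\hat S^{(2)}(p)^\dagger,[H^{(\Lambda)}(0),\hat S^{(2)}(p)]])\le C_1$ (computed from the anti-commutation relations and dominated by the hopping term) yields $\omega_0^{(\Lambda)}(|\hat S^{(2)}(p)|^2)\le C_2/\sqrt{\mathcal E(p)}$ for every $p\ne p_0$.

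Summing the Parseval identity $|\Lambda|^{-1}\sum_p\omega_0^{(\Lambda)}(|\hat S^{(2)}(p)|^2)=|\Lambda|^{-1}\sum_x\omega_0^{(\Lambda)}(S^{(2)}(x)^2)$ and isolating the $p_0$ term, I obtain
\[
\frac{1}{|\Lambda|^{2}}\,\omega_0^{(\Lambda)}\bigl([O^{(\Lambda)}]^{2}\bigr)
\ \ge\
\frac{1}{|\Lambda|}\sum_{x\in\Lambda}\omega_0^{(\Lambda)}\bigl(S^{(2)}(x)^{2}\bigr)
\ -\ \frac{C_2}{|\Lambda|}\sum_{p\ne p_0}\frac{1}{\sqrt{\mathcal E(p)}}\,.
\]
As $\Lambda\nearrow\mathbb Z^\nu$ the subtracted term converges to $C_2\int\frac{d^\nu p}{(2\pi)^\nu}\mathcal E(p)^{-1/2}$, and the convergence of this integral at the ordering momentum $p_0$ is exactly what forces $\nu\ge5$: the infrared singularity of the bound in this fermionic SU(3) model is weaker than in the Heisenberg case (cf.\ the discussion following Remark~\ref{rem.dim}), so the dimension threshold is higher. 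Moreover the convergent integral carries a factor proportional to $|\kappa|/g$, so it can be made smaller than the local term by taking $|\kappa|/g\le\alpha_0$, leaving a strictly positive lower bound.

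The step I expect to be the main obstacle is the uniform lower bound on the local term $|\Lambda|^{-1}\sum_x\omega_0^{(\Lambda)}(S^{(2)}(x)^2)$. For Heisenberg antiferromagnets this is immediate from the Casimir sum rule $\sum_a (S^{(a)})^2=S(S+1)$, but, as emphasized after Remark~\ref{rem.dim}, the fermionic operators $S^{(a)}(x)$ obey no such identity, so no constant can simply be read off. Instead I would establish the bound perturbatively in the strong-coupling regime: at $\kappa=0$ the Hamiltonian is a sum of commuting bond interaction terms whose ground space is explicit and on which $\sum_x S^{(2)}(x)^2$ has strictly positive density, and I would then control the correction at small $|\kappa|/g$ while using the particle-hole invariance~\eqref{eq.phHam} to pin $\omega_0^{(\Lambda)}$ to the neutral sector. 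Making this lower bound uniform in $\Lambda$, and simultaneously controlling the possible degeneracy of the finite-volume ground state so that the averaged ground-state inequality above remains valid while the two limits $\beta\to\infty$ and $\Lambda\nearrow\mathbb Z^\nu$ are taken in the correct order, is the delicate part; the remaining steps parallel the non-zero-temperature proof of Theorem~\ref{th.LT}.
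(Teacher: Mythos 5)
Your overall strategy (zero-temperature infrared bound plus a sum rule, with $\beta\to\infty$ taken in finite volume before $\Lambda\nearrow\mathbb{Z}^\nu$) is the same family of argument as the paper's, but two of your quantitative claims are wrong, and they sit exactly where the content of the theorem lies. First, the Duhamel/Gaussian-domination bound is not of the form $C/\mathcal{E}(p)$ with $C\propto|\kappa|$ and $\mathcal{E}(p)$ a hopping dispersion: the auxiliary field $h$ is inserted into the \emph{interaction} term, so the bound is $(\tilde S^{(3)}_p,\tilde S^{(3)}_{-p})\le 1/(2\beta g E_{p+Q})$, see (\ref{Eq.IB}), involving $g$ and the lattice dispersion $E_{p+Q}$ (also, reflection positivity produces this bound for the real matrix $S^{(3)}$, not for the imaginary $S^{(2)}$ you Fourier-transform; the two are only identified afterwards via SU(3) invariance, as in (\ref{MLROS3Q})). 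Second, and decisively, the double commutator is \emph{not} dominated by the hopping term: by (\ref{eq.Dinteract}) the interaction contributes $24\,g\,E_p\,\mathcal{E}_0^{(\Lambda)}$, which is of order $g$, so after dividing by $gE_{p+Q}$ and taking the square root there remains the order-one term $2\sqrt{3\mathcal{E}_0}\,\sqrt{E_p/E_{p+Q}}$ of (\ref{eq.Cpbound}). This term carries no factor of $|\kappa|/g$ at all, so your claim that the subtracted integral is proportional to $|\kappa|/g$ and can be beaten by strong coupling fails. For the same reason your explanation of the condition $\nu\ge5$ is incorrect: the relevant integrals already converge for $\nu\ge3$; the dimension restriction comes from the numerical inequality $1/\sqrt{6}>\sqrt{3}K_\nu$, which first holds at $\nu=5$ ($\sqrt{3}K_5=0.359\ldots<1/\sqrt{6}=0.408\ldots$), not from integrability at the ordering momentum.

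The second genuine gap is the sum-rule step. Your weight-one Parseval identity forces you to produce a uniform positive lower bound on the on-site density $|\Lambda|^{-1}\sum_x\omega_0^{(\Lambda)}(S^{(2)}(x)^2)$, and you correctly note that no Casimir identity is available for these fermionic operators; but the fix you propose (perturbation theory around $\kappa=0$, where the ground space is macroscopically degenerate, with corrections controlled uniformly in $|\Lambda|$) is left entirely unproven and is itself a hard problem, so the proposal defers the central difficulty rather than resolving it. The paper circumvents this by using the energy-weighted sum rule (\ref{Eq.LROequi}) (Fourier weight $-\frac{1}{\nu}\sum_\mu\cos p^{(\mu)}$ rather than $1$): the left-hand side then becomes the nearest-neighbor correlation $\mathcal{E}_0^{(\Lambda)}=-|\Lambda|^{-1}\sum_x\langle S^{(3)}(x)S^{(3)}(x+e_1)\rangle$, which is bounded below thermodynamically — Peierls's inequality with the N\'eel trial state, the maximum-entropy bound, and the rotation identity (\ref{eq.S8}) — giving (\ref{eq.S1expv}), i.e., $\mathcal{E}_0\ge 1/6-O(|\kappa|/g)$; moreover the weight $\bigl(-\frac{1}{\nu}\sum_\mu\cos p^{(\mu)}\bigr)_+\le 1$ shrinks the competing integral to $K_\nu$, which is what makes the numerics close in $\nu\ge5$. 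Without these two repairs (the correct split of $C_p$ into a $\sqrt{|\kappa|\nu/g}\,J_\nu$ piece and an order-one $\sqrt{3\mathcal{E}_0}\,K_\nu$ piece, and the energy sum rule with its variational lower bound), your argument does not go through.
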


Consequently, the $\mathrm{SU}(3)$ symmetry is spontaneously broken according to the Koma--Tasaki theorem~\cite{KT}.
Nevertheless, the ground state remains invariant under the subgroup generated by $Q^{(2)}$ and $Q^{(8)}$, where $Q^{(a)}:=\sum_{x\in \Lambda}S^{(a)}(x)$  (see also the first line of Sect.~\ref{sect:NGn}).
Thus, the $\mathrm{SU}(3)$ symmetry of our NJL model is spontaneously broken down to $\mathrm{SO}(2) \times \mathrm{U}(1)$.

\begin{remark}
	\label{rem.SU(2)}
	Theorems~\ref{th.LT} and~\ref{th.gs} are also valid for SU(2) model in the spatial dimension $\nu \ge 3$.
	The proof in this SU(2) case follows along the same line and is slightly simpler (see Section~\ref{Sect.SU(2)}).
\end{remark}

\begin{remark}
	\label{rem.dim}
	{From} the large N analysis for the $\mathrm{SU}(N)$ symmetry as in \cite{ABG}, one can expect that 
	the treatment of the large N case is much easier than that of the small N case. 
	Therefore, a reader feels that the condition $\nu \ge 5$ in Theorems~\ref{th.LT} and~\ref{th.gs} is somewhat strange, 
	compared with $\nu\ge 3$ in the case of SU(2) as in the above Remark~\ref{rem.SU(2)}. 
	This is due to the following technical reason: 
	Although our method can be generalized to the $\mathrm{SU}(N)$ models with $N \ge 4$,  
	the prefactor $24$ in the right-hand side of Eq. (\ref{eq.Dinteract}) below changes to a much larger value
	because of the commutation relations about the $\mathrm{SU}(N)$ algebra.
	In consequence, the spatial dimension $\nu$ in which our proof of the long-range order works becomes even larger.
	Thus, the condition $\nu \ge 5$ may not be optimal for the case of SU(3) symmetry.
\end{remark}

As mentioned in the introduction, our main purpose of the present paper is to prove the existence of Nambu--Goldstone mode. 
Since the existence of the long-range order implies the existence of a non-vanishing spontaneous magnetization 
in the infinite-volume limit \cite{KT}, the corresponding infinite-volume ground state shows a symmetry breaking. 
Let $\omega_0(A)$ be the symmetry breaking infinite-volume ground state which is defined by
\[
\omega_0(A) :=
\mathrm{weak}^*\text{-}\lim_{m\searrow 0}\mathrm{weak}^*\text{-}\lim_{\Lambda \nearrow \mathbb{Z}^\nu}\omega_m^{(\Lambda)}(A)
\]
in the sense of the weak$^\ast$ convergence. 
Then, as is well known, a Hilbert space is generated from the state $\omega_0(\cdot)$ 
by the Gelfand--Naimark--Segal (GNS) 
representation~\cite{BRI, BR}, and we can construct an excited state in the Hilbert space with an arbitrary small energy gap 
(see Section~\ref{sect.NG} for a more precise statement).
Namely, we will prove
\begin{theorem}
	\label{th.NG}
	Suppose that the spatial dimension $\nu \ge 5$ and $\vert\kappa\vert/g \le \alpha_0$, where $\alpha_0$ is given in Theorem~\ref{th.gs}.
	Then, there exists a gapless excitation above the infinite-volume ground state $\omega_0(\cdots)$. 
	Namely, there appears a Nambu--Goldstone mode associated with the SU(3) symmetry breaking above the infinite-volume ground state. 
\end{theorem}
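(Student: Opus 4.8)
The plan is to follow the strategy of \cite{Momoi2, Koma2, Koma3, Koma4}, which extracts a gapless mode from a nonvanishing order parameter by feeding a Goldstone-type commutator into a Schwarz inequality together with the energy-weighted ($f$-sum) rule. First I would record the consequences of symmetry breaking. By Theorem~\ref{th.gs} and \cite{KT}, the infinite-volume ground state $\omega_0$ carries a nonvanishing spontaneous staggered magnetization,
\[
m_{\mathrm{sp}}:=\lim_{\Lambda\nearrow\mathbb{Z}^\nu}|\Lambda|^{-1}\,\omega_0(O^{(\Lambda)})>0,
\]
so that the $\mathrm{SU}(3)$ symmetry is spontaneously broken; by Proposition~\ref{prop.rot} the choice of broken direction is immaterial. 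I would also note that, since $H^{(\Lambda)}(m)$ is particle--hole invariant for every $m$ by \eqref{eq.phHam} and the order parameter $O^{(\Lambda)}$ is particle--hole even, the state $\omega_0$ is particle--hole invariant; hence the particle--hole odd operators $S^{(1)}$ and $S^{(3)}$ have vanishing one-point functions, which lets me discard the disconnected terms below.

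The heart of the argument is a single commutator. Writing $\hat O^{(a)}(k):=\sum_{x\in\Lambda}e^{ik\cdot x}S^{(a)}(x)$ and $\pi:=(\pi,\dots,\pi)$, the $\mathrm{SU}(3)$ algebra with $f_{123}=1$ gives
\[
[\hat O^{(1)}(k),\hat O^{(3)}(\pi-k)]=-i\,\hat O^{(2)}(\pi)=-i\,O^{(\Lambda)},
\]
whose expectation has modulus $\sim|\Lambda|\,m_{\mathrm{sp}}$. The two factors play complementary roles. For $A:=\hat O^{(1)}(k)$ the point $k=0$ is special, since $\hat O^{(1)}(0)=\sum_x S^{(1)}(x)$ is a conserved generator of $H^{(\Lambda)}(0)$; because $H^{(\Lambda)}(0)$ is finite range and commutes with this generator, the telescoping $e^{ik\cdot x}-e^{ik\cdot(x+e_\mu)}=O(|k|)$ yields the $f$-sum bound $\tfrac12\,\omega_0\big([\hat O^{(1)}(-k),[H^{(\Lambda)}(0),\hat O^{(1)}(k)]]\big)\le C_1|k|^2|\Lambda|$. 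For $B:=\hat O^{(3)}(\pi-k)$ the static structure factor obeys the trivial on-site estimate $\omega_0(B^\dagger B)\le C_2|\Lambda|$, since each $S^{(a)}(x)$ is uniformly bounded.

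I would then insert a resolution of the identity of the GNS Hamiltonian $\hat H\ge0$ (with $\hat H\Omega=0$) into the commutator and apply Cauchy--Schwarz with the energy weights $\omega^{\pm1/2}$, exactly as in \cite{Koma3, Koma4}, obtaining
\[
|\Lambda|^2 m_{\mathrm{sp}}^2\ \lesssim\ |\omega_0([A,B])|^2\ \le\ \mathrm{const}\cdot F_A\cdot \chi_B,
\]
where $F_A\le C_1|k|^2|\Lambda|$ is the $f$-sum quantity above and $\chi_B=\int\omega^{-1}\,dS_B(\omega)$ is the static susceptibility of $B$. If $\hat H$ had a gap $\Delta>0$ above $\Omega$, then $\chi_B\le\Delta^{-1}\omega_0(B^\dagger B)\le C_2|\Lambda|/\Delta$, and the two displays combine to force $\Delta\le C_3|k|^2/m_{\mathrm{sp}}^2$. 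Letting $k\to0$ shows that no such gap can exist, so $\mathrm{spec}(\hat H)$ accumulates at $0$; equivalently, the normalized vectors $\pi(B)\Omega$ realize excitations of energy $O(|k|^2)$, which is the asserted Nambu--Goldstone mode.

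The main obstacle is to carry these finite-volume spectral manipulations over to the infinite-volume ground state in a controlled way. The operators $\hat O^{(a)}(k)$ are extensive and the generator $\hat O^{(1)}(0)$ is not weakly implemented in the broken phase, so the Lehmann/Cauchy--Schwarz step must be run through local approximants and the Duhamel two-point function (cf. Appendix~\ref{Appendix:eq.DuhamelEq}), and one must check that the lower bound $|\omega_0([A,B])|\gtrsim|\Lambda|\,m_{\mathrm{sp}}$ survives the double limit $\Lambda\nearrow\mathbb{Z}^\nu$ followed by $m\searrow0$ that defines $\omega_0$. A second delicate point is the passage from the long-range order of Theorem~\ref{th.gs}, which controls $\omega_0((O^{(\Lambda)})^2)$, to the nonvanishing one-point function $m_{\mathrm{sp}}$; here I would invoke \cite{KT} together with the particle--hole symmetry \eqref{eq.phHam} to fix the order of limits. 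The remaining ingredients---the $f$-sum bound and the structure-factor bound---are routine once the finite range of $H^{(\Lambda)}(0)$ and the boundedness of $S^{(a)}(x)$ are in hand.
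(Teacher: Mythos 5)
Your overall strategy---pairing a Goldstone commutator whose expectation is the staggered magnetization with an $f$-sum rule and a Schwarz-type inequality---is in the same family as the paper's argument, but your proposal has a genuine gap at the susceptibility step, and it is not a repairable technicality within your setup. The bound you call a ``trivial on-site estimate,'' $\omega_0(B^\dagger B)\le C_2|\Lambda|$ for $B=\hat O^{(3)}(\pi-k)$, is unjustified as stated: uniform boundedness of the $S^{(a)}(x)$ only gives $\omega_0(B^\dagger B)\le C|\Lambda|^2$, since the off-diagonal correlations $\omega_0(S^{(3)}(x)S^{(3)}(y))$ do not vanish. To improve this to $O(|\Lambda|)$ you would need summable decay of the transverse staggered correlations in the symmetry-broken state, which is not available a priori---indeed, in the ordered phase these are exactly the correlations exhibiting Goldstone-type slow decay, and the corresponding structure factor diverges as $k\to 0$ (at $k=0$ it carries the long-range order itself, being of order $|\Lambda|^2 m_{\mathrm{LRO}}^2$). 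Since your entire quantitative chain $\Delta\le C_3|k|^2/m_{\mathrm{sp}}^2$ rests on this bound, the gap-contradiction argument collapses. A usable bound on the transverse structure factor can in fact be extracted, but only from the zero-temperature analogue of the Dyson--Lieb--Simon inequalities, i.e., from the Duhamel/infrared bound supplied by Gaussian domination (Proposition~\ref{GaussianD})---precisely the machinery your sketch dismisses as routine.

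The paper's proof sidesteps the structure factor altogether. It works with local, not extensive, operators: $A_R$ of (\ref{eq.localOp}) and $S^{(3)}[\tilde h']$ with $h'$ supported on a box of scale $R$, and it bounds the relevant susceptibility directly by the zero-temperature infrared bound (\ref{eq.gsinfrared}), which follows from Gaussian domination rather than from any gap assumption. These are fed into the Kennedy--Lieb--Shastry type inequality (Lemma~\ref{lem.KLS}) together with the locality estimate for the double commutator (Lemma~\ref{lem.DC}); the outcome is an explicit trial state $\tau_{f,m}^{(\Lambda)}(A_R)$ of excitation energy $O(1/R)$, with the spectral filtering by $[\mathcal{H}^{(\Lambda)}(m)]^{\epsilon}$ and the smooth cutoff $f$ removing contributions of the quasi-degenerate ground-state sector, so that the precise gaplessness criterion (\ref{eq.gapless}) is verified after the limits $\Lambda\nearrow\mathbb{Z}^\nu$ and $m\searrow 0$. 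This local construction is exactly what resolves the extensive-operator and order-of-limits issues that you defer to ``local approximants and the Duhamel two-point function''; deferring them is not the same as resolving them, and in this problem they constitute the main content of the proof rather than routine bookkeeping.
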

In addition, Theorem~\ref{th.NG} is also true for the SU(2) model in the spatial dimension $\nu \ge 3$ by the same argument.

Up to now we have been concerned with demonstrating the existence of the phase transition and Nambu--Goldstone modes.
Finally, we study how many gapless modes emerge associated with the breakdown of the SU(3) symmetry. 
It should be noted that, in general, the number of gapless states, $N_\mathrm{NG}$, 
is not necessarily equal to the number of broken generators, $N_\mathrm{BS}$.
In particular, by the spin-wave theory, the Heisenberg ferromagnet has only one gapless mode (magnon) 
while two spin rotation symmetries are broken, i.e., $N_\mathrm{BS}=2$ and $N_\mathrm{NG}=1$.
We refer to~\cite{Hidaka, WM} for further details.

The following theorem shows that the number of Nambu--Goldstone modes is equal to the number of broken symmetry generators.
\begin{theorem}
	\label{cor.NGn}
	For the SU(3) NJL model, we have
	\[
	N_\mathrm{NG}=N_\mathrm{BS}=6.
	\]
	Moreover, for the SU(2) model, $N_\mathrm{NG}=N_\mathrm{BS}=2$ holds true.
\end{theorem}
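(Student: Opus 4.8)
The plan is to count the broken symmetry generators $N_{\mathrm{BS}}$ first and then match each broken generator to a distinct Nambu--Goldstone mode to obtain $N_{\mathrm{NG}} = N_{\mathrm{BS}}$. The symmetry-breaking order parameter is $S^{(2)}$, so the key is to determine which of the eight $\mathrm{SU}(3)$ generators $\lambda^{(a)}$ fail to annihilate the condensate $\langle O^{(\Lambda)}\rangle$, equivalently which generators $\lambda^{(a)}$ do not commute with $\lambda^{(2)}$ up to the stabilizer. First I would compute the commutators $[\lambda^{(a)},\lambda^{(2)}]$ using the $\mathrm{SU}(3)$ structure constants $f_{ab2}$ (tabulated in Appendix~\ref{sect.SU(3)}): the generator $\lambda^{(a)}$ is \emph{unbroken} precisely when $S^{(a)}$ leaves the symmetry-breaking state invariant, i.e.\ when $\lambda^{(2)}$ lies in the stabilizer direction. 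Since the condensate picks out a single direction in the adjoint representation, the unbroken subgroup is the stabilizer of $\lambda^{(2)}$ in $\mathrm{SU}(3)$, which is $\mathrm{U}(2)$ (of real dimension $4$); hence $N_{\mathrm{BS}} = \dim \mathrm{SU}(3) - \dim \mathrm{U}(2) = 8 - 2 = 6$. The analogous $\mathrm{SU}(2)$ computation gives the stabilizer $\mathrm{U}(1)$ of dimension $1$, so $N_{\mathrm{BS}} = 3 - 1 = 2$.

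Having fixed $N_{\mathrm{BS}}$, the next task is to produce $N_{\mathrm{BS}}$ genuinely distinct gapless modes. The strategy is to apply the Nambu--Goldstone argument of Theorem~\ref{th.NG} not just once but once for each broken generator. For each broken $\lambda^{(a)}$ I would introduce the associated local density $S^{(a)}(x)$ and the staggered order operator, verify that the symmetry-breaking ground state $\omega_0$ satisfies $\omega_0([Q^{(a)}, B^{(a)}]) \neq 0$ for a suitable local observable $B^{(a)}$ (a nonzero susceptibility / order-parameter condition), and then invoke the same infrared-bound and Goldstone-commutator machinery that established Theorem~\ref{th.NG} to conclude that each such broken generator generates a gapless excitation branch. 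This yields $N_{\mathrm{NG}} \ge N_{\mathrm{BS}}$.

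The reverse inequality $N_{\mathrm{NG}} \le N_{\mathrm{BS}}$ requires showing that the $N_{\mathrm{BS}}$ modes produced are \emph{independent} and that no additional gapless modes beyond those tied to broken generators arise in this counting scheme. The cleanest route is to organize the broken generators into a basis adapted to the unbroken $\mathrm{U}(2)$ and use the particle-hole symmetry~\eqref{eq.phHam}, which forces the distinction between the type-A (relativistic, one mode per broken generator) counting versus the type-B (ferromagnet-like) counting. Because the order parameter is a staggered (antiferromagnetic) density and the Hamiltonian is particle-hole invariant, the relevant Watanabe--Murayama matrix $\langle [Q^{(a)}, Q^{(b)}]\rangle$ vanishes in the symmetry-breaking ground state, placing the system in the antiferromagnetic (type-A) universality class where each broken generator contributes exactly one mode. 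I would verify this vanishing directly: $[Q^{(a)}, Q^{(b)}] = i\sum_c f_{abc} Q^{(c)}$, and the staggered expectation of any $Q^{(c)}$ vanishes by the particle-hole symmetry except along the condensate direction, which is excluded by antisymmetry of $f_{abc}$. This gives $N_{\mathrm{NG}} = N_{\mathrm{BS}}$.

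The main obstacle I expect is rigorously establishing the independence/counting in the last step. The Watanabe--Murayama and Hidaka counting theorems (cited as \cite{Hidaka, WM}) are physics-level results, so the difficulty is converting the heuristic "type-A vs.\ type-B" dichotomy into a statement that can be proved within the present operator-algebraic framework of $\omega_0$. Concretely, one must show that the $N_{\mathrm{BS}}$ candidate Goldstone states constructed above are linearly independent as excitations and that the vanishing of $\omega_0([Q^{(a)}, Q^{(b)}])$ holds not merely formally but for the infinite-volume limit state. The key structural input that makes this tractable is the particle-hole symmetry~\eqref{eq.phHam}, which I would lean on to kill precisely the commutator expectations that would otherwise collapse pairs of modes into single type-B modes; checking that this symmetry is preserved by the symmetry-breaking limit $\omega_0$ is the crux.
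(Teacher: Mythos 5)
Your overall skeleton (count the broken generators, build one gapless mode per broken generator, then prove independence) matches the paper's plan, but there are genuine gaps at each stage, and the final stage as you propose it would not work. First, a correctness issue in the counting: the stabilizer of $\lambda^{(2)}$ in $\mathrm{SU}(3)$ is \emph{not} $\mathrm{U}(2)$. Since $\lambda^{(2)}$ has three distinct eigenvalues $1,-1,0$, its centralizer is the maximal torus $\mathrm{U}(1)\times\mathrm{U}(1)$ spanned by $\lambda^{(2)}$ and $\lambda^{(8)}$, of dimension $2$; your arithmetic $8-2=6$ quietly uses the correct dimension $2$ while your text asserts dimension $4$, which would give $N_{\mathrm{BS}}=4$. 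Moreover, the Lie-algebra computation alone is not enough: one must show the residual generators are unbroken \emph{in the state} $\omega_0$, which the paper does by noting $[H^{(\Lambda)}(m),Q^{(2)}]=[H^{(\Lambda)}(m),Q^{(8)}]=0$, so the finite-volume ground states can be chosen as eigenvectors of $Q^{(2)}$ and $Q^{(8)}$, forcing $\omega_m^{(\Lambda)}([Q^{(a)},A])=0$ for $a=2,8$. Second, ``invoke the same infrared-bound machinery for each broken generator'' hides a real obstruction: the Gaussian domination bound (\ref{GaussianDbund}) was proved with $S^{(3)}$ singled out, and its proof treats real hermitian and pure imaginary hermitian $S^{(a)}$ with opposite signs (see (\ref{Hinth})--(\ref{tildeHinth})). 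For $a=4$ the paper must set up a new domination bound (\ref{GaussianDbund4}) and a modified double-commutator estimate (Lemma~\ref{lem.DCmodify}); for the pure imaginary $a=5,7$ the argument fails in the original basis entirely, and the paper has to pass to a rotated ``spin-1'' basis via the unitary $V$ so that the relevant operator $\tilde S^{(7)}$ becomes real hermitian before reflection positivity can be applied.

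The most serious gap is your third step. Appealing to the Watanabe--Murayama type-A/type-B classification, via the vanishing of $\omega_0([Q^{(a)},Q^{(b)}])$, is circular in this context: that counting rule is exactly the physics-level statement the theorem is supposed to establish rigorously, and the vanishing of those commutator expectations does not by itself imply that the six constructed excitations $\Phi^{(a)}\Omega_\omega$ are linearly independent vectors. The paper instead proves independence directly and self-containedly: (i) particle-hole symmetry (\ref{eq.ph}) gives orthogonality of the $\{5,7\}$ modes against the $\{1,3,4,6\}$ modes; (ii) the unbroken rotations $U(\theta)$ generated by $S^{(2)}$ give orthogonality of $\{1,3\}$ against $\{4,6\}$; and (iii) within each pair one forms ladder operators $A_\pm=(A_R^{(3)}\pm iA_R^{(1)})/\sqrt2$ satisfying $[Q^{(2)},A_\pm]=\pm A_\pm$, so that $\Phi_+\Omega_\omega\perp\Phi_-\Omega_\omega$, and then --- this is the technical core of the whole section --- one must prove the strict positivity $\|\Phi_\pm\Omega_\omega\|>0$ of (\ref{eq.nonvanishupdown}). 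That positivity is extracted from a lower bound on the staggered magnetization (\ref{eq.Staggeredbound}) through several pages of spectral-projection estimates culminating in (\ref{eq.Apmpositive}), with a careful choice $\delta<R^{-\nu-1}$. Nothing in your proposal substitutes for this step: without it, one cannot rule out that, say, $\Phi^{(1)}\Omega_\omega$ and $\Phi^{(3)}\Omega_\omega$ are proportional (or that one of them vanishes), in which case the six candidate modes would collapse to fewer than six and the claimed equality $N_{\mathrm{NG}}=N_{\mathrm{BS}}$ would fail to be proved.
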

This theorem is intuitively expected from the analogy to the quantum Heisenberg anti-ferromagnet, but proving this is not so trivial.
In fact, unlike the Heisenberg models, it is difficult to obtain the ground-state properties of our NJL model.
For instance, we could not directly apply the Marshall~\cite{Marshall} and Lieb--Mattis~\cite{LM} type theorem.
Therefore, the proof of Theorem~\ref{cor.NGn} is one of the main novel features of our paper.


\section{Reflection positivity and Gaussian domination}
\label{sect.RP}

In this section we prove a Gaussian domination bound. We begin with some definitions for reflection positivity.
Let $\Omega \subset \Lambda$ be a subset of the present hypercubic lattice $\Lambda$ 
and $\mathcal{A}(\Omega)$ be the algebra generated by the fermion operators 
$\psi_i^\dagger(x), \psi_j(y)$ for $x,y \in \Omega$ and $i, j =1,2,3$.
Clearly there are a decomposition $\Lambda = \Lambda_- \cup \Lambda_+$ with $\Lambda_- \cap \Lambda_+ = \emptyset$ 
and a reflection map $r:\Lambda_{\pm}\rightarrow \Lambda_{\mp}$ satisfying $r(\Lambda_{\pm})=\Lambda_{\mp}$. 
We write $\mathcal{A} := \mathcal{A}(\Lambda)$ and $\mathcal{A}_\pm := \mathcal{A}(\Lambda_\pm)$.
We take an anti-linear map\footnote{In the Euclidean formalism, treated in~\cite{OS, Seiler}, the corresponding map should be an anti-morphism due to the anti-symmetry of the Grassmann algebra. In any case, it should be noted again that these formalisms and ours differ in several respects.} $\vartheta : \mathcal{A}_\pm \to \mathcal{A}_\mp$ to satisfy 
\begin{align*}
	\vartheta(\psi_i(x))
	= \psi_i(r(x)), \quad
	\vartheta(\psi^\dagger_i(x))
	= \psi_i^\dagger(r(x)), \\
	\vartheta(AB) = \vartheta(A)\vartheta(B), \quad
	\vartheta(A)^\dagger=\vartheta(A^\dagger) \quad \text{for }A, B \in \mathcal{A}.
\end{align*}

First, we divide our finite lattice $\Lambda$ by the $x^{(1)} = 1/2$ hyperplane into 
$\Lambda_- :=\{x \in \Lambda \colon -L+1 \le x^{(1)} \le 0\}$ and 
$\Lambda_+ := \{x  \in \Lambda \colon 1 \le x^{(1)} \le L\}$.
As in~\cite{DLS, Koma4}, we introduce some unitary transformations. First, we define
\[
U_{1, j}:=\prod_{\substack{x \in \Lambda: \\ x^{(j)}=\text{even}}}\prod_{k=1}^3\exp\left[\frac{i\pi}{2} n_k(x)\right],
\] 
for $j=2, \dots, \nu$, where $n_k(x) := \psi_k(x)^\dagger \psi_k(x)$ is the number operator, and let
\begin{equation}
	\label{U1}
	U_1 := \prod_{j=2}^\nu U_{1, j}.
\end{equation}
Using the anti-commutation relations, we see that 
\begin{equation}
	\label{eq.u1}
	U^\dagger_{1, j}\psi_k(x)U_{1, j}
	=
	\begin{cases}
		i\psi_k(x)  &\text{if } x^{(j)} \text{ is even}; \\
		\psi_k(x)  &\text{if } x^{(j)} \text{ is odd}
	\end{cases}
\end{equation}
for any $k$ and $x \in \Lambda$ and for $j=2,\ldots,\nu$. 
The transformation for $\psi_k^\dagger$ is also obtained by $U^\dagger \psi^\dagger U =(U^\dagger \psi U)^\dagger$ 
for any $\psi$ and $U$.

Now we write the hopping term, $H_K^{(\Lambda)}$, in  (\ref{HamSU(3)}) as
\begin{equation*}
	H_K^{(\Lambda)}:= \sum_{\mu=1}^\nu H_{K, \mu}^{(\Lambda)},
\end{equation*}
where
\begin{equation}
	\label{eq.defhop}
	H_{K, \mu}^{(\Lambda)} := i\kappa \sum_{x \in \Lambda}(-1)^{\theta_\mu(x)}
	[\Psi^\dagger(x) \Psi(x+e_\mu) - \Psi^\dagger(x+e_\mu)\Psi(x)].
\end{equation}
Then, when $\mu \neq \lambda$,
\[
U^\dagger_{1, \lambda}H^{(\Lambda)}_{K, \mu}U_{1, \lambda} = H^{(\Lambda)}_{K, \mu},
\]
and
\begin{align*}
	U^\dagger_{1, \mu}H^{(\Lambda)}_{K, \mu}U_{1, \mu} 
	&= 
	\kappa\sum_{\substack{x \in \Lambda \\ x^{(\mu)}\neq L}}(-1)^{\sum_{j=1}^\mu x^{(j)}}[\Psi^\dagger(x)\Psi(x+e_\mu)
	+ \Psi^\dagger(x+e_\mu)\Psi(x)]\\
	&\quad-\kappa\sum_{\substack{x \in \Lambda \\ x^{(\mu)}= L}}(-1)^{\sum_{j=1}^\mu x^{(j)}}[\Psi^\dagger(x)\Psi(x+e_\mu)
	+ \Psi^\dagger(x+e_\mu)\Psi(x)].
\end{align*}
These imply that 
\begin{equation}
	U^\dagger_{1}H^{(\Lambda)}_{K, 1}U_{1}= H^{(\Lambda)}_{K, 1},
\end{equation}
and for $\mu=2, \dots, \nu$
\begin{equation}
	\begin{split}
		U^\dagger_{1}H^{(\Lambda)}_{K, \mu}U_{1} 
		&= 
		\kappa\sum_{\substack{x \in \Lambda \\ x^{(\mu)}\neq L}}(-1)^{\sum_{j=1}^\mu x^{(j)}}[\Psi^\dagger(x)\Psi(x+e_\mu)
		+ \Psi^\dagger(x+e_\mu)\Psi(x)]\\
		&\quad-\kappa\sum_{\substack{x \in \Lambda \\ x^{(\mu)}= L}}(-1)^{\sum_{j=1}^\mu x^{(j)}}[\Psi^\dagger(x)\Psi(x+e_\mu)
		+ \Psi^\dagger(x+e_\mu)\Psi(x)].
	\end{split}
\end{equation}
To introduce the second transformation, define $\Lambda_{\mathrm{odd}} 
:= \{x \in \Lambda \colon x^{(1)} + \cdots x^{(\nu)} = \text{ odd}\}$ and
\begin{equation}
	\label{eq.odd}
	U_\mathrm{odd}:=\prod_{x \in \Lambda_\mathrm{odd}}\prod_{i=1}^3u_i(x),
\end{equation}
where $u_i(x)$ is given in (\ref{eq.minus}).
By (\ref{eq.conj}), we see
\begin{equation}
	\label{eq.odd2}
	U_\mathrm{odd}^\dagger \psi_j(x) U_\mathrm{odd}
	=
	\begin{cases}
		\psi_j^\dagger(x) & \text{if } x \in \Lambda_{\mathrm{odd}}; \\
		\psi_j(x) & \text{otherwise},
	\end{cases}
\end{equation}
for any $j=1,2,3$.
Let $\tilde U_1 := U_1U_\mathrm{odd}$. Then, for $\mu = 2, \dots, \nu-1$,
\begin{equation*}
	\label{eq.UkineticA}
	\begin{split}
		\tilde H_{K, \mu}^{(\Lambda)} 
		&:= \tilde U_1^\dagger H_{K, \mu}^{(\Lambda)} \tilde U_1\\
		&=
		\kappa \sum_{\substack{x \in \Lambda \\ x^{(\mu)} \neq L}}(-1)^{\sum_{i=\mu +1}^\nu x^{(i)}}[\Psi^\dagger(x)\Psi(x+e_\mu)
		+\Psi^\dagger(x+e_\mu)\Psi(x)]\\
		&\quad-\kappa \sum_{\substack{x \in \Lambda \\ x^{(\mu)} = L}}(-1)^{\sum_{i=\mu +1}^\nu x^{(i)}}[\Psi^\dagger(x)\Psi(x+e_\mu)
		+\Psi^\dagger(x+e_\mu)\Psi(x)],
	\end{split}
\end{equation*}
and
\begin{equation*}
	\label{eq.UkineticB}
	\begin{split}
		\tilde H_{K, \nu}^{(\Lambda)} := \tilde U_1^\dagger H_{K, \nu}^{(\Lambda)} \tilde U_1
		&=
		\kappa \sum_{\substack{x \in \Lambda \\ x^{(\nu)} \neq L}}[\Psi^\dagger(x)\Psi(x+e_\nu) +\Psi^\dagger(x+e_\nu)\Psi(x)]\\
		&\quad-\kappa \sum_{\substack{x \in \Lambda \\ x^{(\nu)} = L}}[\Psi^\dagger(x)\Psi(x+e_\nu) +\Psi^\dagger(x+e_\nu)\Psi(x)].
	\end{split}
\end{equation*}
These terms have the form
\begin{equation*}
	\tilde H_{K, \mu}^{(\Lambda)} = \tilde H_{K, \mu}^+ + \tilde H_{K, \mu}^-
\end{equation*}
for $\mu =2, \dots, \nu$, with $\tilde H_{K, \mu}^\pm \in \mathcal{A}_\pm$ obeying 
$\vartheta(\tilde H_{K, \mu}^\pm)=\tilde H_{K, \mu}^\mp$.

In order to treat the term $H_{K, 1}^{(\Lambda)}$, we introduce the Majorana fermion operators $\xi_j(x), \eta_j(x)$ by
\[
\xi_j(x) := \psi^\dagger_j(x)+\psi_j(x), \quad \eta_j(x) := i(\psi_j^\dagger(x)-\psi_j(x)),
\]
for $j=1,2,3$ and $x \in \Lambda$.
These operators obey $\xi_i^\dagger(x) =\xi_i(x)$, $\eta_i^\dagger(x)=\eta_i(x)$, $i=1, 2, 3$, and the anti-communication relations
\begin{align*}
	\{\xi_i(x), \xi_j(y)\} &= 2 \delta_{x, y}\delta_{i, j}, \quad \{\eta_i(x), \eta_j(y)\} = 2 \delta_{x, y}\delta_{i, j},\\
	\{\xi_i(x), \eta_j(y)\}&=0.
\end{align*}
In addition, for $x \in \Lambda$ and $i=1,2,3$, we see
\[
(\vartheta\xi_i)(x) = \xi_i(\vartheta(x)), \quad (\vartheta \eta_i)(x)=-\eta_i(\vartheta(x)).
\]
We note that for $i=1,2,3$
\[
\psi_i^\dagger(x)\psi_i(x) - \psi_i^\dagger(y)\psi_i(x)
=\frac{\xi_i(x)\xi_i(y) + \eta_i(x)\eta_i(y)}{2}.
\]
Furthermore, for $i=1,2,3$, we see that $U^\dagger_\mathrm{odd} \xi_i(x) U_\mathrm{odd} = \xi_i(x)$ and
\[
U^\dagger_\mathrm{odd} \eta_i(x) U_\mathrm{odd} = 
\begin{cases}
	-\eta_i(x) &\text{if }x \in \Lambda_\mathrm{odd};\\
	\eta_i(x) &\text{otherwise}.
\end{cases} 
\]
Now we have
\begin{equation}
	\begin{split}
		\tilde H^{(\Lambda)}_{K, 1} := \tilde U_1^\dagger H^{(\Lambda)}_{K, 1}\tilde U_1
		&=
		\frac{i\kappa}{2}
		\sum_{\substack{x \in \Lambda \\ x^{(1)} \neq L}}\sum_{j=1}^3[\xi_j(x)\xi_j(x+e_1) -\eta_j(x)\eta_j(x+e_1)] \\
		&-\frac{i\kappa}{2}\sum_{\substack{x \in \Lambda \\ x^{(1)} =L}}\sum_{j=1}^3[\xi_j(x)\xi_j(x+e_1) -\eta_j(x)\eta_j(x+e_1)].
	\end{split}
\end{equation}
Thus $\tilde  H^{(\Lambda)}_{K, 1}$ can be decomposed into three parts as follows: 
\[
\tilde H^{(\Lambda)}_{K, 1}
=
\tilde H^{+}_{K, 1}+\tilde H^{-}_{K, 1}+\tilde H^{0}_{K, 1},
\]
where 
\[
\tilde H^{\pm}_{K, 1}
:=\frac{i\kappa}{2}
\sum_{\substack{x, x+e_1 \in \Lambda_\pm}}\sum_{j=1}^3[\xi_j(x)\xi_j(x+e_1) -\eta_j(x)\eta_j(x+e_1)]
\]
and
\begin{equation}
	\label{tildeH0K1}
	\tilde H^{0}_{K, 1}
	:=\frac{i\kappa}{2}
	\sum_{\substack{x \in \Lambda \\x^{(1)}=0, -L+1 }}\sum_{j=1}^3[\xi_j(x)(\vartheta\xi_j)(x) +\eta_j(x)(\vartheta\eta_j)(x)].
\end{equation}
We see from these expressions that
\begin{equation}
	\label{eq.kintrans}
	\tilde H_K
	:=
	\tilde U_1^\dagger H^{(\Lambda)}_{K}\tilde U_1 = \tilde H_K^+ +\tilde H_K^- +\tilde H_K^0,
\end{equation}
where $\tilde H_K^0$ is equal to the above $\tilde H^{0}_{K, 1}$ of (\ref{tildeH0K1}). 

Next, let us deal with the interaction term of the Hamiltonian.
Since the operators $S^{(a)}$ ($a=2, 5, 7$) are pure imaginary hermitian, 
we treat the corresponding interactions by separating them from the rest as in~\cite[Lemma 6.1]{DLS}.
For latter purpose, we now introduce real-valued functions $h=(h^{(1)},\dots, h^{(\nu)})$ with $h^{(\mu)}:\Lambda \to \mathbb{R}$.
It is convenient to write
\begin{equation}
	\label{Hinth}
	\begin{split}
		H_{\mathrm{int}}^{(\Lambda)}(h)
		&:=
		\frac{g}{2}\sum_{\mu=1}^\nu H_{\mathrm{int}, \mu, 3}^{(\Lambda)}(h)
		+\frac{g}{2}\sum_{x \in \Lambda}\sum_{\mu=1}^\nu\sum_{b = 1,4,6,8}[S^{(b)}(x)+S^{(b)}(x+e_\mu)]^2\\
		\quad& -\frac{g}{2}\sum_{x \in \Lambda}\sum_{\mu=1}^\nu\sum_{b =2, 5, 7}[S^{(b)}(x)- S^{(b)}(x+e_\mu)]^2\\
		&\quad-g\nu\sum_{a\neq 2, 5, 7}\sum_{x \in \Lambda}S^{(a)}(x)^2
		+g\nu\sum_{b =2, 5, 7}\sum_{x \in \Lambda}S^{(b)}(x)^2,
	\end{split}
\end{equation}
where
\begin{equation}
	\label{eq.int}
	H_{\mathrm{int}, \mu, 3}^{(\Lambda)}(h)
	:=
	\frac{g}{2} \sum_{x \in \Lambda}
	[S^{(3)}(x) +S^{(3)}(x+e_\mu) + (-1)^{\sum_{j=1}^\nu x^{(j)}} h^{(\mu)}(x)]^2.
\end{equation}
Clearly, $H_{\mathrm{int}}^{(\Lambda)}(0)$ is equal to the original interaction term. One has 
$$
S^{(a)}(x) =\sum_{i,j}\lambda_{ij}^{(a)}\psi^\dagger_i(x)\psi_j(x).
$$ 
{from} the definition $S^{(a)}(x)=\Psi^\dagger(x)\lambda^{(a)}\Psi(x)$. By combining this, (\ref{U1}) and (\ref{eq.u1}),
one obtains  
$$
U_1^\dagger S^{(a)}(x) U_1 = S^{(a)}(x)\quad \mbox{for all $a$}.
$$
Similarly, by (\ref{eq.odd2}), we have
\begin{equation}
	\label{eq.Sodd}
	U^\dagger_\mathrm{odd}S^{(a)}U_\mathrm{odd}
	=
	\begin{cases}
		-S^{(a)}(x) &\text{if } x \in \Lambda_\mathrm{odd}, a\neq 2,5, 7;\\
		S^{(a)}(x) &\text{otherwise}
	\end{cases}
\end{equation}
in the same way as (\ref{eq.ph}).
Combining these with $\tilde{U}_1=U_1U_{\rm odd}$, one has 
\begin{equation}
	\label{tildeHinth}
	\begin{split}
		\tilde H^{(\Lambda)}_\mathrm{int}(h)
		&:=\tilde U^\dagger_1H^{(\Lambda)}_\mathrm{int}(h)\tilde U_1\\
		&=
		\frac{g}{2}\sum_{\mu=1}^\nu \sum_{x \in \Lambda}
		\left[S^{(3)}(x) -S^{(3)}(x+e_\mu) + h^{(\mu)}(x)\right]^2\\
		&\quad +\frac{g}{2}\sum_{x \in \Lambda}\sum_{\mu=1}^\nu\sum_{b = 1,4,6,8}\left[S^{(b)}(x)-S^{(b)}(x+e_\mu)\right]^2
		\\
		&\quad -\frac{g}{2}\sum_{x \in \Lambda}\sum_{\mu=1}^\nu\sum_{b =2, 5, 7}\left[S^{(b)}(x)- S^{(b)}(x+e_\mu)\right]^2 \\
		&\quad -g\nu\sum_{a\neq 2, 5, 7}\sum_{x \in \Lambda}S^{(a)}(x)^2 +g\nu\sum_{b =2, 5, 7}\sum_{x \in \Lambda}S^{(b)}(x)^2.
	\end{split}
\end{equation}
Here the minus signs for the imaginary hermitian matrices are crucial for deriving the Gaussian domination bound below.
This expression $\tilde H^{(\Lambda)}_\mathrm{int}(h)$ of the interaction Hamiltonian can be decomposed into three parts 
as follows: 
\begin{equation}
	\label{decompotildeHinth}
	\tilde H^{(\Lambda)}_\mathrm{int}(h)
	=
	\tilde H^{+}_\mathrm{int}(h)+\tilde H^{-}_\mathrm{int}(h)+\tilde H^{0}_\mathrm{int}(h),
\end{equation}
where $\tilde H^{\pm}_\mathrm{int}(h) \in \mathcal{A}_\pm$ and
\begin{equation}
	\label{tildeH0inth}
	\begin{split}
		\tilde H^{0}_\mathrm{int}(h)
		&:=\frac{g}{2} \sum_{\substack{x \in \Lambda \\x^{(1)}=0, L}}
		\biggl\{\left[S^{(3)}(x) -S^{(3)}(x+e_1) + h^{(1)}(x)\right]^2\\
		&+\sum_{a=1,4,6,8}\left[S^{(a)}(x)- S^{(a)}(x+e_1)\right]^2 
		-\sum_{b=2,5,7}\left[S^{(b)}(x)- S^{(b)}(x+e_1)\right]^2\biggr\}.
	\end{split}
\end{equation}
Similarly, we obtain 
\begin{equation}
	\label{tildeOLambda}
	\tilde U^\dagger_1O^{(\Lambda)}\tilde U_1
	=
	O^{(\Lambda)},
\end{equation}
and
\begin{equation}
	\label{tildeHSBm}
	H_\mathrm{SB}(m)
	:=
	m O^{(\Lambda)}
	=
	H^+_\mathrm{SB}(m) + H^-_\mathrm{SB}(m),
\end{equation}
where $H^\pm_\mathrm{SB}(m) \in \mathcal{A}_\pm$ and $\vartheta(H^\pm_\mathrm{SB}(m))=H^\mp_\mathrm{SB}(m)$.

In order to obtain the Gaussian domination bound, we introduce 
\[
H^{(\Lambda)}(m, h):= H_K^{(\Lambda)} + H_{\mathrm{int}}^{(\Lambda)}(h) + mO^{(\Lambda)},
\]
where $H_K^{(\Lambda)}$ is the hopping term of the present Hamiltonian $H^{(\Lambda)}(m)$ of (\ref{HamSU(3)}) 
and $H_{\mathrm{int}}^{(\Lambda)}(h)$ is given by (\ref{Hinth}). 
Then, from (\ref{eq.kintrans}) and (\ref{tildeHinth})--(\ref{tildeHSBm}), we have 
\begin{equation}
	\label{eq.UH}
	\begin{split}
		\tilde H^{(\Lambda)}(m, h)
		&:=
		\tilde U^\dagger_1H^{(\Lambda)}(m, h)\tilde U_1\\
		&=
		\tilde H^{+}(m, h)
		+
		\tilde H^{-}(m, h)
		+\tilde H^{0}(m, h),
	\end{split}
\end{equation}
where $\tilde H^{\pm}(m, h) \in \mathcal{A}_\pm$ and
\begin{equation}
	\tilde H^{0}(m, h)
	:=
	\tilde H^{0}_K+\tilde H^{0}_\mathrm{int}(h).
\end{equation}

We next turn to the Gaussian domination bound.
\begin{proposition}[Gaussian domination]
	\label{GaussianD}
	For any real-valued $h = (h^{(1)}, \dots, h^{(\mu)})$, it follows that
	\begin{align}
		\label{GaussianDbund}
		\mathrm{Tr} \left\{\exp\left[- \beta H^{(\Lambda)}(m, h)\right] \right\}
		&\le
		\mathrm{Tr}\left\{\exp\left[- \beta H^{(\Lambda)}(m)\right]\right\},
	\end{align}
	where $H^{(\Lambda)}(m)=H^{(\Lambda)}(m, 0)$ is the present Hamiltonian (\ref{HamSU(3)}).
\end{proposition}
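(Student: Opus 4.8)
The plan is to deduce \eqref{GaussianDbund} from fermion reflection positivity, following the Dyson--Lieb--Simon strategy \cite{DLS} in the fermionic ($\mathbb Z_2$--graded) form of \cite{JP, Koma4}. Write $Z(h):=\mathrm{Tr}\{\exp[-\beta H^{(\Lambda)}(m,h)]\}$. Since $\tilde U_1$ is unitary, $Z(h)=\mathrm{Tr}\{\exp[-\beta\tilde H^{(\Lambda)}(m,h)]\}$, so it suffices to work with the reflected Hamiltonian $\tilde H^{(\Lambda)}(m,h)=\tilde H^{+}(m,h)+\tilde H^{-}(m,h)+\tilde H^{0}(m,h)$ of \eqref{eq.UH}. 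The first step is to verify that the boundary term $\tilde H^{0}=\tilde H^{0}_K+\tilde H^{0}_{\mathrm{int}}(h)$ is \emph{reflection positive}. Expanding the squares in \eqref{tildeH0inth} and absorbing the local pieces into $\tilde H^{\pm}$, every cross factor that links $\mathcal A_-$ with $\mathcal A_+$ takes the shape $-C\,\vartheta(C)$ with $C\in\mathcal A_{\pm}$ Hermitian: for $a=1,3,4,6,8$ this comes from $-2S^{(a)}(x)\vartheta(S^{(a)}(x))$ because $\vartheta(S^{(a)}(x))=S^{(a)}(x+e_1)$, whereas for $b=2,5,7$ the relative minus sign in front of $[S^{(b)}(x)-S^{(b)}(x+e_1)]^2$ combines with $\vartheta(S^{(b)}(x))=-S^{(b)}(x+e_1)$ to give again $-2S^{(b)}(x)\vartheta(S^{(b)}(x))$. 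This is exactly why the imaginary Hermitian generators were separated with the opposite sign, as in \cite[Lemma~6.1]{DLS}, and the Majorana boundary term \eqref{tildeH0K1} is handled by the fermionic reflection positivity of \cite{JP, Koma4}.

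Second, I would use this structure to prove the Schwarz inequality
\[
Z(h)\le Z(\Theta^{-}h)^{1/2}\,Z(\Theta^{+}h)^{1/2},
\]
where $\Theta^{\pm}h$ denotes the configuration obtained from $h$ by replacing its $\Lambda_{\mp}$--part with the $\vartheta$--reflection of its $\Lambda_{\pm}$--part. Concretely, I would expand $\exp[-\beta\tilde H^{(\Lambda)}(m,h)]$ by the Duhamel (or Trotter) formula into products of $\tilde H^{+}$, $\tilde H^{-}$ and the cross factors $C\,\vartheta(C)$, and then invoke that $A\mapsto\mathrm{Tr}\{A\,\vartheta(A)\cdots\}$ defines a positive semidefinite sesquilinear form under $\vartheta$; the Cauchy--Schwarz inequality for this form yields the displayed bound. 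The fermionic signs are the delicate point, and this is where I expect the \emph{main obstacle} to lie: unlike the bosonic case, commuting the Grassmann-odd factors across the reflection plane generates signs, so one must work with the graded version of reflection positivity of \cite{JP, Koma4} rather than the naive argument of \cite{DLS}.

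Finally, I would run the maximization argument. Because only $S^{(3)}$ couples to $h$ in \eqref{tildeHinth}, one has
\[
\tilde H^{(\Lambda)}(m,h)=\tilde H^{(\Lambda)}(m,0)+g\sum_{\mu,x}h^{(\mu)}(x)\,[S^{(3)}(x)-S^{(3)}(x+e_\mu)]+\tfrac{g}{2}\Big(\sum_{\mu,x}h^{(\mu)}(x)^2\Big)\mathbf 1,
\]
so the nonpositive c-number quadratic forces $Z(h)\to0$ as $\|h\|\to\infty$ and the supremum of $Z$ is attained. By the Schwarz inequality the set of maximizers is stable under $\Theta^{\pm}$; the standard reflection (chessboard) argument of \cite{DLS}, applied through the hyperplanes in every coordinate direction, then reduces the verification to spatially constant $h$, i.e.\ $h^{(\mu)}(x)\equiv c_\mu$. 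For such $h$ the linear term $g\sum_\mu c_\mu\sum_x[S^{(3)}(x)-S^{(3)}(x+e_\mu)]$ telescopes to $0$ by periodicity, leaving $Z(c)=\exp[-\tfrac{\beta g}{2}|\Lambda|\sum_\mu c_\mu^2]\,Z(0)\le Z(0)$ since $g\ge0$. Combining this with the reduction gives $Z(h)\le Z(0)$ for all $h$, which is \eqref{GaussianDbund}.
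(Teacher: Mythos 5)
Your first two steps coincide with the paper's proof: the $\tilde U_1$ conjugation, the observation that both the real generators ($a\in\{1,3,4,6,8\}$, where $\vartheta(S^{(a)}(x))=S^{(a)}(x+e_1)$) and the pure imaginary ones ($b\in\{2,5,7\}$, where the extra minus sign compensates $\vartheta(S^{(b)}(x))=-S^{(b)}(x+e_1)$) yield cross terms of the form $-C\,\vartheta(C)$, and the graded Schwarz inequality built on $\mathrm{Tr}(A\vartheta(A))\ge 0$ of \cite{JP}, with the Majorana boundary hopping handled by parity counting. Your argument for the existence of a maximizer (decay of $Z(h)$ at infinity) is even more explicit than the paper's. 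The gap is in the endgame, and it starts with a misstatement of the Schwarz inequality itself: the inequality one can actually prove is \emph{not} $Z(h)\le Z(\Theta^-h)^{1/2}Z(\Theta^+h)^{1/2}$ with $\Theta^\pm$ a pure reflection. The components of $h^{(1)}$ on the bonds crossing the reflection plane enter the Hubbard--Stratonovich representation (\ref{eq.trid}) only through the unimodular phases $\exp[i\{k^{(1)}_1(x)+\cdots+k^{(1)}_n(x)\}\sqrt{\beta g/2n}\,h^{(1)}(x)]$, which are erased when absolute values are taken; equivalently, when $h\neq 0$ on a crossing bond the cross term of $[S^{(3)}(x)-S^{(3)}(x+e_1)+h^{(1)}(x)]^2$ is no longer of the form $-C\,\vartheta(C)$, since $\vartheta(S^{(3)}(x)+h/2)\neq S^{(3)}(x+e_1)-h/2$. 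The provable inequality is therefore (\ref{Eq.Gaussian0}), in which the crossing-bond components of $h_\pm$ are \emph{set to zero}.

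This zeroing is fatal to your reduction to constant $h$. Iterating the correct inequality never produces constant configurations; it produces configurations with strictly more zeros. And even if one could reflect without zeroing, the reflection of a direction-$1$ field carries a sign flip (the paper's $h_\pm^{(1)}(x)=-h^{(1)}(\vartheta(x+e_1))$ below (\ref{Eq.trace})), so the configurations generated from a single bond value are sign-staggered, not constant: for $h^{(1)}(x)=(-1)^{x^{(1)}}c$ the linear term does not telescope but becomes $2gc\sum_x(-1)^{x^{(1)}}S^{(3)}(x)$, a staggered field with vanishing thermal expectation, for which $\mathrm{Tr}\exp[-\beta(\tilde H+\mathrm{field})]\ge \mathrm{Tr}\exp[-\beta\tilde H]$ by convexity --- so bounding $Z$ there \emph{is} the nontrivial content of Gaussian domination, not a computation. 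In short, your constant-field calculation is correct but covers exactly the configurations that reflection positivity never generates. The correct endgame, which you should substitute, is the paper's maximal-zeros argument: among maximizers take one, $h_0$, with the largest number of vanishing components; use the gauge transformations (\ref{eq.gaugeHA}) and (\ref{eq.BC}) --- needed in any case to show that, despite the staggered hopping phases and the boundary condition, reflections across all hyperplanes are equivalent, a step your outline also omits --- to move a hypothetical nonzero component onto a crossing bond; then (\ref{Eq.Gaussian0}) makes $h_\pm$ maximizers with strictly more zeros than $h_0$, a contradiction, whence $h_0\equiv 0$ and $Z(h)\le Z(0)$.
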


\begin{proof}
	First we treat $\tilde H^{(\Lambda)}(m, h)$.
	Using the Lie product formula, we have the identity
	$$
	\mathrm{Tr} \left\{\exp\left[- \beta \tilde H^{(\Lambda)}(m, h)\right]\right\}
	= \lim_{n \to \infty}\mathrm{Tr}\left( \alpha_n^n\right)
	$$ 
	with
	\begin{align*}
		\alpha_n
		&:=
		\left(1 - \frac{\beta}{n} \tilde H^{0}_{K, 1} \right)\left\{\prod_{\substack{x \in \Lambda: \\ x^{(1)} =0, L}} 
		\exp\left[{-\frac{\beta g}{2n}\left(S^{(3)}(x) - S^{(3)}(x+ e_1) + h^{(1)}(x)\right)^2}\right]\right\}
		\\
		&\quad \times
		\left\{\prod_{a=1,4,6,8}\prod_{\substack{x \in \Lambda: \\ x^{(1)} =0, L}} 
		\exp\left[-{\frac{\beta g}{2n}\left(S^{(a)}(x) - S^{(a)}(x+ e_1)\right)^2}\right]\right\}  \\
		&\quad \times \left\{\prod_{b=2,5,7}\prod_{\substack{x \in \Lambda: \\ x^{(1)} =0, L}} 
		\exp\left[{\frac{\beta g}{2n}\left(S^{(b)}(x) - S^{(b)}(x+ e_1)\right)^2}\right]\right\} \\
		&\quad \times
		\exp\left[- \frac{\beta}{n} \tilde H^{-}(m, h)\right] 
		\exp\left[-\frac{\beta}{n} \tilde H^+(m, h)\right].
	\end{align*}
	Let us consider the exponential function of the square of an operator. 
	As to the real hermitian matrices $S^{(a)}$ for $a\neq 2, 5, 7$, we use the operator identity
	\[
	e^{-D^2} = \int_\mathbb{R} \frac{dk}{\sqrt{4 \pi}}\; e^{ikD} e^{-k^2/4},
	\]
	for any hermitian matrix $D$.
	Then one has 
	\begin{equation}
		\label{IdS3}
		\begin{split}
			& \exp\left[{-\frac{\beta g}{2n}(S^{(3)}(x) - S^{(3)}(x+ e_1) + h^{(1)}(x))^2}\right]\\
			&=
			\int_\mathbb{R} \frac{dk}{\sqrt{4 \pi}} e^{-k^2/4} e^{ik \sqrt{\frac{\beta g}{2n}} S^{(3)}(x)} 
			e^{-ik \sqrt{\frac{\beta g}{2n}} S^{(3)}(x+ e_1)} e^{ik \sqrt{\frac{\beta g}{2n}}  h^{(1)}(x)},
		\end{split}
	\end{equation}
	and
	\begin{align}
		\label{IdSreal}
		&\exp\left[-{\frac{\beta g}{2n}(S^{(a)}(x) - S^{(a)}(x+ e_1))^2}\right]\\
		&=
		\int_\mathbb{R} \frac{dk}{\sqrt{4 \pi}} e^{-\frac{k^2}{4}} e^{ik \sqrt{\frac{\beta g}{2n}} S^{(a)}(x)} 
		e^{-ik \sqrt{\frac{\beta g}{2n}} S^{(a)}(x+ e_1)}
	\end{align}
	for $a=1,4,6,8$. As for the pure imaginary hermitian matrices $S^{(a)}$ for $a= 2, 5, 7$, we use a different operator identity
	\[
	e^{D^2} =  \int_\mathbb{R}\frac{dk}{\sqrt{4 \pi}} \; e^{kD} e^{-k^2/4},
	\]
	for any hermitian matrix $D$.
	Then one also obtains
	\begin{align}
		\label{IdSimaginary}
		\exp\left[{\frac{\beta g}{2n}\left(S^{(a)}(x) - S^{(a)}(x+ e_1)\right)^2}\right]
		&=
		\int_\mathbb{R} \frac{dk}{\sqrt{4 \pi}} e^{-k^2/4} e^{k \sqrt{\frac{\beta g}{2n}} S^{(a)}(x)} 
		e^{-k \sqrt{\frac{\beta g}{2n}} S^{(a)}(x+ e_1)}
	\end{align}
	for $a= 2, 5, 7$. By relying on these indentities, we want to rewrite $\mathrm{Tr}\left( \alpha_n^n\right)$. 
	For this purpose, we introduce 
	\begin{align*}
		&K_-(k, m, h) := A_-(k) e^{-\beta \tilde H^- (m, h)/n}, \\
		&K_+(k, m, h) := \underbrace{\vartheta(A_-(k))}_{=: A_+(k)} e^{-\beta \tilde H^+ (m, h)/n},
	\end{align*}
	where we have written 
	\begin{align*}
		&A_-(k)\\ 
		&:=  \left\{\prod_{a\in \mathcal{S}}
	\prod_{\substack{x \in \Lambda, \\ 
			x^{(1)} =0}} \exp\left[ik^{(a)}(x) \sqrt{\frac{\beta g}{2n}} S^{(a)}(x)\right]
	\prod_{\substack{x \in \Lambda, \\ x^{(1)} =-L+1}} \exp\left[-ik^{(a)}(x) \sqrt{\frac{\beta g}{2n}} S^{(a)}(x)\right]\right\}\\
	&\quad \times
	\left\{\prod_{a=2, 5, 7}
	\prod_{\substack{x \in \Lambda, \\ x^{(1)} =0}} \exp\left[k^{(a)}(x) \sqrt{\frac{\beta g}{2n}} S^{(a)}(x)\right]
	\prod_{\substack{x \in \Lambda, \\ x^{(1)} =-L+1}}   \exp\left[-k^{(a)}(x) \sqrt{\frac{\beta g}{2n}} S^{(a)}(x)\right]\right\}
\end{align*}
with the set of the indices, $\mathcal{S}:=\{1,3,4,6,8\}$. 
We note that
\[
\vartheta(S^{(a)}(0, x^{(2)}, \dots, x^{(\nu)}))
=
\begin{cases}
	S^{(a)}(1, x^{(2)}, \dots, x^{(\nu)}) & (a\neq 2, 5, 7);\\
	-S^{(a)}(1, x^{(2)}, \dots, x^{(\nu)}) & (a= 2, 5, 7),
\end{cases}
\]
and
\[
\vartheta(S^{(a)}(-L+1, x^{(2)}, \dots, x^{(\nu)}))
=
\begin{cases}
	S^{(a)}(L, x^{(2)}, \dots, x^{(\nu)}) & (a\neq 2, 5, 7);\\
	-S^{(a)}(L, x^{(2)}, \dots, x^{(\nu)}) & (a= 2, 5, 7),
\end{cases}
\]
where we have used anti-linearity of the reflection map $\vartheta$.
By combining these with the identities (\ref{IdS3}), (\ref{IdSreal}) and (\ref{IdSimaginary}), we obtain 
\begin{equation}
	\label{eq.trid}
	\begin{split}
		\mathrm{Tr} \left(\alpha_n^n\right)
		=
		\int &d\mu(k_1)\cdots d\mu(k_n)
		\mathrm{Tr} \left[\prod_{i=1}^n \left(1 - \frac{\beta}{n} \tilde H^{0}_{K, 1} \right) K_-(k_i, m, h) K_+(k_i, m, h)
		\right] \\
		& \times \prod_{\substack{x \in \Lambda, \\ x^{(1)} =0, L}}  
		\exp\left[i \left\{k^{(1)}_1(x)+ \cdots+ k^{(1)}_n(x)\right\} \sqrt{\frac{\beta g}{2n}}  h^{(1)}(x)\right]
	\end{split}
\end{equation}
with 
\begin{align*}
	\int d\mu(k)
	&:=
	\prod_{\substack{x \in \Lambda, \\ x^{(1)} =0, L}} \prod_{a=1}^8\int_\mathbb{R} \frac{dk^{(a)}(x)}{\sqrt{4\pi}} e^{- k^{(a)}(x)^2/4}.
\end{align*}
The rest of the proof is the same as in~\cite[Proposition~3.1]{GK}. From the expression (\ref{tildeH0K1}) of $\tilde{H}_{K,1}^0$, 
we notice that the operator in the trace in (\ref{eq.trid}) can be written as the sum of the terms, 
each of which has the following form: 
\begin{equation}
	\begin{split}
		\label{Eq.integrand}
		\left(- \frac{i\beta \kappa}{2n} \right)^j&
		K_-(k_1, m, h) K_+(k_1, m, h) \cdots K_-(k_{l_1}, m, h) K_+(k_{l_1}, m, h) \gamma_1 \vartheta(\gamma_1) \\
		& \times K_-(k_{l_1+1}, m, h) K_+(k_{l_1+2}, m, h) \cdots K_-(k_{l_2}, m, h) K_+(k_{l_2}, m, h) \gamma_2 \vartheta(\gamma_2) \\
		& \times K_-(k_{l_2+1}, m, h) K_+(k_{l_2+2}, m, h) \cdots K_-(k_{l_j}, m, h) K_+(k_{l_j}, m, h) \gamma_j \vartheta(\gamma_j)\\
		&\times K_-(k_{l_j+1}, m, h) K_+(k_{l_j+2}, m, h) \cdots K_-(k_{n}, m, h) K_+(k_{n}, m, h),
	\end{split}
\end{equation}
where $\gamma_l \in \{\xi_j(x), \eta_j(x)\}$, for $x \in \Lambda$ with $x^{(1)} = 0$ or $x^{(1)}= -L+1$.
Since both of $\tilde H^- (m, h)$ and $S^{(a)}$ have even fermion parity, $K_-(k, m, h)$ also has even fermion parity 
from the definition. By applying~\cite[Prop.~1]{JP}, one can show that the trace of (\ref{Eq.integrand}) vanishes 
unless there appear an even number of Majorana operators $\gamma_\ell$ on the half lattice.
Thus, there remains only the case that $j$ is even.
Since $K_\pm(k, m, h) \in \mathcal{A}_\pm$ has even fermion parity, 
we see $[K_-, K_+] = 0$, $[K_+, \gamma_i] =0$, and $[K_-, \vartheta(\gamma_i)] =0$.
Combining these with $\vartheta(\gamma_1) \gamma_2 = - \gamma_2\vartheta(\gamma_1)$, we have that
\begin{align*}
	&K_-(k_1, m, h) K_+(k_1, m, h) \cdots K_-(k_{l_1}, m, h) K_+(k_{l_1}, m, h) \gamma_1 \vartheta(\gamma_1) \\
	& \times K_-(k_{l_1+1}, m, h) K_+(k_{l_1+2}, m, h) \cdots K_-(k_{l_2}, m, h) K_+(k_{l_2}, m, h) \gamma_2 \vartheta(\gamma_2) \\
	&=
	-K_-(k_1, m, h)  \cdots K_-(k_{l_1}, m, h) \gamma_1 K_-(k_{l_1 +1 }, m, h) \cdots K_-(k_{l_2}, m, h) \gamma_2  \\
	& \times K_+(k_{1}, m, h)  \cdots K_+(k_{l_1}, m, h) \vartheta(\gamma_1) K_+(k_{l_1 +1 }, m, h) 
	\cdots K_+(k_{l_2}, m, h) \vartheta(\gamma_2)\\
	&=: -X_-(1)X_+(1),
\end{align*}
where $X_\pm(1) \in \mathcal{A}_\pm$.
Hence, for $j = 2m$, the term (\ref{Eq.integrand}) can be written in the form, 
\begin{align*}
	&\left(\frac{\beta \kappa}{2n} \right)^{2m} X_-(1)X_+(1) \cdots X_-(m)X_+(m) \\
	&= \left(\frac{\beta \kappa}{2n} \right)^{2m} X_-(1)X_-(2) \cdots X_-(m) X_+(1) X_+(2) \cdots X_+(m),
\end{align*}
where $X_\pm(i) \in \mathcal{A}_\pm, i=1,2,\ldots,m$, and we have used $[X_-(i), X_+(j)]=0$ for all $i,j$.

Consequently, we have
\begin{equation}
	\label{exprestralphann}
	\begin{split}
		\mathrm{Tr}(\alpha_n^n)
		=\int &d\mu(k_1) \cdots d\mu(k_n) \sum_{j} \mathrm{Tr} \left( W_-(j) W_+(j)\right) \\
		& \times \prod_{\substack{x \in \Lambda, \\ x^{(1)} =0, L}}  
		\exp\left[i \left\{k^{(1)}_1(x)+ \cdots+ k^{(1)}_n(x)\right\} \sqrt{\frac{\beta g}{2n}}  h^{(1)}(x)\right],
	\end{split}
\end{equation}
with
\[
W_\pm(j) := \left(\frac{\beta \kappa}{2n} \right)^{m_j} X_\pm(1, j)X_\pm(2, j) \cdots X_\pm(m_j, j).
\]
Here $m_j \ge 0$ is an integer and $X_{\pm}(i, j) \in \mathcal{A}_\pm$ for $i = 1, \dots, m_j$. 

We next require the following lemma~\cite[Prop.~2]{JP}. 
\begin{lemma}
	\label{lem.RP}
	For any operator $A \in \mathcal{A}_\pm$ it holds that
	\[
	\mathrm{Tr} (A \vartheta(A)) \ge 0.
	\]
\end{lemma}
This allows us to establish the Cauchy--Schwarz inequality 
for the trace: 
\begin{equation}
	\vert\mathrm{Tr}(A \vartheta(B))\vert^2 \le \mathrm{Tr}(A \vartheta(A))\cdot \mathrm{Tr}(B \vartheta(B))
	\quad \mbox{for \ } A, B \in \mathcal{A}_-.
\end{equation} 
Combining this inequality with the Cauchy--Schwarz inequality for the sum and integrations, we can evaluate 
$\mathrm{Tr}(\alpha_n^n)$ of (\ref{exprestralphann}) as follows: 
\begin{align*}
	\vert\mathrm{Tr}(\alpha_n^n)\vert
	&\le
	\int d\mu(k_1) \cdots d\mu(k_n) \sum_{j} \left\vert\mathrm{Tr} \left( W_-(j) W_+(j)\right)\right\vert \\
	&\le
	\int d\mu(k_1) \cdots d\mu(k_n) \sum_{j} \sqrt{\mathrm{Tr}\left[W_-(j) \vartheta (W_-(j))\right]}  
	\sqrt{\mathrm{Tr}\left[W_+(j) \vartheta ( W_+(j))\right]}\\
	&\le
	\int d\mu(k_1) \cdots d\mu(k_n) 
	\left(\sum_j \mathrm{Tr}\left[W_-(j) \vartheta (W_-(j))\right] \right)^{1/2}\\
	&\quad \times \left(\sum_j \mathrm{Tr}\left[W_+(j) \vartheta ( W_+(j))\right] \right)^{1/2} \\
	&\le
	\left(\int d\mu(k_1) \cdots d\mu(k_n) \sum_{j} \mathrm{Tr} \left[ W_-(j) \vartheta(W_-(j))\right]\right)^{1/2} \\
	&\quad \times \left(\int d\mu(k_1) \cdots d\mu(k_n) \sum_{j} \mathrm{Tr} \left[ W_+(j) \vartheta( W_+(j))\right]\right)^{1/2},
\end{align*}
where we have used $W_+(j) = \vartheta (\vartheta (W_+(j)))$ with $\vartheta (W_+(j)) \in \mathcal{A}_-$.

Undoing the above steps, we find 
\begin{equation}
	\begin{split}
		\label{Eq.trace}
		\left\{\mathrm{Tr} \exp\left[- \beta \tilde H^{(\Lambda)}(m, h)\right] \right\}^2
		&\le
		\mathrm{Tr} \exp\left[- \beta ( \tilde H^{(-)}(m, h) +  \vartheta(\tilde H^{(-)}(m, h))+ \tilde H^{(0)}(0)) \right]\\
		& \quad \times 
		\mathrm{Tr} \exp\left[- \beta (\vartheta(\tilde H^{(+)}(m, h))+ \tilde H^{(+)}(m, h)  + \tilde H^{(0)}(0)) \right].
	\end{split}
\end{equation}
For $h = (h^{(1)}, \dots, h^{(\nu)})$, we define $h_\pm = (h_\pm^{(1)}, \dots, h_\pm^{(\nu)})$ by
\begin{equation*}
	h_\pm^{(1)} :=
	\begin{cases}
		h^{(1)}(x) & \mbox{if } x \in \Lambda_\pm, \, x^{(1)}\neq 0,  L; \\
		-h^{(1)}(\vartheta(x+e_1)) & \mbox{if } x \in \Lambda_\mp, \, x^{(1)} \neq 0, L; \\
		0  & \mbox{otherwise, \ }
	\end{cases}
\end{equation*}
and for $i = 2, \dots, \nu$
\begin{equation*}
	h_\pm^{(i)} :=
	\begin{cases}
		h^{(i)}(x) & \mbox{if } x \in \Lambda_\pm; \\
		h^{(i)}(\vartheta(x)) & \mbox{if } x \in \Lambda_\mp.
	\end{cases}
\end{equation*}
Then, by (\ref{Eq.trace}), (\ref{eq.UH}) 
and $U e^{A} U^{-1} = e^{UAU^{-1}}$ for any matrix $A$ and any invertible $U$, we have
\begin{equation}
	\label{Eq.Gaussian0}
	\left\{\mathrm{Tr} \exp\left[-\beta H^{(\Lambda)}(m, h)\right] \right\}^2
	\le  \left\{\mathrm{Tr} \exp\left[-\beta H^{(\Lambda)}(m, h_-)\right]\right\}   
	\left\{\mathrm{Tr} \exp\left[-\beta H^{(\Lambda)}(m, h_+)\right]\right\}. 
\end{equation}

Following~\cite[Sect.~4]{Koma4}, we want to prove that (\ref{Eq.Gaussian0}) holds for reflections across any hyperplane.
To do so, we introduce a unitary operator
\begin{equation}
	\label{eq.gaugeHA}
	U_{\mathrm{HA}}(j \to 1) := U_{\mathrm{HA}}(j, j-1) U_{\mathrm{HA}}(j, j-2) \cdots U_{\mathrm{HA}}(j, 1)
\end{equation}
with
\[
U_{\mathrm{HA}}(i, j)  :=\prod_{k=1}^3 \prod_{\substack{x \in \Lambda\\  x^{(i)}, x^{(j)} 
		= \text{ odd} }} \exp\left[i\pi n_k(x)\right],
\]
for $i \neq j$.
Then one has
\begin{equation}
	\label{eq.invHA}
	U^\dagger_{\mathrm{HA}}(j \to 1) H_K^{(\Lambda)} U_{\mathrm{HA}}(j \to 1)
	=i\kappa \sum_{x \in \Lambda} \sum_{\mu =1}^\nu (-1)^{\tilde \theta_\mu(x)}
	[\Psi^\dagger(x) \Psi(x+e_\mu) - \Psi^\dagger(x+e_\mu)\Psi(x)] 
\end{equation}
where
\begin{equation*}
	\tilde \theta_j(x):=
	\begin{cases}
		0 & \mbox{if \ } x^{(j)}\ne L;\\
		1 & \mbox{if \ } x^{(j)}=L,
	\end{cases}
\end{equation*}
$\tilde \theta_\mu = \theta_\mu$ for $\mu > j$, and for $\mu < j$, 
\begin{equation*}
	\tilde \theta_\mu(x):=
	\begin{cases}
		x^{(1)}+\cdots+x^{(\mu-1)} +x^{(j)} & \mbox{if \ } x^{(\mu)}\ne L;\\
		x^{(1)}+\cdots+x^{(\mu-1)}+ x^{(j)}+1 & \mbox{if \ } x^{(\mu)}= L.
	\end{cases}
\end{equation*}
This shows that the hopping amplitudes are gauge equivalent in all directions.

Next, define the unitary operator by
\begin{equation}
	\label{eq.BC}
	U_{\mathrm{BC}, i} (L \to l) := \prod_{k=1}^3\prod_{\substack{x \in \Lambda\\ l \le x^{(i)} \le L}} \exp\left[i\pi n_k(x)\right].
\end{equation}
If $l \le x^{(i)} \le L$, then, for any $k$,
\begin{equation}
	\label{eq.BCtr}
	U^\dagger_{\mathrm{BC}, i} (L \to l) \psi_k(x) U_{\mathrm{BC}, i} (L \to l) = - \psi_k(x),
\end{equation}
Thus, this transformation changes the antiperiodic boundary condition for the boundary bond 
$\{(-L+1, x^{(2)}, \dots, x^{(\nu)}), (L, x^{(2)}, \dots, x^{(\nu)})  \}$ to that for 
the bond at the location $x^{(1)}=l-1,l$, i.e., $\{(l-1, x^{(2)}, \dots, x^{(\nu)}), (l, x^{(2)}, \dots, x^{(\nu)})  \}$. 
If we put above results together, we obtain that all the reflections across any hyperplane are equivalent 
in the sense of these gauge equivalence.

Finally, we show that $h \equiv 0$ maximizes $\mathrm{Tr} \exp[-\beta H^{(\Lambda)}(m, h)]$ by using above equivalences.
Since $\mathrm{Tr} \exp[-\beta H^{(\Lambda)}(m, h)]$ is bounded and continuous in $h$, there is at least one maximizer.
If a maximizer $h_0$ contains the maximal number of zeros, it should be $h_0^{(i)}(x) = 0$ for all $i, x$, as follows.
If $h_0$ does not vanish for some $i$ and $x$, we can assume that $h_0^{(1)}(L, x^{(2)}, \dots, x^{(\nu)}) \neq 0$ 
for some $(x^{(2)}, \dots, x^{(\nu)})$ by the above arguments.
Then, inequality (\ref{Eq.Gaussian0}) implies that $h_\pm$ obtained from $h_0$ are also maximizers.
This contradicts the assumption on $h_0$, because $h_\pm$ must contain strictly more zero than $h_0$ by definitions.
Thus, one sees $h_0 \equiv 0$.
\end{proof}

\section{Long-range order at non-zero temperature}
\label{sect.LRO1}
In this section we give a proof of the existence of the long-range order for non-zero temperatures in the dimensions $\nu\ge 3$.
To obtain the infrared bound~\cite{FSS, DLS}, we need some more notation. 
By using the unitary transformation $U_{\rm odd}$ of (\ref{eq.odd}), 
let $H_\mathrm{odd}(m, h) := U_\mathrm{odd}^\dagger H^\mathrm{(\Lambda)}(m, h) U_\mathrm{odd}$.
{From} the Gaussian domination bound (\ref{GaussianDbund}), we obtain
\begin{equation}
\label{Eq.Gaussian1}
Z_\mathrm{odd}(h) :=
\mathrm{Tr}\left\{ \exp[-\beta H_\mathrm{odd}(m, h)]\right\}
\le
Z_\mathrm{odd}(0).
\end{equation}
For any pairs of operators $A$ and $B$, we define the Duhamel two-point function by
\[
(A, B) := Z_\mathrm{odd}(0)^{-1}\int_0^1 ds\; 
\mathrm{Tr}\left[ e^{-s \beta H_\mathrm{odd}(m, 0)} A e^{-(1-s)\beta H_\mathrm{odd}(m, 0) } B \right].
\]
Since $U_\mathrm{odd}^\dagger S^{(3)}(x) U_\mathrm{odd} = (-1)^{x_1 + \cdots + x_\nu} S^{(3)}(x)$, one has 
\begin{align*}
U_\mathrm{odd}^\dagger H_{\mathrm{int}, \mu, 3}^{(\Lambda)}(h) U_\mathrm{odd} 
&=
\frac{g}{2} \sum_{x \in \Lambda} [S^{(3)}(x) - S^{(3)}(x+e_\mu) +h^{(\mu)}(x)]^2\\
&=\frac{g}{2} \sum_{x \in \Lambda} \left[(S^{(3)}(x) - S^{(3)}(x+e_\mu))^2 
+ h^{(\mu)}(x)^2\right. \\
&\quad \quad \quad \quad \quad \left. +2S^{(3)}(x) (h^{(\mu)}(x) - h^{(\mu)}(x-e_\mu))\right],
\end{align*}
where $H_{\mathrm{int}, \mu, 3}^{(\Lambda)}(h)$ is given by (\ref{eq.int}), and 
we have used 
$$
\sum_{x \in \Lambda} [S^{(3)}(x) - S^{(3)}(x+e_\mu)]h^{(\mu)}(x) 
= \sum_{x \in \Lambda}S^{(3)}(x) [h^{(\mu)}(x) - h^{(\mu)}(x-e_\mu)].
$$ 

Now we write $\partial_j h^{(\mu)} (x) := h^{(\mu)}(x) -h^{(\mu)}(x-e_j)$ and $S^{(3)}[f] := \sum_x S^{(3)}(x)f(x)$.
Then we show the following inequality: For  any complex-valued functions $h^{(\mu)}$
\begin{equation}
\label{Eq.infrared}
\left(S^{(3)}\left[\overline{\sum_\mu \partial_\mu h^{(\mu)}}\right], S^{(3)}\left[\sum_\mu \partial_\mu h^{(\mu)}\right] \right)
\le
\frac{1}{\beta g} \sum_{\mu=1}^\nu \sum_{x \in \Lambda} \vert h^{(\mu)}(x)\vert^2,
\end{equation} 
where $\overline{z}$ denotes the complex conjugate of $z\in\mathbb{C}$. 
Using $d^2 Z_\mathrm{odd}(\varepsilon h)/d\varepsilon^2 \vert_{\varepsilon = 0} \le 0$ by (\ref{Eq.Gaussian1})  
and the identity followed from Duhamel's formula
\[
\left.\frac{d^2}{d \varepsilon^2} \mathrm{Tr} \left[\exp(-\beta H_\mathrm{odd}(m, 0) + \varepsilon A ) \right] 
\right\vert_{\varepsilon=0}=
(A, A) Z_\mathrm{odd}(0),
\]
we obtain (\ref{Eq.infrared}) for real-valued functions $h^{(\mu)}$.
With the help of the fact that  $(A^\dagger, A) = (A_1, A_1) + (A_2, A_2)$  
for $A = A_1 + i A_2$ with $A_i^\dagger = A_i$, $i=1, 2$, the inequality (\ref{Eq.infrared}) holds 
for any complex-valued functions $h^{(\mu)}$. 

We write $\Lambda^\ast$ for the dual lattice of $\Lambda$. We choose 
$$
h^{(\mu)}(x) = \vert\Lambda\vert^{-1/2} \{\exp[ip \cdot(x+ e_\mu)]- \exp[ip\cdot x]\}
$$ 
with $p= (p^{(1)}, \dots, p^{(\nu)})\in\Lambda^\ast$. Then, we have
\[
\partial_\mu h^{(\mu)}(x) = -2\vert\Lambda\vert^{-1/2}e^{ip\cdot x}(1- \cos p^{(\mu)})
\]
and
\begin{equation}
\label{Ep}
\frac{1}{2}\sum_{x \in \Lambda} \sum_{\mu=1}^\nu \vert h_\mu(x)\vert^2
=\sum_{\mu=1}^\nu(1- \cos p^{(\mu)}) =: E_p.
\end{equation}
For $p \in \Lambda^\ast$, 
let $\tilde S^{(3)}_p := \vert\Lambda\vert^{-1/2} \sum_x S^{(3)}(x) \exp[ip \cdot x]$.
By substituting these into the inequality (\ref{Eq.infrared}), we obtain the desired infrared bound,
\begin{equation}
\label{Eq.IB}
(\tilde S^{(3)}_p, \tilde S^{(3)}_{-p})_{\beta, m} \le \frac{1}{2 \beta g E_{p+Q}},
\end{equation}
where we have used $U_\mathrm{odd}^\dagger S^{(3)}(x) U_\mathrm{odd} = (-1)^{x_1 + \cdots + x_\nu} S^{(3)}(x)$, 
$Q=(\pi, \dots, \pi)$ and the Duhamel two-point function for the Hamiltonian $H^{(\Lambda)}(m)$ in the right-hand side is given by 
\begin{equation}
\label{eq.Duhameldef}
(A, B)_{\beta, m} := \frac{1}{Z_{\beta, m}^{(\Lambda)}}\int_0^1 ds\;  
\mathrm{Tr}\left[ e^{-s \beta H^{(\Lambda)}(m)} A e^{-(1-s)\beta H^{(\Lambda)}(m) } B \right].
\end{equation}

Let $C_p := \langle [\tilde S^{(3)}_{p}, [H^{(\Lambda)}(0), \tilde S^{(3)}_{-p}]] \rangle_{\beta, 0}^{(\Lambda)} $ 
be the expectation value (\ref{eq.expv}) of the double commutator with $m=0$. 
Then $C_p \ge 0$ follows by an eigenfunction expansion (see the next line of \cite[Eq.~(28)]{DLS}).
Applying~\cite[Thm.~3.2 \& Cor~3.2]{DLS} and the infrared bound (\ref{Eq.IB}), we have
\begin{equation}
\label{eq.preLRO}
\begin{split}
	\left\langle \tilde S^{(3)}_{p} \tilde S^{(3)}_{-p}  + \tilde S^{(3)}_{-p} \tilde S^{(3)}_{p} \right\rangle_{\beta, 0}^{(\Lambda)}
	&\le
	\sqrt{\frac{C_p}{2gE_{p+Q}}} \coth\left(\sqrt{\frac{C_p \beta^2 g E_{p+Q} }{2 } }\right) \\
	&\le
	\sqrt{ \frac{C_p}{2 g E_{p+Q}}} + \frac{1}{\beta g E_{p+Q}},
\end{split}
\end{equation}
where we have used the inequality $\coth x \le 1+ 1/x$. 

In order to obtain a lower bound for the long-range order, 
we want to use an upper bound for the expectation value of the interaction Hamiltonian, following \cite{KLS1}. 
Actually, for the left-hand side of (\ref{eq.preLRO}), one has 
\begin{equation}
\label{eq.LHSLRO}
\begin{split}
	&\sum_{p \in \Lambda^\ast} \left\langle \tilde S^{(3)}_{p} \tilde S^{(3)}_{-p}  \right\rangle_{\beta, 0}^{(\Lambda)}\cos p^{(\mu)}
	\\
	&=
	\sum_{p \in \Lambda^\ast} \sum_{x, y \in \Lambda}
	\frac{\left\langle S^{(3)}(x) S^{(3)}(y) \right\rangle_{\beta, 0}^{(\Lambda)}}{ 2\vert\Lambda\vert} \left(e^{ip\cdot(x-(y-e_\mu))} 
	+e^{ip\cdot((x-e_\mu) - y)}\right)\\
	&=
	\sum_{p \in \Lambda^\ast} e^{ip\cdot(x-y)}\sum_{x, y \in \Lambda}\frac{\left\langle S^{(3)}(x) S^{(3)}(y+e_\mu) 
		+ S^{(3)}(x+e_\mu) S^{(3)}(y)  \right\rangle_{\beta, 0}^{(\Lambda)}}{ 2\vert\Lambda\vert}\\
	&=
	\sum_{x \in \Lambda}\left\langle S^{(3)}(x) S^{(3)}(x+e_\mu)  \right\rangle_{\beta, 0}^{(\Lambda)}.
\end{split}
\end{equation}
The rotational symmetry of SU(3) and the spatial symmetry 
imply that this right-hand side is equal to the expectation value of 
the interaction Hamiltonian divided by $8\nu$ except for the coupling constant $g$. 
(For the symmetries, see Appendix~\ref{AppendixSymmetries}.)
From the bound (\ref{eq.preLRO}), one has    
\begin{equation}
\label{Eq.LRO}
\begin{split}
	&\frac{1}{ \vert\Lambda\vert} \sum_{p \in \Lambda^\ast}    \left\langle \tilde S^{(3)}_{p} \tilde S^{(3)}_{-p}  
	+ \tilde S^{(3)}_{-p} \tilde S^{(3)}_{p} \right\rangle_{\beta, 0}^{(\Lambda)}
	\frac{1}{\nu}\sum_{\mu=1}^\nu \left(-\cos p^{(\mu)}\right)\\
	&\le \vert\Lambda\vert^{-1}\sum_{p \neq Q}\left(  \frac{1}{  \beta g E_{p+Q}} 
	+ \sqrt{\frac{C_p}{2g E_{p+Q}}}\right)\frac{1}{\nu} \left(-\sum_{\mu=1}^\nu\cos p^{(\mu)}\right)_+\\
	&\quad	+\frac{2}{\vert\Lambda\vert}{\left\langle \tilde S^{(3)}_{Q} \tilde S^{(3)}_{Q}  \right\rangle_{\beta,  0}^{(\Lambda)}},
\end{split}
\end{equation}
where $F_+ := \max(0, F)$. Further, from (\ref{eq.LHSLRO}) and (\ref{eq.indinv}) in Appendix~\ref{Appendix:SpatialSymmetry}, 
this left-hand side is written 
\begin{equation}
\label{Eq.LROequi}
\begin{split}
	&\frac{1}{ \vert\Lambda\vert} \sum_{p \in \Lambda^\ast}    \left\langle \tilde S^{(3)}_{p} \tilde S^{(3)}_{-p}  
	+ \tilde S^{(3)}_{-p} \tilde S^{(3)}_{p} \right\rangle_{\beta, 0}^{(\Lambda)}
	\frac{1}{\nu}\sum_{\mu=1}^\nu \left(-\cos p^{(\mu)}\right)\\
	&=-2\frac{1}{ \vert\Lambda\vert} \sum_{x \in \Lambda}\left\langle S^{(3)}(x) S^{(3)}(x+e_\mu)  \right\rangle_{\beta, 0}^{(\Lambda)}
\end{split}
\end{equation}
for any $\mu=1,2,\ldots,\nu$. 

We next consider the right-hand side of (\ref{Eq.LRO}).
Using Proposition~\ref{prop.rot}, the last term in the right-hand side of (\ref{Eq.LRO}) 
can be written in terms of the long-range order parameter defined by $m_\mathrm{LRO}^{(\Lambda)} 
:=  \vert\Lambda\vert^{-1}\sqrt{\langle [O^{(\Lambda)}]^2  \rangle_{\beta,  0}^{(\Lambda)}}$.
Indeed, by Proposition~\ref{prop.rot}, the long-range order parameter can be written in the form
\begin{equation}
\label{MLROS3Q}
\begin{split}
	(m_\mathrm{LRO}^{(\Lambda)})^2
	&=
	\vert\Lambda\vert^{-2} \sum_{x, y \in \Lambda} (-1)^{x^{(1)}+\cdots+ x^{(\nu)}}(-1)^{y^{(1)}+\cdots +y^{(\nu)}} 
	\left\langle S^{(2)}(x)S^{(2)}(y) \right\rangle_{\beta,  0}^{(\Lambda)}\\
	&=
	\vert\Lambda\vert^{-2} \sum_{x, y \in \Lambda} (-1)^{x^{(1)}+\cdots +x^{(\nu)}}(-1)^{y^{(1)}+\cdots +y^{(\nu)}} 
	\left\langle S^{(3)}(x)S^{(3)}(y) \right\rangle_{\beta,  0}^{(\Lambda)}\\
	&=
	\vert\Lambda\vert^{-1} \left\langle  \tilde S^{(3)}_{Q} \tilde S^{(3)}_{Q} \right\rangle_{\beta,  0}^{(\Lambda)}.
\end{split}
\end{equation}

In order to estimate the first sum in the right-hand side of (\ref{Eq.LRO}), we need to evaluate the double commutator in $C_p$. 
We first consider the free part $H_K^{(\Lambda)}$ of the present Hamiltonian $H^{(\Lambda)}(m)$ with the mass parameter $m=0$. 
Using the commutation relations, one has 
\begin{equation}
\label{eq.doubleCK}
\begin{split}
	\left\| \left[\tilde S^{(3)}_{p}, \left[H_K^{(\Lambda)}, \tilde S^{(3)}_{-p}\right]\right]   \right\|
	&\le
	4\vert\kappa\vert \nu \|\Psi^\dagger(x) \Psi(y) - \Psi^\dagger(y)\Psi(x)  \|\\
	&\le
	24\vert\kappa\vert \nu. 
\end{split}
\end{equation}
Next, we give a bound for the interaction part $H_\mathrm{int}^{(\Lambda)} = H_\mathrm{int}^{(\Lambda)}(0)$.
By a direct calculation, we have that for $x \neq y$
\begin{equation*}
\label{eq.Dint}
\begin{split}
	\sum_{a=1}^8&\left[S^{(a)}(x)S^{(a)}(y), S^{(3)}(x) \right]\\
	&=
	\sum_{a,b=1}^8\left[if_{a3b}S^{(b)}(x)S^{(a)}(y) \right]\\
	&=
	i\left[-2S^{(2)}(x)S^{(1)}(y) +2S^{(1)}(x)S^{(2)}(y)-S^{(5)}(x)S^{(4)}(y)\right.\\
	&\quad \quad \left.+S^{(4)}(x)S^{(5)}(y)+S^{(7)}(x)S^{(6)}(y)-S^{(6)}(x)S^{(7)}(y) \right],
\end{split}
\end{equation*}
where we have used (\ref{Scommu}) (see also Appendix~\ref{sect.SU(3)} for the value of $f_{abc}$).
Similarly, we have
\begin{equation*}
\label{eq.Dinte}
\begin{split}
	&\left[ S^{(3)}(x), \sum_{a=1}^8\left[S^{(a)}(x)S^{(a)}(y), S^{(3)}(x) \right]\right]\\
	&=
	-\left[4S^{(1)}(x)S^{(1)}(y) +4S^{(2)}(x)S^{(2)}(y)+S^{(4)}(x)S^{(4)}(y)\right.\\
	&\quad \quad \left.+S^{(5)}(x)S^{(5)}(y)+S^{(6)}(x)S^{(6)}(y)+S^{(7)}(x)S^{(7)}(y) \right],
\end{split}
\end{equation*}
and
\begin{equation*}
\label{eq.Dinter}
\begin{split}
	&\left[ S^{(3)}(y), \sum_{a=1}^8\left[S^{(a)}(x)S^{(a)}(y), S^{(3)}(x) \right]\right]\\
	&=
	\left[4S^{(1)}(x)S^{(1)}(y) +4S^{(2)}(x)S^{(2)}(y)+S^{(4)}(x)S^{(4)}(y)\right.\\
	&\quad \quad \left.+S^{(5)}(x)S^{(5)}(y)+S^{(6)}(x)S^{(6)}(y)+S^{(7)}(x)S^{(7)}(y) \right].
\end{split}
\end{equation*}
Since $[S^{(a)}(x), S^{(b)}(y)] =0$ for $x \neq y$, one has
\begin{equation}
\label{eq.Dintera}
\begin{split}
	&\left[\tilde S^{(3)}_{p}, \left[H_\mathrm{int}^{(\Lambda)}, \tilde S^{(3)}_{-p}\right]\right]\\
	&=
	\frac{g}{\vert\Lambda\vert}\sum_{\mu=1}^\nu\sum_{x \in \Lambda}
	\Big[S^{(3)}(p;x,x+e_\mu), 
	\sum_{a=1}^8\left[S^{(a)}(x)S^{(a)}(x+e_\mu), S^{(3)}(-p;x,x+e_\mu) \right]\Bigr]\\
	&=
	-2g\vert\Lambda\vert^{-1}\sum_{\mu=1}^\nu\left(1-\cos p^{(\mu)}\right)\\
	&\quad \times\sum_{x \in \Lambda}
	[4S^{(1)}(x)S^{(1)}(x+e_\mu) +4S^{(2)}(x)S^{(2)}(x+e_\mu)+S^{(4)}(x)S^{(4)}(x+e_\mu)\\
	&\hspace{15mm}\left.+S^{(5)}(x)S^{(5)}(x+e_\mu)+S^{(6)}(x)S^{(6)}(x+e_\mu)+S^{(7)}(x)S^{(7)}(x+e_\mu) \right],
\end{split}
\end{equation}
where we have written 
$$
S^{(3)}(p;x,x+e_\mu):=e^{ip\cdot x} S^{(3)}(x) +e^{ip\cdot (x+e_\mu)} S^{(3)}(x+e_\mu).
$$
Thus, Proposition~\ref{prop.rot} and (\ref{eq.indinv}) imply that
\begin{equation}
\label{eq.Dinteract}
\begin{split}
	&\left\langle\left[\tilde S^{(3)}_{p}, \left[H_\mathrm{int}^{(\Lambda)}, \tilde S^{(3)}_{-p}\right]\right]  
	\right\rangle_{\beta, 0}^{(\Lambda)}\\
	&=
	-\frac{24g}{\vert\Lambda\vert}\sum_{x \in \Lambda}
	\left\langle S^{(3)}(x)S^{(3)}(x+e_1)  \right\rangle_{\beta, 0}^{(\Lambda)} \sum_{\mu=1}^\nu\left(1-\cos p^{(\mu)}\right)\\
	&=-\frac{24gE_p}{\vert\Lambda\vert}\sum_{x \in \Lambda}
	\left\langle S^{(3)}(x)S^{(3)}(x+e_1)  \right\rangle_{\beta, 0}^{(\Lambda)}, 
\end{split}
\end{equation}
where we have used the expression (\ref{Ep}) of $E_p$. 
Combining (\ref{eq.doubleCK}) and (\ref{eq.Dinteract}) yields
\begin{align*}
C_p&=\langle [\tilde{S}_p^{(3)},[H^{(\Lambda)}(0),\tilde{S}_{-p}^{(3)}]]\rangle_{\beta,0}^{(\Lambda)}\\
&=\langle [\tilde{S}_p^{(3)},[H_K^{(\Lambda)},\tilde{S}_{-p}^{(3)}]]\rangle_{\beta,0}^{(\Lambda)}
+\langle [\tilde{S}_p^{(3)},[H_{\rm int}^{(\Lambda)},\tilde{S}_{-p}^{(3)}]]\rangle_{\beta,0}^{(\Lambda)}\\
&\le 24\vert\kappa\vert\nu +24gE_p\mathcal{E}_0^{(\Lambda)},
\end{align*}
where 
\[
\mathcal{E}_0^{(\Lambda)}:= -\vert\Lambda\vert^{-1}\sum_{x \in \Lambda}
\left\langle S^{(3)}(x)S^{(3)}(x+e_1)  \right\rangle_{\beta, 0}^{(\Lambda)}.
\]
We will show $\mathcal{E}_0^{(\Lambda)} \ge 0$ later. From this bound, one has 
\begin{equation}
\label{eq.Cpbound}
\begin{split}
	\sqrt{\frac{C_p}{2  g E_{p+Q}}}&\le 2\sqrt{3}
	\sqrt{\frac{\vert\kappa\vert\nu+gE_p\mathcal{E}_0^{(\Lambda)}}{gE_{p+Q}}}\\
	&\le {2} \sqrt{\frac{3\vert\kappa\vert \nu}{g E_{p+Q}}}
	+ {2\sqrt{3\mathcal{E}_0^{(\Lambda)}}} \sqrt{\frac{E_{p}}{E_{p+Q}}}.
\end{split}
\end{equation}
Substituting this into (\ref{Eq.LRO}), and taking the infinite-volume limit, we obtain
\begin{equation}
\label{eq.LROfin}
\begin{split}
	\mathcal{E}_0
	&:=
	\lim_{\Lambda \nearrow \mathbb Z^{\nu}}\mathcal{E}_0^{(\Lambda)}
	\le \frac{I_\nu}{2\beta g} + \sqrt{\frac{3\vert\kappa\vert \nu}{g} }J_\nu + \sqrt{3\mathcal{E}_0}K_\nu
	+\underbrace{\lim_{\Lambda \nearrow \mathbb Z^{\nu}}(m^{(\Lambda)}_\mathrm{LRO})^2}_{=: m_\mathrm{LRO}^2},
\end{split}
\end{equation}
with  $I_\nu$, $J_\nu$, and $K_\nu$ given by
\begin{align*}
I_\nu &:= \frac{1}{(2\pi)^\nu}\int_{[-\pi, \pi]^\nu} \frac{dp}{E_p}, \quad
J_\nu := \frac{1}{(2\pi)^\nu}\int_{[-\pi, \pi]^\nu} \frac{dp}{\sqrt{E_p}},\\
K_\nu &:=  \frac{1}{(2\pi)^\nu}\int_{[-\pi, \pi]^\nu} \, \frac{dp}{\nu} \sqrt{\frac{E_p}{E_{p+Q}}}
\left(-\sum_{\mu=1}^\nu\cos p^{(\mu)}\right)_+.
\end{align*}
Here, we have used (\ref{Eq.LROequi}), (\ref{MLROS3Q}) and 
$$
0\le \frac{1}{\nu}\left(-\sum_{\mu=1}^\nu\cos p^{(\mu)}\right)_+\le 1.
$$  
Since $I_\nu$ and $J_\nu$ are finite for $\nu \ge 3$, we can prove the existence of the long-range order, 
i.e., $m_\mathrm{LRO}>0$ in the infinite-volume limit if $\mathcal{E}_0$ satisfies
\begin{equation}
\label{eq.gsbound}
\sqrt{\mathcal{E}_0} \left(\sqrt{\mathcal{E}_0}- \sqrt{3}K_\nu\right) >0
\end{equation}
for $\beta$ and $g/\vert\kappa\vert$ both of which are sufficiently large . 

The lower bound for $\mathcal{E}_0$ can be obtained by thermodynamic considerations as in~\cite[Proof of (5.25)]{FILS}.
Using Proposition~\ref{prop.rot}, for any $b \neq 8$, we can find a unitary transformation $U_b$ 
satisfying $[U_b, H^{(\Lambda)}(0)]=0$ and $U_b^\dagger S^{(b)}(x) U_b = S^{(3)}$.
By relying on this property, we consider the following Hamiltonian:
\begin{equation}
\label{eq.ThHam}
\begin{split}
	\tilde{H}_{3,8}^{(\Lambda)}&:= H_K^{(\Lambda)} +7g\sum_{x \in \Lambda}\sum_{\mu=1}^\nu S^{(3)}(x) S^{(3)}(x+e_\mu) 
	+g\sum_{x \in \Lambda}\sum_{\mu=1}^\nu S^{(8)}(x) S^{(8)}(x+e_\mu).
\end{split}
\end{equation}
By $[U_b, H^{(\Lambda)}(0)]=0$ and the SU(3) rotational invariance of the expectation value, 
one has $\langle H^{(\Lambda)}(0) \rangle_{\beta, 0}=\langle \tilde{H}_{3,8}^{(\Lambda)}\rangle_{\beta, 0}$.
Further, using $\|H_K^{(\Lambda)}\| \le C\nu\vert\kappa\vert\vert\Lambda\vert$ with a positive constant $C$ and the spatial symmetry (\ref{eq.indinv}), 
we  have the upper bound
\begin{equation}
\label{eq.PHup}
\begin{split}
	\left\langle  -H^{(\Lambda)}(0)  \right\rangle_{\beta, 0}
	\le& C\nu\vert\kappa\vert\vert\Lambda\vert -7g\nu\sum_{x \in \Lambda} \left\langle S^{(3)}(x) S^{(3)}(x+e_1) \right\rangle_{\beta, 0}\\
	& -g\nu\sum_{x \in \Lambda}\left\langle S^{(8)}(x) S^{(8)}(x+e_1)\right\rangle_{\beta, 0}.
\end{split}
\end{equation}
To obtain the lower bound for $\mathcal{E}_0$, we use the following N\'eel state as a trial state:   
\begin{equation}
\Phi:=\Biggl[\;\prod_{x \in \Lambda_\mathrm{odd}}\psi_1^\dagger(x)\psi_2^\dagger(x)\Biggr]
\Biggl[\;\prod_{y \in \Lambda \backslash \Lambda_\mathrm{odd}} \psi_3^\dagger(y)\Biggr]\vert 0\rangle,
\end{equation}
where $\vert 0\rangle$ is the vacuum for fermions, namely $\psi_i(x) \vert 0\rangle =0 $ for all $i=1,2,3$ and $x \in \Lambda$.
We note that for any $x \in \Lambda$ and $\mu=1, \dots, \nu$
\begin{equation}
\label{eq.Sev}
\begin{split}
	\left\langle \Phi, S^{(a)}(x)S^{(a)}(x+e_\mu) \Phi\right\rangle 
	=
	\begin{cases} 0 & (a\neq 8);\\
		-\frac{4}{3} & (a=8).
	\end{cases}
\end{split}
\end{equation}
We rely on Peierls's inequality which is followed from the convexity (see, e.g., \cite[Proposition~2.5.4]{Ruelle}): 
\begin{lemma}[Peierls's inequality]
Let $A$ be a hermitian matrix and $\{\phi_i\}_i$ an orthonormal family.
Then it holds that
\[
\sum_i \exp\left[-\langle \phi_i, A \phi_i\rangle\right]\le 
\mathrm{Tr} \exp(-A)
\]
\end{lemma}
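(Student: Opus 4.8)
The plan is to derive the inequality from the convexity of the exponential function together with the positivity of $\exp(-A)$, in two steps. First I would reduce the statement to a term-by-term bound, and then I would establish that bound by Jensen's inequality applied to each vector $\phi_i$ separately.

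For the first step, I would complete the given orthonormal family $\{\phi_i\}_i$ to an orthonormal basis $\{\phi_i\}_i \cup \{\phi_j'\}_j$ of the underlying Hilbert space. Since $A$ is hermitian, $\exp(-A)$ is a positive operator, so every diagonal matrix element $\langle \phi_j', \exp(-A)\phi_j'\rangle$ is nonnegative. Evaluating the trace in this basis therefore gives
\[
\mathrm{Tr}\,\exp(-A) = \sum_i \langle \phi_i, \exp(-A)\phi_i\rangle + \sum_j \langle \phi_j', \exp(-A)\phi_j'\rangle \ge \sum_i \langle \phi_i, \exp(-A)\phi_i\rangle.
\]
Thus it suffices to prove, for each index $i$, the pointwise estimate $\exp[-\langle \phi_i, A\phi_i\rangle] \le \langle \phi_i, \exp(-A)\phi_i\rangle$.

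For the second step, I would fix $i$ and use the spectral decomposition $A = \sum_k a_k P_k$, where $a_k$ are the real eigenvalues of $A$ and $P_k$ the associated orthogonal projections. Setting $p_k := \langle \phi_i, P_k \phi_i\rangle \ge 0$, one has $\sum_k p_k = \|\phi_i\|^2 = 1$, so $(p_k)_k$ is a probability distribution, and moreover $\langle \phi_i, A\phi_i\rangle = \sum_k p_k a_k$ while $\langle \phi_i, \exp(-A)\phi_i\rangle = \sum_k p_k e^{-a_k}$. Since $t \mapsto e^{-t}$ is convex, Jensen's inequality yields
\[
\exp\left[-\sum_k p_k a_k\right] \le \sum_k p_k e^{-a_k},
\]
which is exactly the desired pointwise estimate. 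Summing over $i$ and combining with the first step completes the proof.

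There is no genuine obstacle here; the argument is a standard consequence of convexity. The only point requiring a little care is the reduction in the first step: one must complete the family to a full basis and exploit the positivity of $\exp(-A)$ to discard the extra diagonal contributions, since the family $\{\phi_i\}_i$ need not itself be a basis. The essential mechanism is simply the recognition that the squared overlaps $\langle \phi_i, P_k \phi_i\rangle$ form a probability distribution, turning the claim into a direct instance of Jensen's inequality for the convex function $e^{-t}$.
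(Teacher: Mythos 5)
Your proof is correct and is essentially the same argument the paper invokes: the paper does not spell out a proof but cites the standard convexity argument (Ruelle, Proposition 2.5.4), which is exactly your two steps — discard the nonnegative diagonal terms of $\exp(-A)$ after completing $\{\phi_i\}_i$ to a basis, then apply Jensen's inequality to the probability weights $\langle \phi_i, P_k\phi_i\rangle$ coming from the spectral decomposition of $A$.
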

Using this, (\ref{eq.Sev}) and $\langle \Phi,H_K^{(\Lambda)}\Phi\rangle=0$ for the free part $H_K^{(\Lambda)}$ of 
the present Hamiltonian, we obtain
\begin{align*}
\mathrm{Tr} \,\exp\left[-\beta H^{(\Lambda)}(0)\right]
&\ge
\exp\left(\left\langle \Phi, -\beta H^{(\Lambda)}(0)\Phi \right\rangle\right)\\
&\ge \exp\left[\frac{4}{3}\beta g\nu\vert\Lambda\vert \right].
\end{align*}
Hence
\begin{equation}
\label{eq.PHlow}
\ln \mathrm{Tr} \, \exp\left[-\beta H^{(\Lambda)}(0)\right]
\ge
\frac{4}{3}\beta g\nu\vert\Lambda\vert.
\end{equation}
By the principle of maximum entropy for the Gibbs states (see, e.g.,~\cite[p.~90]{BR}), the following formula holds:
\[
\ln \mathrm{Tr} \, \exp\left[-\beta H^{(\Lambda)}(0)\right]
= \langle -\beta H^{(\Lambda)}(0)\rangle_{\beta, 0}^{(\Lambda)} - \mathrm{Tr} \left[\rho \ln \rho\right],
\]
where $\rho := e^{-\beta H^{(\Lambda)}(0)}/Z_{\beta, 0}^{(\Lambda)}$.
By the concavity for the function $S: t \mapsto -t \ln t$,  we notice that for any $\sum_{j=1}^n \lambda _j=1$
\begin{align*}
-\frac{1}{n} \sum_{j=1}^n\lambda_j \ln \lambda_j
=
\frac{1}{n}\sum_{j=1}^nS(\lambda_j)
&\le
S\left(\sum_{j=1}^n\frac{\lambda_j}{n} \right)
=
-\left(\frac{1}{n} \right) \ln \left(\frac{1}{n} \right)=\frac{1}{n}\ln(n),
\end{align*}
which implies that
\[
-  \mathrm{Tr} \left[\rho \ln \rho\right]
\le \ln \mathrm{Tr}(1) =\ln 2^{3\vert\Lambda\vert}.
\]
Together with (\ref{eq.PHup}) and (\ref{eq.PHlow}), we arrive at
\begin{equation}
\label{eq.fin}
\begin{split}
	&-7g\sum_{x \in \Lambda}\left\langle S^{(3)}(x) S^{(3)}(x+e_1) \right\rangle_{\beta,0}^{(\Lambda)}
	-g\sum_{x \in \Lambda}\left\langle S^{(8)}(x) S^{(8)}(x+e_1) \right\rangle_{\beta,0}^{(\Lambda)}\\
	&\ge
	\left(\frac{4}{3}g- C \vert\kappa\vert\right) \vert \Lambda \vert - \frac{3}{\beta \nu} \vert \Lambda \vert\ln 2.
\end{split}
\end{equation}
Substituting (\ref{eq.S8}) into (\ref{eq.fin}) leads to
\begin{equation}
\label{eq.S1expv}
-\frac{8}{\vert\Lambda\vert}\sum_{x \in \Lambda}\left\langle S^{(3)}(x) S^{(3)}(x+e_1)\right\rangle_{\beta,0}^{(\Lambda)} 
\ge
\frac{4}{3}  -\frac{C \vert\kappa\vert}{g} -\frac{3}{\beta \nu g}\ln 2.
\end{equation}
This shows 
$$
\mathcal{E}_0^{(\Lambda)}=-\frac{1}{\vert\Lambda\vert}\sum_x\langle S^{(3)}(x) S^{(3)}(x+e_1) \rangle_{\beta, 0}^{(\Lambda)} 
\ge \frac{1}{6}-\varepsilon
$$ 
with a small positive $\varepsilon$ which depends on $\vert\kappa\vert/g$ and $1/(\beta g)$ both of which are sufficiently small.
In the spatial dimension $\nu =5$, the numerical value $K_5$ in (\ref{eq.LROfin}) is given by 
$K_5 = 0.2069...$ and thus $\sqrt{3} K_5 = 0.359...$.
Combining this with  (\ref{eq.LROfin}), the long-range order $m_\mathrm{LRO} >0$ exists 
for sufficiently large $\beta g$ and sufficiently small $\vert\kappa\vert/g$ since one has $1/\sqrt{6} = 0.4082...$.
Since $K_\nu$ is monotone decreasing~\cite{KLS2} in the spatial dimension $\nu$, the long-range order exists for $\nu \ge 5$.
This completes the proof. \qed

\section{Long-range order at zero temperature}
\label{LRO2}

In this section we prove the existence of long-range order in the ground states~\cite{Neves1986LongRO, KLS1, KLS2} 
for the spatial dimensions $\nu \ge 5$. 

For any observable $A$, the expectation value in the grand state can be defined as
\[\omega_0^{(\Lambda)}(A )
:=\lim_{\beta \to \infty} \langle  A  \rangle_{\beta,  0}^{(\Lambda)}
\]
Taking $\beta \to \infty$ for (\ref{Eq.LRO}), we have 
\begin{equation}
\sqrt{\mathcal{E}_{0,\infty}^{(\Lambda)}}\left(\sqrt{\mathcal{E}_{0,\infty}^{(\Lambda)}}-\sqrt{3}K_\nu^{(\Lambda)}\right)
\le \sqrt{\frac{{3\vert\kappa\vert \nu}}{g}} J_\nu^{(\Lambda)}
+ \frac{1}{\vert\Lambda\vert}\omega_0^{(\Lambda)}\left(\tilde S^{(3)}_{Q} \tilde S^{(3)}_{Q} \right),
\end{equation}
where we have used the sum rule (\ref{Eq.LROequi}), (\ref{eq.Cpbound}) and we have written 
\begin{equation}
\mathcal{E}_{0,\infty}^{(\Lambda)}:=-\frac{1}{\vert\Lambda\vert}\sum_{x\in\Lambda}\omega_0^{(\Lambda)}(S^{(3)}(x)S^{(3)}(x+e_\mu)),
\end{equation}
\begin{equation}
J_\nu^{(\Lambda)}:=\frac{1}{\vert\Lambda\vert} \sum_{p \neq Q} \sqrt{\frac{1}{E_{p+Q}}}
\end{equation}
and 
\begin{equation}
K_\nu^{(\Lambda)}:=\frac{1}{\vert\Lambda\vert}\sum_{p \neq Q} \sqrt{\frac{E_{p}}{E_{p+Q}}} 
\left( -\frac{1}{\nu}\sum_{\mu=1}^\nu\cos p^{(\mu)}\right)_+
\end{equation}

On the other hand, from (\ref{eq.S1expv}), we have
\begin{equation}
\mathcal{E}_{0,\infty}^{(\Lambda)}\ge {\frac{1}{6} -\frac{C \vert\kappa\vert}{8g}}. 
\end{equation}
Hence, in the infinite-volume limit, we obtain
\begin{align*}
m^2_\mathrm{GSLRO}
&:=
\lim_{\Lambda \nearrow \mathbb Z^\nu} \frac{\omega_0^{(\Lambda)}\left(\tilde S^{(3)}_{Q} \tilde S^{(3)}_{Q} \right)}{\vert\Lambda\vert}\\
&\ge\sqrt{\frac{1}{6} -\frac{C \vert\kappa\vert}{8g}}
\left(\sqrt{\frac{1}{6} -\frac{C \vert\kappa\vert}{8g}} - \sqrt{3}K_\nu\right) - \sqrt{\frac{3\vert\kappa\vert\nu}{g}} J_\nu.
\end{align*}
This proves the existence of the long-range order $m_\mathrm{GSLRO} >0$  in the spatial dimension $\nu \ge 5$ 
for sufficiently small $\vert\kappa\vert/g$ because of the bound $1/\sqrt{6}>\sqrt{3}K_\nu$. \qed

\section{The case of SU(2) symmetry}
\label{Sect.SU(2)}

Instead of SU(3) symmetry, we require SU(2) symmetry for the present Hamiltonian. 
Namely, it has two types of fermions, $\psi_1$ and $\psi_2$. Therefore, we write 
\begin{equation}
\Psi(x):=
\begin{pmatrix}
	\psi_1(x)  \\
	\psi_2(x)  \\
\end{pmatrix}
\end{equation}
for the site $x\in\Lambda$, and define only three operators as follows:  
\begin{equation}
S^{(a)}(x):=\Psi^\dagger(x)\lambda^{(a)}\Psi(x)\quad \mbox{for \ } a=1,2,3
\end{equation}
with the usual Pauli matrices. The interaction Hamiltonian is given by 
\begin{equation}
H_{\rm int}^{(\Lambda)}:=g\sum_{x \in \Lambda}\sum_{\mu=1}^\nu\sum_{a=1}^3S^{(a)}(x) S^{(a)}(x+e_\mu). 
\end{equation}
In this case, the same calculation as the case of SU(3) symmetry shows that 
the expectation value (\ref{eq.Dinteract}) of the double commutator is replaced by 
\begin{equation}
\left\langle\left[\tilde S^{(3)}_{p}, \left[H_\mathrm{int}^{(\Lambda)}, \tilde S^{(3)}_{-p}\right]\right]  
\right\rangle_{\beta, 0}^{(\Lambda)}
=-\frac{16gE_p}{\vert\Lambda\vert}\sum_{x \in \Lambda}
\left\langle S^{(3)}(x)S^{(3)}(x+e_1)  \right\rangle_{\beta, 0}^{(\Lambda)}.  
\end{equation}
The difference appears as the prefactor $16$ instead of $24$. Therefore, in the same way, the condition (\ref{eq.gsbound}) 
for the existence of the long-range order is replaced by 
\begin{equation}
\label{E0condition}
\sqrt{\mathcal{E}_0}>\sqrt{2}K_\nu. 
\end{equation}

On the other hand, in order to obtain a lower bound for $\mathcal{E}_0$, we use the usual N\'eel state, 
\begin{equation}
\Phi=\left[\prod_{x \in \Lambda_\mathrm{odd}}\psi_1^\dagger(x)\right]
\left[\prod_{x \in \Lambda \backslash \Lambda_\mathrm{odd}} \psi_2^\dagger(y)\right]\vert 0\rangle, 
\end{equation}
as a trial state. Then, in the same way as in the case of SU(3) symmetry, we have 
\begin{equation}
\mathcal{E}_0>\frac{1}{3}-\varepsilon, 
\end{equation}
where $\varepsilon$ is a small positive number which depends on the strength of the coupling constant $g/\vert\kappa\vert$ 
and the inverse temperature $\beta$. 
For a sufficiently large coupling constant $g/\vert\kappa\vert>0$ and a sufficiently large inverse temperature $\beta$,
the quantity $\varepsilon$ indeed takes a sufficiently small value. 
By combining this with the above condition (\ref{E0condition}), we have the condition for the existence of the long-range order, 
\begin{equation}
1>\sqrt{6}K_\nu, 
\end{equation}
for the strength of the coupling constant $g/\vert\kappa\vert>0$ and the inverse temperature $\beta$ both of which are 
sufficiently large. 
For the spatial dimension $\nu=3$, the numerical value is given by $K_3= 0.3498...$. 
One can easily check that this value satisfies the above bound. Thus, in the spatial dimension $\nu\ge 3$, 
there appears the long-range order for a sufficiently large coupling constant $g/\vert\kappa\vert>0$ 
and a sufficiently large inverse temperature $\beta$. This result also holds for the ground state 
as in the case of SU(3) symmetry.

\section{Nambu--Goldstone modes}
\label{sect.NG}
In this section we construct a low energy excitation above the infinite-volume ground state 
as in~\cite{LSM, Koma1, Koma2, Koma3, Koma4, Momoi}.
First of all, we introduce the infinite-volume ground state $\omega_m$.
From now on, $\mathfrak{A}_\Lambda$ denotes the algebra of observables on $\Lambda$.
When two finite lattices, $\Lambda_1$ and $\Lambda_2$, satisfy $\Lambda_1\subset\Lambda_2$, 
the algebra $\mathfrak{A}_{\Lambda_1}$ is embedded in $\mathfrak{A}_{\Lambda_2}$ by 
the tensor product $\mathfrak{A}_{\Lambda_1}\otimes I_{\Lambda_2\backslash\Lambda_1}\subset 
\mathfrak{A}_{\Lambda_2}$ with the identity $I_{\Lambda_2\backslash\Lambda_1}$. 
We define the set of local observables as
\[
\mathfrak{A}_\mathrm{loc}
:=
\bigcup_{\Lambda \subset \mathbb{Z}^\nu;\vert\Lambda\vert<\infty}\mathfrak{A}_\Lambda.
\]
Then the quasi-local algebra, $\mathfrak{A}$, is given by the norm completion of $\mathfrak{A}_\mathrm{loc}$ 
with respect to the operator norm.
Let $\{\Psi^{(\Lambda)}_i \}_{i=1}^d$ denotes the ground states for $H^{(\Lambda)}(m)$ with $d$-multiplicity.
As the zero temperature limit $\beta \to \infty$, the thermal expectation value 
of any local observable $A \in \mathfrak{A}_\Lambda$ converges to the expectation value in the ground states:
\[
\lim_{\beta \to \infty} \langle  A  \rangle_{\beta,  m}^{(\Lambda)}
=
\frac{1}{d}\sum_{i=1}^d \langle \Psi^{(\Lambda)}_i , A  \Psi^{(\Lambda)}_i  \rangle 
=:
\omega_m^{(\Lambda)}(A ).
\]
This fact follows from the standard argument as in (\ref{eq.infPart}) below.
For taking the infinite-volume limit $\Lambda\nearrow\mathbb{Z}^\nu$, 
the Banach--Alaoglu theorem allows us to take a subsequence such that $\omega_m^{(\Lambda)}$ converges to 
a state $\omega_m$ on $\mathfrak{A}$ in the sense of the weak* topology. 
In addition to taking the infinite-volume limit, we take the weak limit of the symmetry breaking field, i.e., $m \searrow 0$. 
Namely, we define the symmetry breaking infinite-volume ground state as
\begin{equation}
\label{eq.infGS}
\omega_0 (A):=\text{weak*-}\lim_{m \searrow 0} \text{weak*-}\lim_{\Lambda \nearrow \mathbb{Z}^\nu}\omega_m^{(\Lambda)}(A)
\end{equation}
for any quasi-local observable $A \in \mathfrak{A}$. Here, as usual, for a finite $\Lambda$ in the subsequence, we need to approximate 
$A$ by an observable $A^{(\Lambda)}\in\mathfrak{A}_\Lambda$, by relying on the operator norm topology.

To calculate the excitation energy above the ground state, we define the time evolution by  
\begin{equation}
\label{alphatmA}
\alpha_{t, m}^{(\Lambda)}(A) := \exp[it H^{(\Lambda)}(m)]A \exp[-itH^{(\Lambda)}(m)]
\end{equation} 
for a local operator $A \in \mathfrak{A}_\Lambda$, and
\begin{equation}
\label{eq.timelocal}
\tau_{f, m}^{(\Lambda)}(A) := \int_{-\infty}^\infty dt\, \hat f(t) \alpha_{t, m}^{(\Lambda)}(A),
\end{equation}
where $ \hat f$ is the Fourier transform of the function $f \in C^\infty(\mathbb{R})$ 
and satisfies $\mathrm{supp}\,  f \subset (0, \gamma)$ with a positive constant $\gamma >0$.
Note that $\hat f$ is a rapidly decreasing function.
Then the following limit exists (see, e.g.,~\cite[Proposition~6.2.3]{BR}):
\[
\lim_{\Lambda \to \mathbb{Z}^\nu}\left\|\alpha_{t, m}^{(\Lambda)}(A)- \alpha_{t, m}(A) \right\| =0
\]
for all  $A \in \mathfrak{A}$, and uniformly for $t$ in a compact set.
In addition, we define
\begin{equation}
\label{eq.limtime}
\tau_{f,0}(A) :=\lim_{m \searrow 0}\lim_{\Lambda \nearrow \mathbb{Z}^\nu}\tau_{f, m}^{(\Lambda)}(A).
\end{equation}
(See Appendix~B of \cite{Koma2} about the existence of the limit $m\searrow 0$.) 
Then the infinite-volume ground state $\omega_0$ has a gapless excitation~\cite{AL} if there is a suitable local operator $A$ such that, for any given $\varepsilon >0$, 
the following holds: 
$$
\omega_0\left(\tau_{f, 0}(A)^\ast \tau_{f, 0}(A)\right) >0
$$ 
and
\begin{equation}
\label{eq.gapless}
\lim_{\Lambda \nearrow \mathbb{Z}^\nu}
\frac{\omega_0\left(\tau_{f, 0}(A)^\ast[\mathcal{H}^{(\Lambda)}(0), \tau_{f, 0}(A)]\right)}{\omega_0\left(\tau_{f, 0}(A)^\ast 
	\tau_{f, 0}(A)\right)}\le \varepsilon,
	\end{equation}
	where we have written $\mathcal{H}^{(\Lambda)}(m) := H^{(\Lambda)}(m) - E^{(\Lambda)}(m)$. 
	
	Now we explain why this condition means the existence of a gapless excitation.
	By GNS representation \cite{BRI}, there exist a Hilbert space $\mathbb{H}_\omega$, a normalized vector $\Omega_\omega$, 
	and a representation $\pi_\omega$ of the observables on $\mathbb{H}_\omega$ such that for any $A \in \mathfrak{A}$
	\[
	\omega_0(A)=(\Omega_\omega, \pi_\omega(A)\Omega_\omega).
	\]
	Moreover, there exists a self-adjoint operator $H_\omega \ge 0$ corresponding to 
	the infinite volume limit of $\mathcal{H}^{(\Lambda)}(0)$ 
	such that
	\[
	H_\omega \Omega_\omega =0, \quad \pi_\omega(\alpha_t(A))=\exp[itH_\omega]\pi_\omega(A)\exp[-itH_\omega].
	\]
	The corresponding state vector in the condition (\ref{eq.gapless}) is written 
	\begin{equation}
\label{eq.excstate}
\begin{split}
	\pi_\omega(\tau_f(A)) \Omega_\omega
	=
	\int_{-\infty}^\infty dt\, \hat f(t)  e^{itH_\omega}\pi_\omega(A) \Omega_\omega.
\end{split}
\end{equation}
Let us consider the spectral decomposition for the Hamiltonian $H_\omega$. 
We write $E_\omega$ for the specral parameter (i.e., energy) for the Hamiltonian $H_\omega$. Then, one has 
\begin{equation}
\int_{-\infty}^\infty dt\, \hat f(t)  e^{itE_\omega}=f(E_\omega). 
\end{equation}
{From} the assumption $\mathrm{supp}\,  f \subset (0, \gamma)$ for the function $f$, these observations imply that 
the state vector $\pi_\omega(\tau_f(A)) \Omega_\omega$ has the energy range  
which is strictly greater than zero unless its norm is zero. 
Therefore, it is an excitation state above the ground state $\Omega_\omega$. Further,  
the condition (\ref{eq.gapless}) is written 
\[
\frac{(\pi_\omega(\tau_{f,0}(A)) \Omega_\omega, H_\omega\pi(\tau_{f,0}(A)) \Omega_\omega) }{(\pi_\omega(\tau_{f,0}(A)) \Omega_\omega, 
\pi_\omega(\tau_{f, 0}(A)) \Omega_\omega) }
\le \varepsilon,
\]
This implies the existence of a gapless excitation above the ground state 
because we can take $\varepsilon$ to be any small value.

As a preparation for the construction of the gapless excitation, we recall the expression (\ref{Hinth}) of 
the interaction part $H_{\mathrm{int}}^{(\Lambda)}(h)$ of the total Hamiltonian $H^{(\Lambda)}(m,h)$ 
and the Gaussian domination bound (\ref{GaussianDbund}). The function $h$ is contained only in 
the Hamiltonian $H_{\mathrm{int}, \mu, 3}^{(\Lambda)}(h)$ of (\ref{eq.int}).  
The explicit form is given by 
\[
H_{\mathrm{int}, \mu, 3}^{(\Lambda)}(h)=
\frac{g}{2} \sum_{x \in \Lambda}
[S^{(3)}(x) +S^{(3)}(x+e_\mu) + (-1)^{\sum_{j=1}^\nu x^{(j)}} h^{(\mu)}(x)]^2.
\] 
For a given real-valued function $h'$ on $\Lambda$, we choose $h^{(1)}(x)$ so as to satisfy 
\begin{equation}
h'(x)=(-1)^{\sum_{j=1}^\nu x^{(j)}} h^{(1)}(x). 
\end{equation}
Namely, we choose 
$$
h^{(1)}(x)=(-1)^{\sum_{j=1}^\nu x^{(j)}} h'(x).
$$
We also choose $h^{(\mu)}=0$ for $\mu=2,\ldots,\nu$. 
Then, we have 
\[
\sum_{\mu=1}^\nu H_{\mathrm{int}, \mu, 3}^{(\Lambda)}(h)=
\frac{g}{2} \sum_{x \in \Lambda}\sum_{\mu=1}^\nu [S^{(3)}(x) +S^{(3)}(x+e_\mu)]^2 
+ g S^{(3)}[\tilde{h}'] +\frac{g}{2} \sum_{x\in\Lambda} [h'(x)]^2,
\] 
where we have written 
\begin{equation}
\label{tildeh'}
\tilde{h}'(x):=h'(x)+h'(x-e_1) 
\end{equation}
and 
\[
S^{(3)}[\tilde{h}']:=\sum_{x \in \Lambda}S^{(3)}(x)\tilde{h}'(x).
\]
By combining this expression of the Hamiltonian and the Gaussian domination bound (\ref{GaussianDbund}), 
we find that 
\begin{equation}
\label{eq.Duhamel}
( S^{(3)}[\tilde{h}'],  S^{(3)}[\tilde{h}'])_{\beta,m}
\le
\frac{1}{\beta g}\sum_{x \in \Lambda}|h'(x)|^2
\end{equation}
in the same way as in the argument of the proof of (\ref{Eq.infrared}), 
where $(A, B)_{\beta, m}$ is the Duhamel two point function given in (\ref{eq.Duhameldef}). 

Let us consider the limit $\beta\to\infty$. 
Since $\langle S^{(3)}(x)\rangle_{\beta, m}^{(\Lambda)}=0$ by (\ref{eq.ph}), we also have
\begin{equation}
\label{GSexpecS3}
\omega_m^{(\Lambda)}(S^{(3)}(x))=0
\end{equation}
for the ground state $\omega_m^{(\Lambda)}$. We use the following well-known relation: 
\begin{equation}
\label{eq.infDuhamel}
\lim_{\beta \to \infty}\beta ( S^{(3)}[\tilde{h}'],  S^{(3)}[\tilde{h}'])_{\beta, m}
\ge 
\omega_m^{(\Lambda)}(S^{(3)}[\tilde{h}'] \mathcal{H}^{(\Lambda)}(m)^{-1}(1-P_0)S^{(3)}[\tilde{h}']),
\end{equation}
where $P_0$ denotes the projection onto the ground-state sector for $\mathcal{H}^{(\Lambda)}(m)$. 
The derivation is given in Appendix~\ref{Appendix:eq.DuhamelEq}. 
By combining this with the above (\ref{eq.Duhamel}), we have 
\begin{equation}
\label{eq.gsinfrared}
\omega_m^{(\Lambda)}(S^{(3)}[\tilde{h}'] \mathcal{H}^{(\Lambda)}(m)^{-1}(1-P_0)S^{(3)}[\tilde{h}'])
\le
\frac{1}{g}\sum_{x \in \Lambda}|h'(x)|^2.
\end{equation}

To construct a trial state, we use a local operator $A_R$ defined by
\begin{equation}
\label{eq.localOp}
A_R:=\frac{1}{\vert \Omega_R\vert}\sum_{x\in \Omega_R}(-1)^{x^{(1)}+\cdots +x^{(\nu)}}S^{(1)}(x),
\end{equation}
where $\Omega_R :=[-R+1, R]^\nu \subset \mathbb{Z}^\nu$ is a hypercubic lattice with given $R  \in \mathbb{Z}_{>0}$.
Then we consider a trial state as in~\cite{Koma3, Koma4}
\begin{equation}
\label{eq.trial}
\phi_{m, \epsilon, R}^{(\Lambda)}(B)
:=
\frac{\omega_m^{(\Lambda)}(A_R [\mathcal{H}^{(\Lambda)}(m)^{\epsilon/2}] B  [\mathcal{H}^{(\Lambda)}(m)^{\epsilon/2}] A_R)}
{\omega_m^{(\Lambda)}(A_R  \mathcal{H}^{(\Lambda)}(m)^{\epsilon} A_R)}
\end{equation}
for an observale $B$, where $\epsilon>0$ is a small parameter.
If there appears a small excitation energy above the ground state energy, 
then its contribution is eliminated by the factor $\mathcal{H}^{(\Lambda)}(m)^{\epsilon/2}$ in (\ref{eq.trial}).
Namely, it allows us to remove the contributions of the undesired low-lying states 
which are degenerate to the sector of the ground state in the infinite-volume limit.
The energy expectation value we will investigate is 
\begin{equation}
\label{eq.trialEn}
\phi_{m, \epsilon, R}^{(\Lambda)}(\mathcal{H}^{(\Lambda)}(m))
=
\frac{\omega_m^{(\Lambda)}(A_R \mathcal{H}^{(\Lambda)}(m)^{1+\epsilon}  A_R)}
{\omega_m^{(\Lambda)}(A_R  \mathcal{H}^{(\Lambda)}(m)^{\epsilon} A_R)}.
\end{equation}

\subsection{Lower bound on the denominator}

Our first task is to estimate the denominator of the right-hand side in (\ref{eq.trial}).
We need the following Kennedy--Lieb--Shastry type inequality~\cite{KLS1}. (For the proof, see \cite[Lem.~5.1]{Koma3}.)
\begin{lemma}
\label{lem.KLS}
Let $\epsilon >0$ be a positive small parameter.
Then, for any operators $A, B$ on $\Lambda$, it follows that
\begin{equation}
	\label{eq.KLS}
	\begin{split}
		\left\vert \omega_{m}^{(\Lambda)}([B, A]) \right\vert^2
		&\le
		\sqrt{C_m(B)}\sqrt{\chi(\epsilon) \omega_m^{(\Lambda)}(\{B, B^\dagger\}) 
			+\omega_m^{(\Lambda)}\left(\left[\left[B^\dagger, \mathcal{H}^{(\Lambda)}(m)\right], B \right] \right)} \\
		&\quad \times \left[\omega_m^{(\Lambda)}\left(A \left[ \mathcal{H}^{(\Lambda)}(m) \right]^\epsilon A^\dagger \right) 
		+\omega_m^{(\Lambda)}\left(A^\dagger \left[ \mathcal{H}^{(\Lambda)}(m) \right]^\epsilon A\right) \right],
	\end{split}
\end{equation}
where
\begin{equation}
	\label{eq.defKLS}
	C_m(B):=
	\omega_m^{(\Lambda)}\left(B(1-P_0)\mathcal{H}^{(\Lambda)}(m)^{-1} B^\dagger \right) 
	+\omega_m^{(\Lambda)}\left(B^\dagger(1-P_0) \mathcal{H}^{(\Lambda)}(m)^{-1} B\right).
\end{equation}
Here $\chi(\epsilon)$ is a positive function obeying $\chi(\epsilon) \to 0$ as $\epsilon \to 0$, 
and $P_0$ is the projection onto the sector of the ground states of $\mathcal{H}^{(\Lambda)}(m)$.
\end{lemma}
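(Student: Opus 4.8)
The plan is to establish (\ref{eq.KLS}) by a spectral decomposition of the reduced Hamiltonian, followed by two successive Cauchy--Schwarz inequalities and one elementary interpolation bound. Throughout, abbreviate $\mathcal{H} := \mathcal{H}^{(\Lambda)}(m)$ and $\omega := \omega_m^{(\Lambda)}$. Since $\Lambda$ is finite, $\mathcal{H}$ has purely discrete spectrum; fix an orthonormal eigenbasis $\{\varphi_n\}$ with eigenvalues $\mathcal{E}_n \ge 0$, let $\{\varphi_i\}_{i=1}^d$ span the ground-state sector (so $\mathcal{E}_i = 0$ and $P_0 = \sum_{i=1}^d \lvert\varphi_i\rangle\langle\varphi_i\rvert$), and recall $\omega(X) = \tfrac{1}{d}\sum_{i=1}^d\langle\varphi_i, X\varphi_i\rangle$. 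Writing $X_{kl} := \langle\varphi_k, X\varphi_l\rangle$ and inserting the resolution of the identity, the first observation is that the part of $\omega([B,A])$ with intermediate states inside the ground-state sector vanishes: it equals $\tfrac{1}{d}\operatorname{Tr}\!\left(P_0[P_0 B P_0, P_0 A P_0]\right)$, the trace of a commutator of operators on the finite-dimensional ground-state sector. Hence only intermediate states with $\mathcal{E}_n > 0$ contribute, which is precisely what makes the negative powers $\mathcal{H}^{-1}$ and $\mathcal{H}^{-\epsilon}$ (always understood with the projection $1-P_0$) legitimate.

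Next I would apply Cauchy--Schwarz to the surviving double sum
\[
\omega([B,A]) = \frac{1}{d}\sum_{i\le d,\ \mathcal{E}_n>0}\left(B_{in}A_{ni} - A_{in}B_{ni}\right),
\]
inserting the weight $\mathcal{E}_n^{\pm\epsilon/2}$ and pairing $\mathcal{E}_n^{-\epsilon/2}$ with the $B$-elements and $\mathcal{E}_n^{+\epsilon/2}$ with the $A$-elements. The $A$-factors reassemble exactly into $\omega(A\,[\mathcal{H}]^{\epsilon} A^\dagger)$ and $\omega(A^\dagger[\mathcal{H}]^{\epsilon} A)$ (the power $\epsilon$ annihilates the ground sector on its own), while the $B$-factors produce
\[
\tilde C_\epsilon(B) := \omega\!\left(B(1-P_0)[\mathcal{H}]^{-\epsilon}B^\dagger\right) + \omega\!\left(B^\dagger(1-P_0)[\mathcal{H}]^{-\epsilon}B\right).
\]
Bounding the two resulting cross terms by the elementary inequality $(\sqrt{a_1 b_1}+\sqrt{a_2 b_2})^2 \le (a_1+a_2)(b_1+b_2)$ yields
\[
\left\lvert\omega([B,A])\right\rvert^2 \le \tilde C_\epsilon(B)\,\bigl[\omega(A\,[\mathcal{H}]^{\epsilon} A^\dagger) + \omega(A^\dagger[\mathcal{H}]^{\epsilon} A)\bigr],
\]
which already reproduces the $A$-dependent factor of (\ref{eq.KLS}).

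It then remains to dominate $\tilde C_\epsilon(B)$, which carries the intermediate weight $\mathcal{E}_n^{-\epsilon}$, by $\sqrt{C_m(B)}$ times the square root of the middle factor in (\ref{eq.KLS}). Splitting the weight as $\mathcal{E}_n^{-\epsilon} = \mathcal{E}_n^{-1/2}\cdot\mathcal{E}_n^{1/2-\epsilon}$ and invoking Cauchy--Schwarz a second time factorizes $\tilde C_\epsilon(B)$ into the part with weight $\mathcal{E}_n^{-1}$, which is exactly $C_m(B)$ of (\ref{eq.defKLS}), and a part with weight $\mathcal{E}_n^{1-2\epsilon}$. The latter is controlled by the pointwise bound $t^{1-2\epsilon} \le \chi(\epsilon) + t$ for all $t\ge 0$, whose sharp constant is $\chi(\epsilon) = \tfrac{2\epsilon}{1-2\epsilon}(1-2\epsilon)^{1/(2\epsilon)}$: the weight-$1$ piece is then dominated by $\omega(\{B,B^\dagger\})$ (which harmlessly also contains the ground-sector terms), and the weight-$\mathcal{E}_n$ piece equals the double commutator $\omega([[B^\dagger,\mathcal{H}],B])$, a direct computation using $\mathcal{H}P_0 = 0$. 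Assembling the three steps gives (\ref{eq.KLS}).

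The genuinely non-routine ingredient, and the point I expect to be the crux, is the interpolation inequality $t^{1-2\epsilon}\le\chi(\epsilon)+t$ together with the fact that its sharp constant satisfies $\chi(\epsilon)\to 0$ as $\epsilon\searrow 0$ (since $(1-2\epsilon)^{1/(2\epsilon)}\to e^{-1}$ while $\tfrac{2\epsilon}{1-2\epsilon}\to 0$). This is exactly the mechanism that permits trading the divergent energy weight $\mathcal{E}_n$ for the small regularizing power $\mathcal{E}_n^{\epsilon}$ on the $A$-side at the cost of only the vanishing penalty $\chi(\epsilon)\,\omega(\{B,B^\dagger\})$, which is what makes the lemma useful in the gapless-mode construction of Section~\ref{sect.NG} (there $\omega(A_R[\mathcal{H}]^{\epsilon}A_R)$ must stay bounded, whereas $\omega(A_R\,\mathcal{H}\,A_R)$ would not). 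The only remaining care is bookkeeping: tracking $B$ versus $B^\dagger$ matrix elements through both Cauchy--Schwarz steps so that the symmetric combinations $C_m(B)$, $\{B,B^\dagger\}$ and the double commutator assemble with the correct coefficients.
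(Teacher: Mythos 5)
Your proof is correct and follows essentially the same route as the proof the paper invokes (it cites \cite[Lem.~5.1]{Koma3}): spectral decomposition with the ground-sector contribution cancelling as a trace of a commutator, a first Cauchy--Schwarz with weights $\mathcal{E}_n^{\pm\epsilon/2}$ producing the $A$-factor, a second Cauchy--Schwarz splitting $\mathcal{E}_n^{-\epsilon}=\mathcal{E}_n^{-1/2}\mathcal{E}_n^{1/2-\epsilon}$ to isolate $C_m(B)$, and the interpolation $t^{1-2\epsilon}\le\chi(\epsilon)+t$ with $\chi(\epsilon)\to 0$ yielding the anticommutator and double-commutator terms via $\mathcal{H}^{(\Lambda)}(m)P_0=0$. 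All the bookkeeping (including the sharp constant $\chi(\epsilon)=\tfrac{2\epsilon}{1-2\epsilon}(1-2\epsilon)^{1/(2\epsilon)}$ and the use of the averaged ground-state expectation, which is what makes the ground-sector cancellation valid) checks out.
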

We choose $A = A_R$ and $B=S^{(3)}[\tilde h']$ with $\tilde{h}'(x):=h'(x)+h'(x-e_1)$ of (\ref{tildeh'}). 
We also choose 
\begin{equation*}
h'(x):=
\begin{cases}
	1 & (x \in \Omega_{R+1}) \\
	1-\frac{|x|_\infty - (R+1)}{R} & (x \in \Omega_{2R}\backslash \Omega_{R+1})\\
	0 & (\text{otherwise}),
\end{cases}
\end{equation*}
where $|x|_\infty := \max_{1\le j\le \nu}|x^{(j)}|$.
Then, using
\[
S^{(3)}[\tilde h']
=
\sum_{x \in \Omega_{2R}} S^{(3)}(x) h'(x) +
\sum_{x \in \Omega_{2R}} S^{(3)}(x) h'(x-e_1),
\]
we have
\begin{equation}
\label{eq.CR}
\begin{split}
	\left[S^{(3)}[\tilde h'], A_R \right]
	&=
	\frac{2}{\vert \Omega_R\vert}\sum_{x \in \Omega_R}(-1)^{x^{(1)}+ \cdots + x^{(\nu)}}\left[S^{(3)}(x), S^{(1)}(x) \right]\\
	&=
	\frac{4i}{\vert \Omega_R\vert}\sum_{x \in \Omega_R}(-1)^{x^{(1)}+ \cdots + x^{(\nu)}}S^{(2)}(x).
\end{split}
\end{equation}
The bound (\ref{eq.gsinfrared}) now becomes
\begin{equation}
\label{eq.GSIB}
\omega_m^{(\Lambda)}(S^{(3)}[\tilde{h}'] \mathcal{H}^{(\Lambda)}(m)^{-1}(1-P_0)S^{(3)}[\tilde{h}'])
\le
\frac{(4R)^\nu}{g}
\end{equation}
To treat the double commutator in (\ref{eq.KLS}), we show the following lemma.

\begin{lemma}
\label{lem.DC}
There are positive constants $C_1$, $C_2$ such that
\[
\left\|
\left[\left[S^{(3)}[\tilde{h}'],  H^{(\Lambda)}(m)\right],  S^{(3)}[\tilde{h}']\right] 
\right\|
\le
C_1R^{\nu-2}+C_2\vert m\vert R^\nu.
\]
\end{lemma}
\begin{proof}
First, we estimate the hopping term because the Hamiltonian $H^{(\Lambda)}(m)$ has the three terms.
We write $\tilde{h'}(x) = h'(x) + h'(x-e_1)$ and
\[
H_{\mathrm{hop}, j}=i\kappa \sum_{x \in \Lambda}\sum_{\mu=1}^\nu (-1)^{\theta_\mu(x)}
\left[\psi_j^\dagger(x)\psi_j(x+e_\mu) -\mathrm{h.c.} \right]
\]
for short.
Since $S^{(3)}(x) = n_1(x) - n_2(x)$ and $[n_i(x), H_{\mathrm{hop}, j}] = 0$ for $i \neq j$, 
it is enough to deal with 
\begin{equation}
	\label{eq.Dkin}
	\left[\left[S^{(3)}[\tilde h'],  H_{\mathrm{hop}, j}\right],  S^{(3)}[\tilde h']\right] 
	=
	\sum_{x, y \in \Omega_{2R}}
	\left[\left[\tilde{h'}(x)n_j(x),  H_{\mathrm{hop}, j}\right],  \tilde{h'}(y)n_j(y)\right] 
\end{equation}
for $j=1,2$. By the anti-commutation relations, we have 
\begin{equation*}
	\label{eq.ACR}
	\left[\psi_i^\dagger(x)\psi_i(y), n_i(x) + n_i(y) \right]=0
\end{equation*}
for any $i$.
Besides, we note that $[[X, Y], Z] = 0$ if $[X, Z]=0 =[Y, Z]$ and for any $G$
\begin{equation}
	\label{eq.change}
	\tilde{h'}(x)G(x) + \tilde{h'}(y)G(y)
	=
	\tilde{h'}(x)(G(x) + G(y))
	+(\tilde{h'}(y) - \tilde{h'}(x))G(y).
\end{equation}
These relations imply that
\begin{equation}
	\label{eq.DCKin}
	\begin{split}
		&\left[\left[\tilde{h'}(x)n_i(x)+\tilde{h'}(y)n_i(y), \psi_i^\dagger(x) \psi_i(y)
		\right], \tilde{h'}(x)n_i(x) + \tilde{h'}(y)n_i(y)
		\right]\\
		&=
		\left[\left[(\tilde{h'}(y)-\tilde{h'}(x))n_i(y), \psi_i^\dagger(x) \psi_i(y)
		\right], \tilde{h'}(x)n_i(x) + \tilde{h'}(y)n_i(y)
		\right]\\
		&=
		\left[\left[(\tilde{h'}(y)-\tilde{h'}(x))n_i(y), \psi_i^\dagger(x) \psi_i(y)
		\right],  (\tilde{h'}(y)-\tilde{h'}(x))n_i(y)
		\right]\\
		&=
		\vert\tilde{h'}(y)-\tilde{h'}(x)\vert^2
		\left[\left[n_i(y), \psi_i^\dagger(x) \psi_i(y)
		\right], n_i(y)
		\right]
	\end{split}
\end{equation}
Using  $\vert\tilde{h'}(y)-\tilde{h'}(x)\vert \le 2R^{-1}$ for $\vert x-y\vert =1$, the bound (\ref{eq.DCKin}) and (\ref{eq.Dkin}) lead to
\begin{equation}
	\label{eq.DCkine}
	\left\|
	\left[\left[S^{(3)}[\tilde h'],  H_{K}^{(\Lambda)}\right],  S^{(3)}[\tilde h']\right] 
	\right\|
	\le
	\frac{{\rm Const.}\times \nu \vert\kappa\vert(4R)^\nu}{R^2}.
\end{equation}

Next, consider the case for the interaction Hamiltonian $H^{(\Lambda)}_{\mathrm{int}}$. 
Similarly, using (\ref{eq.change}), we have
\begin{equation*}
	\begin{split}
		&\left\vert\left[\left[\tilde{h'}(x)S^{(3)}(x)+\tilde{h'}(y)S^{(3)}(y),  \sum_{a=1}^8 S^{(a)}(x)S^{(a)}(y)\right],  \tilde{h'}(x)S^{(3)}(x)+\tilde{h'}(y)S^{(3)}(y)\right]\right\vert\\
		&=
		\left\vert \tilde{h'}(y) - \tilde{h'}(x) \right\vert^2
		\left\vert \left[\left[S^{(3)}(y),  \sum_{a=1}^8 S^{(a)}(x)S^{(a)}(y)\right], S^{(3)}(y)\right] \right\vert\\
		&=
		\left\vert \tilde{h'}(y) - \tilde{h'}(x) \right\vert^2
		\left\vert 4S^{(1)}(x)S^{(1)}(y)+4S^{(2)}(x)S^{(2)}(y)+\sum_{a=4,5,6,7}S^{(a)}(x)S^{(a)}(y) \right\vert,
	\end{split}
\end{equation*}
where we have used (\ref{eq.B2}) for the first equality, and the commutation relations (\ref{Scommu}).
Therefore, we obtain
\begin{equation}
	\label{eq.DCint}
	\left\| \left[ \left[S^{(3)}[\tilde{h}'],  H^{(\Lambda)}_{\mathrm{int}}\right],  S^{(3)}[\tilde{h}']\right] \right\|
	\le
	\frac{{\rm Const.}\times g\nu (4R)^\nu}{R^2}.
\end{equation}

Finally, for the order parameter, a similar reasoning as above shows that
\[
\left\|
\left[\left[S^{(3)}[\tilde h'],  mO^{(\Lambda)}\right],  S^{(3)}[\tilde h']\right]
\right\|
\le
{\rm Const.}\times m(4R)^\nu.
\]
Combining this with (\ref{eq.DCkine}) and (\ref{eq.DCint}), we have the desired result.
\end{proof}

By combining (\ref{eq.GSIB}), Lemma~\ref{lem.KLS}, and Lemma~\ref{lem.DC}, we obtain the following bound:
\begin{equation}
\label{eq.den}
\begin{split}
	\left\vert \omega_{m}^{(\Lambda)}\left(\left[S^{(3)}[\tilde h'], A_R\right]\right)  \right\vert^2
	&\le
	\sqrt{\frac{(4R)^\nu}{2g}}\sqrt{8(4R)^{2\nu}\chi(\epsilon)  +C_1R^{\nu-2}+C_2\vert m\vert R^\nu} \\
	&\quad \times 2\omega_m^{(\Lambda)}\left(A_R \left[ \mathcal{H}^{(\Lambda)}(m) \right]^\epsilon A_R\right),
\end{split}
\end{equation}
where we have used $\omega_m^{(\Lambda)}(\{B, B^\dagger\}) \le 2\|B\|^2$ and $\|S^{(3)}[\tilde h']\| \le 2(4R)^\nu$. 

Now we want to show that the left-hand side on (\ref{eq.den}) can be written by the staggered magnetization.
Let $T_\mu$ be the lattice shift transformation defined by $T_\mu(\psi_i(x)) = \psi_i(x+ 2e_\mu)$ 
and $T_\mu(\psi^\dagger_i(x)) = \psi^\dagger_i(x+ 2e_\mu)$ for any $\mu = 1, \dots, \nu$.
By the periodic boundary condition on the lattice, we obtain
\begin{align*}
\langle  S^{(2)}(x) \rangle_{\beta, m}^{(\Lambda)}
&=
\frac{1}{Z_{\beta, m}^{(\Lambda)}}\mathrm{Tr}\left[ S^{(2)}(x+2e_\mu) \exp(- \beta T_\mu H^{(\Lambda)}(m)) \right].
\end{align*}
Recalling (\ref{eq.BC}) and (\ref{eq.BCtr}), the unitary transformation $U_{\mathrm{BC}, \mu}(L \to l)$, which does not change $S^{(2)}(x)$, allows us to change the boundary conditions of the Hamiltonian $T_\mu H^{(\Lambda)}(m)$.
Therefore, we have
\begin{equation}
\label{eq.2transinv}
\langle  S^{(2)}(x) \rangle_{\beta, m}^{(\Lambda)}
=
\langle  S^{(2)}(x+2e_\mu) \rangle_{\beta, m}^{(\Lambda)}
\end{equation}
for any $\mu$, and hence $\omega_{m}^{(\Lambda)}\left(S^{(2)}(x)\right)$ is also translation invariant with period $2$.
This translational invariance tells us that
\begin{equation}
\label{eq.finitemagnetization}
\begin{split}
	m_s^{(\Lambda)}(m) &:=
	\frac{1}{\vert\Lambda\vert}\sum_{x \in \Lambda}(-1)^{x^{(1)}+ \cdots + x^{(\nu)}}\omega_{m}^{(\Lambda)}\left(S^{(2)}(x)\right)\\
	&=
	\frac{1}{\vert \Omega_R\vert}\sum_{x \in \Omega_R}(-1)^{x^{(1)}+ \cdots + x^{(\nu)}}\omega_{m}^{(\Lambda)}\left(S^{(2)}(x)\right)
\end{split}
\end{equation}
for any positive integer $R$.
Then (\ref{eq.CR}) and (\ref{eq.den}) lead to
\begin{equation}
\label{eq.deno}
\begin{split}
	\frac{8\left\vert m_s^{(\Lambda)}(m)   \right\vert^2}
	{\sqrt{{(4R)^\nu}/{2g}}\sqrt{8(4R)^{2\nu}\chi(\epsilon)  +C_1R^{\nu-2}+C_2\vert m\vert R^\nu}}
	&\le
	\omega_m^{(\Lambda)}\left(A_R \left[ \mathcal{H}^{(\Lambda)}(m) \right]^\epsilon A_R\right),
\end{split}
\end{equation}
The spontaneous magnetization $m_s$ in the infinie-volume is given by
\[
m_s:=\lim_{m \searrow 0} \lim_{\Lambda \nearrow \mathbb{Z}^\nu}m_s^{(\Lambda)}(m) 
=
\frac{1}{\vert \Omega_R\vert}\sum_{x \in \Omega_R}(-1)^{x^{(1)}+ \cdots + x^{(\nu)}}\omega_{0}\left(S^{(2)}(x)\right),
\]
which does not vanish by the existence of the long-range order~\cite{KT}. 
Here, the infinite-volume state $ \omega_0$ is defined in (\ref{eq.infGS}).


\subsection{Upper bound on the numerator}
\label{sebsect:UB}
Next, we estimate the numerator of the right-hand side in (\ref{eq.trialEn}). 
Let $P(E, +\infty)$ be the spectral projection onto the interval $(E, \infty)$ 
for the Hamiltonian $\mathcal{H}^{(\Lambda)}(m)$ and $E>0$.
Denoting $P_E:=1-P(E, \infty)$, we have
\begin{align*}
&\omega_m^{(\Lambda)}\left(A_R\left[ \mathcal{H}^{(\Lambda)}(m) \right]^{1+\epsilon} A_R\right)\\
&=
\omega_m^{(\Lambda)}\left(A_R P_E\left[ \mathcal{H}^{(\Lambda)}(m) \right]^{1+\epsilon} A_R\right)
+
\omega_m^{(\Lambda)}\left(A_R P(E, \infty)\left[ \mathcal{H}^{(\Lambda)}(m) \right]^{1+\epsilon} A_R\right)\\
&\le
\omega_m^{(\Lambda)}\left(A_R P_E\left[ \mathcal{H}^{(\Lambda)}(m) \right] A_R\right)E^\epsilon
+
\omega_m^{(\Lambda)}\left(A_R P(E, \infty)\left[ \mathcal{H}^{(\Lambda)}(m) \right]^3 A_R\right)E^{\epsilon-2}\\
&\le
\omega_m^{(\Lambda)}\left(A_R \mathcal{H}^{(\Lambda)}(m) A_R\right)E^\epsilon
+
\omega_m^{(\Lambda)}\left(A_R \left[ \mathcal{H}^{(\Lambda)}(m) \right]^3 A_R\right)E^{\epsilon-2}.
\end{align*}
For the first term in the right-hand side, from  the anti-commutation relations and the fact 
that $[S^{(a)}(x), S^{(b)}(y)]=0$ for $x \neq y$, 
there is a positive constant $C_3$ such that
\begin{equation}
\label{eq.numfirst}
\omega_m^{(\Lambda)}\left(A_R  \mathcal{H}^{(\Lambda)}(m) A_R\right)
=
\frac{1}{2}\omega_m^{(\Lambda)}\left(\left[ A_R, \left[ H^{(\Lambda)}(m), A_R\right]\right]\right) 
\le 
\frac{C_3}{R^\nu}.
\end{equation}
To get the bound on the second term, let $\delta(A_R) := i[H^{(\Lambda)}(m), A_R]$.
Similar to (\ref{eq.numfirst}), one finds
\begin{align*}
\omega_m^{(\Lambda)}\left(A_R \left[ \mathcal{H}^{(\Lambda)}(m) \right]^3 A_R\right)
&\le
\omega_m^{(\Lambda)}\left(\left[ A_R, H^{(\Lambda)}(m)\right]  \mathcal{H}^{(\Lambda)}(m)  
\left[H^{(\Lambda)}(m), A_R\right]\right)
\\
&=
\omega_m^{(\Lambda)}\left(\delta(A_R) \mathcal{H}^{(\Lambda)}(m)\delta(A_R) \right)\\
&=
\frac{1}{2}\omega_m^{(\Lambda)}\left(\left[ \delta(A_R), \left[ H^{(\Lambda)}(m), \delta(A_R)\right]\right]\right) \\
&\le
\frac{C_4}{R^\nu}.
\end{align*}
Choosing $E=1$, we obtain the upper bound
\begin{equation}
\label{eq.nume}
\omega_m^{(\Lambda)}\left(A_R\left[ \mathcal{H}^{(\Lambda)}(m) \right]^{1+\epsilon} A_R\right)
\le
\frac{C_5}{R^\nu},
\end{equation}
where $C_5 $ is a positive constant.

\subsection{Existence of a gapless excitation}

Substituting (\ref{eq.deno}) and (\ref{eq.nume}) into (\ref{eq.trialEn}), we have
\begin{equation}
\label{eq.trialene}
\begin{split}
	\phi_{m, \epsilon, R}^{(\Lambda)}(\mathcal{H}^{(\Lambda)}(m))
	&\le
	\frac{C_5\sqrt{(4R)^\nu}\sqrt{8(4R)^{2\nu}\chi(\epsilon) 
			+C_1R^{\nu-2}+C_2\vert m\vert R^\nu}}{8\sqrt{2g}\left\vert m_s^{(\Lambda)} (m)  \right\vert^2 R^\nu}. 
\end{split}
\end{equation}
Choosing $\epsilon$ so that 
\begin{equation}
\label{eq.epsilon}
8(4R)^{2\nu}\chi(\epsilon) 
\le R^{\nu-2},
\end{equation}
the bound (\ref{eq.trialene}) becomes

\begin{equation}
\label{eq.trialener}
\begin{split}
	\phi_{m, \epsilon, R}^{(\Lambda)}(\mathcal{H}^{(\Lambda)}(m))
	&\le
	\frac{{2^{\nu-3}}C_5\sqrt{1+C_1+C_2\vert m\vert R^2}}{\sqrt{2g}\left\vert m_s^{(\Lambda)} (m)  \right\vert^2 R}.
\end{split}
\end{equation}
Formally taking the double limit $m \searrow 0$ and $\Lambda  \nearrow \mathbb{Z}^\nu$, we learn
\begin{equation}
\label{eq.trialenerg}
\begin{split}
	\lim_{m \searrow 0}\lim_{\Lambda  \nearrow \mathbb{Z}^\nu}\phi_{m, \epsilon, R}^{(\Lambda)}(\mathcal{H}^{(\Lambda)}(m))
	&\le
	\frac{{2^{\nu-3}}C_5\sqrt{1+C_1}}{\sqrt{2g}\left\vert m_s  \right\vert^2 R}.
\end{split}
\end{equation}
For a large $R$, this suggests the existence of a gapless excitation above the sector of the ground state 
because the contribution of the spectral components having the ground-state energy, zero, 
is vanishing due to the factor $[\mathcal{H}^{(\Lambda)}(m)]^\epsilon$. 

Our final task is to eliminate the singularity of $\phi_{m, \epsilon, R}^{(\Lambda)}$ which comes from 
the function $s^{\epsilon/2}$ of $s$ at zero.
Following~\cite{Koma3}, we first extend $s^{\epsilon/2}$ for $s \ge 0$ to
\[
\eta(s) :=
\begin{cases}
s^{\epsilon/2} & (s\ge 0) \\
0 & (s <0).
\end{cases}
\]
Then, let $f_1 \in C^\infty(\mathbb{R})$ be a smooth function obeying $0 \le f_1 \le 1$ and
\[
f_1(s)
=
\begin{cases}
1 & (s \le r) \\
0 & (s \ge 2r),
\end{cases}
\]
where $0 < r < \infty$. By definition, we have
\begin{equation}
\label{eq.smea}
\omega_m^{(\Lambda)}(A_R\mathcal{H}^{(\Lambda)}(m) \eta\left(\mathcal{H}^{(\Lambda)}(m)\right)^2 
f_1\left(\mathcal{H}^{(\Lambda)}(m)\right)^2 A_R)
\le
\omega_m^{(\Lambda)}(A_R \mathcal{H}^{(\Lambda)}(m)^{1+\epsilon}A_R).
\end{equation}
Next, we approximate the function $\eta f_1$ with a smooth function.
Let $0\le g \le 1$ be a smooth function satisfying
\[
g(s)=
\begin{cases}
1 & (s \ge 2 \delta)\\
0 & (s \le \delta)
\end{cases}
\]
with $\delta>0$, and $f(s) : = g(s) \eta(s)f_1(s)$. The support of the function $f$ is included in $(\delta,2r)$, 
i.e., ${\rm supp}f\subset (\delta,2r)$. Of course, we assume $\delta<2r$. 
Since $\eta(s)$ is smooth for $s >0$, such $f$ is also smooth and satisfies that 
\begin{equation}
\label{eq.smear}
h(s):=
\eta(s)^2 f_1(s)^2
- f(s)^2
= (1-g(s)^2)\eta(s)^2 f_1(s)^2
\le
(2 \delta)^\epsilon.
\end{equation}
Using $0\le f(s) \le \eta(s) f_1(s)$, (\ref{eq.nume}) and (\ref{eq.smea}), we obtain that
\begin{equation}
\label{eq.smeari}
\begin{split}
	\omega_m^{(\Lambda)}(A_R  f\left(\mathcal{H}^{(\Lambda)}(m)\right)\mathcal{H}^{(\Lambda)}(m) f
	\left(\mathcal{H}^{(\Lambda)}(m)\right) A_R)
	&\le
	\omega_m^{(\Lambda)}(A_R\mathcal{H}^{(\Lambda)}(m)^{1+\epsilon}A_R)\\
	&\le 
	\frac{C_5}{R^\nu}.
\end{split}
\end{equation}
Next, we estimate the quantity $\omega_m^{(\Lambda)}(A_R \mathcal{H}^{(\Lambda)}(m)^{\epsilon}A_R)$. 
We note that for any $r>0$
\begin{equation}
\label{eq.cutoff1}
\eta(s)^2(1-f_1(s)^2)
\le
\frac{s}{r^{1-\epsilon}}.
\end{equation}
Then we obtain
\begin{align*}
\omega_m^{(\Lambda)}(A_R \mathcal{H}^{(\Lambda)}(m)^{\epsilon}A_R)
&=
\omega_m^{(\Lambda)}(A_R \eta\left(\mathcal{H}^{(\Lambda)}(m)\right)^2 f_1\left(\mathcal{H}^{(\Lambda)}(m)\right)^2A_R) \\
&\quad+
\omega_m^{(\Lambda)}(A_R \eta\left(\mathcal{H}^{(\Lambda)}(m)\right)^2
\left( 1- f_1\left(\mathcal{H}^{(\Lambda)}(m)\right)^2 \right)A_R) \\
&\le
\omega_m^{(\Lambda)}(A_R \eta\left(\mathcal{H}^{(\Lambda)}(m)\right)^2 f_1\left(\mathcal{H}^{(\Lambda)}(m)\right)^2A_R) \\
&\quad+\frac{\omega_m^{(\Lambda)}(A_R\mathcal{H}^{(\Lambda)}(m)A_R)}{r^{1-\epsilon}}\\
&\le
\omega_m^{(\Lambda)}(A_R \eta\left(\mathcal{H}^{(\Lambda)}(m)\right)^2 f_1\left(\mathcal{H}^{(\Lambda)}(m)\right)^2A_R)
+
\frac{C_3}{r^{1-\epsilon}R^\nu}.
\end{align*}
where we have used (\ref{eq.numfirst}).
Using (\ref{eq.smear}), the first term in the right-hand side can be bound as
\begin{align*}
\omega_m^{(\Lambda)}(A_R \eta\left(\mathcal{H}^{(\Lambda)}(m)\right)^2 f_1\left(\mathcal{H}^{(\Lambda)}(m)\right)^2A_R) 
&=
\omega_m^{(\Lambda)}(A_R f\left(\mathcal{H}^{(\Lambda)}(m)\right)^2A_R) \\
&\quad+
\omega_m^{(\Lambda)}(A_R h\left(\mathcal{H}^{(\Lambda)}(m)\right)A_R) \\
&\le
\omega_m^{(\Lambda)}(A_R f\left(\mathcal{H}^{(\Lambda)}(m)\right)^2A_R)
+8\delta^\epsilon,
\end{align*}
where we have used $\|A_R\|^2 \le 4$.
Hence there is a small $\delta_0$ such that
\begin{equation}
\label{eq.smearin}
\begin{split}
	\omega_m^{(\Lambda)}(A_R\mathcal{H}^{(\Lambda)}(m)^{\epsilon}A_R)
	&\le
	\omega_m^{(\Lambda)}(A_R f\left(\mathcal{H}^{(\Lambda)}(m)\right)^2A_R)
	+8\delta^\epsilon+\frac{C_3}{r^{1-\epsilon}R^\nu}\\
	&\le
	\omega_m^{(\Lambda)}(A_R f\left(\mathcal{H}^{(\Lambda)}(m)\right)^2A_R) + \delta_0,
\end{split}
\end{equation}
where we have chosen $\delta$ small enough and $r$ large enough.
Combining this with (\ref{eq.deno}) and (\ref{eq.epsilon}),  we arrive at
\begin{equation}
\label{eq.denomi}
\begin{split}
	\frac{8\left\vert m_s^{(\Lambda)}(m)   \right\vert^2}
	{\sqrt{\frac{(4R)^\nu}{2g}}\sqrt{R^{\nu-2} +C_1R^{\nu-2}+C_2\vert m\vert R^\nu}}
	&\le
	\omega_m^{(\Lambda)}(A_R f\left(\mathcal{H}^{(\Lambda)}(m)\right)^2A_R)
	+\delta_0.
\end{split}
\end{equation}
Finally, we show (\ref{eq.gapless}) by rewriting all the results in terms of $\tau_{f, m}^{(\Lambda)}(A_R)$.
Recalling (\ref{alphatmA}) and (\ref{eq.timelocal}), we write the left-hand side of (\ref{eq.smeari}) as
\begin{equation*}
\label{eq.rew}
\begin{split}
	&\omega_m^{(\Lambda)}(A_R  f\left(\mathcal{H}^{(\Lambda)}(m)\right)
	\mathcal{H}^{(\Lambda)}(m) f\left(\mathcal{H}^{(\Lambda)}(m)\right) A_R)\\
	&=
	\omega_m^{(\Lambda)}\left(\tau_{f, m}^{(\Lambda)}(A_R)^\dagger  
	\left[\mathcal{H}^{(\Lambda)}(m), \tau_{f, m}^{(\Lambda)}(A_R) \right]\right).
\end{split}
\end{equation*}
Then the first term $\omega_m^{(\Lambda)}(A_R f(\mathcal{H}^{(\Lambda)}(m))^2A_R)$ in the right-hand side of (\ref{eq.denomi}) 
is written as
\[
\omega_m^{(\Lambda)}\left(A_R f\left(\mathcal{H}^{(\Lambda)}(m)\right)^2A_R\right)
=
\omega_m^{(\Lambda)}\left(\tau_{f, m}^{(\Lambda)}(A_R)^\dagger   \tau_{f, m}^{(\Lambda)}(A_R) \right).
\]
By combining these observations with (\ref{eq.smeari}) and (\ref{eq.denomi}), we have 
\begin{equation}
\begin{split}
	&\frac{\omega_m^{(\Lambda)}\left(\tau_{f, m}^{(\Lambda)}(A_R)^\dagger  
		\left[\mathcal{H}^{(\Lambda)}(m), \tau_{f, m}^{(\Lambda)}(A_R) \right]\right)}
	{\omega_m^{(\Lambda)}\left(\tau_{f, m}^{(\Lambda)}(A_R)^\dagger   \tau_{f, m}^{(\Lambda)}(A_R) \right)}\\
	\le& \frac{C_5}{R^\nu}\frac{\sqrt{{(4R)^\nu}}\sqrt{R^{\nu-2} +C_1R^{\nu-2}+C_2\vert m\vert R^\nu}}
	{{8\sqrt{2g}\Bigl\vert m_s^{(\Lambda)}(m) \Bigr\vert^2}-\delta_0\times \sqrt{{(4R)^\nu}}\sqrt{R^{\nu-2} +C_1R^{\nu-2}+C_2\vert m\vert R^\nu}}. 
\end{split}
\end{equation}
We choose $\delta_0$ so as to satisfy $\sqrt{4^\nu(1+C_1)}\delta_0\cdot R^{\nu-1}<8\sqrt{2g}|m_{\rm s}|^2$, 
where $m_{\rm s}$ is the spontaneous magnetization, which is non-vanishing because of the existence of the long-range order. 
Then, we have 
\begin{equation}
\lim_{m \searrow 0}\lim_{\Lambda \nearrow \mathbb{Z}^\nu}\frac{\omega_m^{(\Lambda)}\left(\tau_{f, m}^{(\Lambda)}(A_R)^\dagger  
	\left[\mathcal{H}^{(\Lambda)}(m), \tau_{f, m}^{(\Lambda)}(A_R) \right]\right)}
{\omega_m^{(\Lambda)}\left(\tau_{f, m}^{(\Lambda)}(A_R)^\dagger   \tau_{f, m}^{(\Lambda)}(A_R) \right)}\le \frac{\rm Const.}{R}. 
\end{equation}
This left-hand side is equal to 
\[
\lim_{\Lambda \nearrow \mathbb{Z}^\nu}
\frac{\omega_0\left(\tau_{f,0}(A_R)^\dagger  \left[\mathcal{H}^{(\Lambda)}(0), \tau_{f,0}(A_R) \right]\right)}
{\omega_0\left(\tau_{f,0}(A_R)^\dagger  \tau_{f,0}(A_R) \right)} 
\]
in the sense of the weak$^\ast$ limit about the state $\omega_m^{(\Lambda)}$. (See Theorem~4.1 in \cite{Koma2}.) 
Here, $\tau_{f,0}(A)$ of $A$ is the limit of $\tau_{f, m}^{(\Lambda)}(A)$ given by (\ref{eq.limtime}). 
Consequently, we obtain 
\begin{equation}
\label{eq.NGA}
\lim_{\Lambda \nearrow \mathbb{Z}^\nu}
\frac{\omega_0\left(\tau_{f,0}(A_R)^\dagger  \left[\mathcal{H}^{(\Lambda)}(0), \tau_{f,0}(A_R) \right]\right)}
{\omega_0\left(\tau_{f,0}(A_R)^\dagger  \tau_{f,0}(A_R) \right)}
\le
\frac{{\rm Const.}}{R}.
\end{equation}
This right-hand side can be made arbitrarily small by choosing $R$ large enough.
This shows the desired bound (\ref{eq.gapless}). \qed

\section{Number of the Nambu--Goldstone modes}
\label{sect:NGn}
In this section, we verify Theorem~\ref{cor.NGn}, i.e., the number of the Nambu--Goldstone modes, $N_\mathrm{NG}$, is actually six.
The first thing we need to do is to count the number of broken symmetries $N_\mathrm{BS}$.
In general, the state $\omega_0(\cdot)$, which is defined in (\ref{eq.infGS}), has continuous symmetry generated by 
$Q^{(a)}:= \sum_{x\in \Lambda} S^{(a)}(x)$ if and only if
\[
\omega_0\left(\left[Q^{(a)}, A  \right]\right)=0
\]
for any local operator $A \in \mathfrak{A}_\mathrm{loc}$.
As shown in Sect.~\ref{LRO2}, the state $\omega_0(\cdot)$ has long-range order and hence
\[
\frac{1}{\vert\Lambda\vert}\sum_{x \in \Lambda}(-1)^{x^{(1)}+\cdots x^{(\nu)}}\omega_0\left(S^{(2)}(x)\right) >0.
\]
With the aid of the commutation relations (\ref{Scommu}), we find that the symmetries $Q^{(a)}$ with $a=1,3,4,5,6,7$ are broken.
On the other hand, since $[H^{(\Lambda)}(m), Q^{(2)}] = 0$, 
we can assume that the ground states of $H^{(\Lambda)}(m)$, $\{\Psi_i^{(\Lambda)}\}_{i=1}^d$, are also eigenvectors of $Q^{(2)}$.
Hence, for any $A \in \mathfrak{A}_\mathrm{loc}$, it holds that
\[
\omega^{(\Lambda)}_m\left(\left[Q^{(2)}, A  \right] \right)
=
\frac{1}{d}\sum_{i=1}^d
\left\langle \Psi^{(\Lambda)}_i,  \left[Q^{(2)}, A  \right] \Psi^{(\Lambda)}_i\right\rangle
=0.
\]
The same relation also holds for $Q^{(8)}$ since  $[H^{(\Lambda)}(m), Q^{(8)}] = 0$.
Therefore, the state $\omega_0(\cdot)$ has symmetries generated by $Q^{(2)}$ and $Q^{(8)}$. 
This argument implies that the number of broken symmetries, $N_\mathrm{BS}$, is equal to six.

Next, we construct the six Nambu--Goldstone modes. One of the modes has been already constructed in 
the preceding section by using $A_R$ of (\ref{eq.localOp}). 
Instead of $A_R$ of (\ref{eq.localOp}), let us consider
\begin{equation*}
A_R^{(a)}:=\frac{1}{\vert \Omega_R\vert}\sum_{x\in \Omega_R}(-1)^{x^{(1)}+\cdots +x^{(\nu)}}S^{(a)}(x)
\quad (a\neq 1, 2, 8).
\end{equation*}
Then we will check that the observable $\tau_{f,0}(A_R^{(a)})$ also gives the corresponding Nambu--Goldstone mode 
similarly to the case of  $A_R=A_R^{(1)}$.

Using (\ref{eq.rot1}) and Proposition~\ref{prop.rot}, we can find a unitary operator $U$ of the form (\ref{eq.globalrot}) 
such that $U^\dagger A_R^{(3)} U = A_R$ and $[U, H^{(\Lambda)}(m)]=0$.
Actually, we set 
\begin{equation}
\label{eq.Unitary2}
U(\theta):=\prod_{x \in \Lambda}\exp\left[i\theta S^{(2)}(x)\right]
\end{equation}
with a real parameter $\theta$. Then we have 
\begin{equation*}
U\left(\frac{\pi}{4}\right)^\dagger A_R^{(3)} U\left(\frac{\pi}{4}\right)=A_R
\end{equation*}
and 
\begin{equation}
\label{eq.Tchange}
U\left(\frac{\pi}{4}\right)^\dagger \tau_{f, m}^{(\Lambda)}(A_R^{(3)}) U\left(\frac{\pi}{4}\right)
=
\tau_{f, m}^{(\Lambda)}(A_R).
\end{equation}
By combining this with the invariance of the state $\omega_m^{(\Lambda)}(\cdot)$, we obtain 
\begin{align*}
\frac{\omega_m^{(\Lambda)}\left(\tau_{f, m}^{(\Lambda)}(A_R^{(3)})^\dagger  
	\left[\mathcal{H}^{(\Lambda)}(m), \tau_{f, m}^{(\Lambda)}(A_R^{(3)}) \right]\right)}
{\omega_m^{(\Lambda)}\left(\tau_{f, m}^{(\Lambda)}(A_R^{(3)})^\dagger   \tau_{f, m}^{(\Lambda)}(A_R^{(3)}) \right)}
&=
\frac{\omega_m^{(\Lambda)}\left(\tau_{f, m}^{(\Lambda)}(A_R)^\dagger  
	\left[\mathcal{H}^{(\Lambda)}(m), \tau_{f, m}^{(\Lambda)}(A_R) \right]\right)}
{\omega_m^{(\Lambda)}\left(\tau_{f, m}^{(\Lambda)}(A_R)^\dagger  \tau_{f, m}^{(\Lambda)}(A_R) \right)}.
\end{align*}
Taking the double limit $\Lambda \nearrow \mathbb{Z}^\nu$ and $m \searrow 0$, 
we obtain the desired low-energy bound (\ref{eq.gapless}) for the observable $A_R^{(3)}$ by using the bound (\ref{eq.NGA}).

For the rest $A_R^{(a)}$, i.e., $a=4,5,6,7$, we recall the proof in Sect.~\ref{sect.NG}. 
Instead of the expression (\ref{Hinth}) for the interaction Hamiltonian, 
we introduce an interaction Hamiltonian as follows: 
\begin{equation}
\label{eq.Generalint}
\begin{split}
	H_{\mathrm{int}, 4}^{(\Lambda)}(h)
	&:=
	\frac{g}{2}\sum_{\mu=1}^\nu H_{\mathrm{int}, \mu, 4}^{(\Lambda)}(h)
	+\frac{g}{2}\sum_{x \in \Lambda}\sum_{\mu=1}^\nu\sum_{b = 1,3,6,8}[S^{(b)}(x)+S^{(b)}(x+e_\mu)]^2\\
	&\quad -\frac{g}{2}\sum_{x \in \Lambda}\sum_{\mu=1}^\nu\sum_{b =2, 5, 7}[S^{(b)}(x)- S^{(b)}(x+e_\mu)]^2\\
	&\quad -g\nu\sum_{a\neq 2, 5, 7}\sum_{x \in \Lambda}S^{(a)}(x)^2
	+g\nu\sum_{b =2, 5, 7}\sum_{x \in \Lambda}S^{(b)}(x)^2,
\end{split}
\end{equation}
where $h= (h^{(1)}, \dots, h^{(\nu)})$ and
\begin{equation}
\label{eq.hpart}
H_{\mathrm{int}, \mu, 4}^{(\Lambda)}(h)
:=
\frac{g}{2} \sum_{x \in \Lambda}
[S^{(4)}(x) +S^{(4)}(x+e_\mu) + (-1)^{\sum_{j=1}^\nu x^{(j)}} h^{(\mu)}(x)]^2.
\end{equation}
Namely, we replace the role of the operator $S^{(3)}(x)$ by $S^{(4)}(x)$. 
By a similar argument in Section~\ref{sect.RP}, we can obtain the corresponding Gaussian domination bound so that
for any real-valued $h = (h^{(1)}, \dots, h^{(\mu)})$
\begin{align}
\label{GaussianDbund4}
\mathrm{Tr} \left\{\exp\left[- \beta H^{(\Lambda)}_4(m, h)\right] \right\}
&\le
\mathrm{Tr}\left\{\exp\left[- \beta H^{(\Lambda)}(m)\right]\right\},
\end{align}
where we have defined
\[
H^{(\Lambda)}_4(m, h):=  H_K^{(\Lambda)} + H_{\mathrm{int},4}^{(\Lambda)}(h) + mO^{(\Lambda)}.
\]
Recalling
\begin{equation*}
h'(x):=
\begin{cases}
	1 & (x \in \Omega_{R+1}) \\
	1-\frac{|x|_\infty - (R+1)}{R} & (x \in \Omega_{2R}\backslash \Omega_{R+1})\\
	0 & (\text{otherwise}),
\end{cases}
\end{equation*}
we choose $h(x) = (h^{(1)}, 0, 0,\dots)$ and take a function $h^{(1)}$ as
\[
h^{(1)}(x):=(-1)^{\sum_{j=1}^\nu x^{(j)}} h'(x).
\]
This implies that
\[
\sum_{\mu=1}^\nu H_{\mathrm{int}, \mu, 4}^{(\Lambda)}(h)=
\frac{g}{2} \sum_{x \in \Lambda}\sum_{\mu=1}^\nu [S^{(4)}(x) +S^{(4)}(x+e_\mu)]^2 
+ g S^{(4)}[\tilde{h}'] +\frac{g}{2} \sum_{x\in\Lambda} [h'(x)]^2,
\] 
where 
\[
\tilde{h}'(x):=h'(x)+h'(x-e_1), \quad  S^{(4)}[\tilde{h}']:=\sum_{x \in \Lambda}S^{(4)}(x)\tilde{h}'(x).
\]
Then, as in the proof of (\ref{eq.Duhamel}), we also obtain
\begin{equation}
\label{eq.IBS4}
( S^{(4)}[\tilde{h}'],  S^{(4)}[\tilde{h}'])_{\beta,m}
\le
\frac{1}{\beta g}\sum_{x \in \Lambda}|h'(x)|^2.
\end{equation}
In the rest of the proof of the existence of a gapless excitation, 
the only difference from the case of Section~\ref{sect.NG} is Lemma~\ref{lem.DC}.
More precisely, we need the following lemma which is a slight modification of Lemma~\ref{lem.DC}.

\begin{lemma}
\label{lem.DCmodify}
There are positive constants $C_1$, $C_2$ such that
\[
\left\|
\left[\left[S^{(4)}[\tilde{h}'],  H^{(\Lambda)}(m)\right],  S^{(4)}[\tilde{h}']\right] 
\right\|
\le
C_1R^{\nu-2}+C_2\vert m\vert R^\nu.
\]
\end{lemma}

\begin{proof}
The proof is almost the same as that of Lemma~\ref{lem.DC}.
Recalling
\[
H_{\mathrm{hop}, j}:=i\kappa \sum_{x \in \Lambda}\sum_{\mu=1}^\nu (-1)^{\theta_\mu(x)}
\left[\psi_j^\dagger(x)\psi_j(x+e_\mu) -\mathrm{h.c.} \right],
\]
we first estimate the hopping term.
Since $S^{(4)}(x) = \psi_1^\dagger(x)\psi_3(x) +\psi_3^\dagger(x)\psi_1(x)$ and $[\psi_i^\dagger(x)\psi_j(x), H_{\mathrm{hop}, k}] = 0$ 
for $i, j\neq k$, 
it is sufficient to estimate
\begin{equation}
	\begin{split}
		\label{eq.Dkinmod}
		&\left[\left[S^{(4)}[\tilde h'],  H_{\mathrm{hop}, 1} +H_{\mathrm{hop}, 3}\right],  S^{(4)}[\tilde h']\right] \\
		&=
		\sum_{x, y \in \Omega_{2R}}
		\left[\left[\tilde{h'}(x)S^{(4)}(x), H_{\mathrm{hop}, 1} +H_{\mathrm{hop}, 3}\right],  \tilde{h'}(y)S^{(4)}(y)\right] 
	\end{split}
\end{equation}
for $j=1,2$. By the anti-commutation relations, we have 
\begin{align*}
	\left[\psi_1^\dagger(x)\psi_1(y), S^{(4)}(x) + S^{(4)}(y) \right]
	&=
	\left[\psi_1^\dagger(x)\psi_1(y), \psi_3^\dagger(x)\psi_1(x)+\psi_1^\dagger(y)\psi_3(y) \right]\\
	&=
	\psi_1^\dagger(x)\psi_3(y)-\psi_3^\dagger(x)\psi_1(y)
\end{align*}
and
\begin{align*}
	\left[\psi_3^\dagger(x)\psi_3(y), S^{(4)}(x) + S^{(4)}(y) \right]
	&=
	\left[\psi_3^\dagger(x)\psi_3(y), \psi_3^\dagger(x)\psi_1(x)+\psi_1^\dagger(y)\psi_3(y) \right]\\
	&=
	-\psi_1^\dagger(x)\psi_3(y)+\psi_3^\dagger(x)\psi_1(y),
\end{align*}
which shows
\[
\left[\psi_1^\dagger(x)\psi_1(y)+\psi_3^\dagger(x)\psi_3(y), S^{(4)}(x) + S^{(4)}(y) \right]=0.
\]
Let $F(x, y):= \psi_1^\dagger(x)\psi_1(y)+\psi_3^\dagger(x)\psi_3(y)$.
In the same way as (\ref{eq.DCKin}), equation (\ref{eq.change}) implies that
\begin{equation}
	\label{eq.DCKinmod}
	\begin{split}
		&\left[\left[\tilde{h'}(x)S^{(4)}(x)+\tilde{h'}(y)S^{(4)}(y), F(x, y)
		\right], \tilde{h'}(x)S^{(4)}(x) + \tilde{h'}(y)S^{(4)}(y)
		\right]\\
		&=
		\left[\left[(\tilde{h'}(y)-\tilde{h'}(x))S^{(4)}(y), F(x, y)
		\right],  (\tilde{h'}(y)-\tilde{h'}(x))S^{(4)}(y)
		\right]\\
		&=
		\vert\tilde{h'}(y)-\tilde{h'}(x)\vert^2
		\left[\left[S^{(4)}(y), F(x, y)
		\right], S^{(4)}(y)
		\right]
	\end{split}
\end{equation}
Using  $\vert\tilde{h'}(y)-\tilde{h'}(x)\vert \le 2R^{-1}$ for $\vert x-y\vert =1$, (\ref{eq.DCKinmod}) and (\ref{eq.Dkinmod}) yield
\begin{equation}
	\label{eq.DCkinemod}
	\left\|
	\left[\left[S^{(4)}[\tilde h'],  H_{K}^{(\Lambda)}\right],  S^{(4)}[\tilde h']\right] 
	\right\|
	\le
	{\rm Const.}
	\frac{ \nu \vert\kappa\vert(4R)^\nu}{R^2}.
\end{equation}

We next estimate the interaction part. 
Using (\ref{eq.change}) and (\ref{eq.B2}), we have
\begin{equation*}
	\begin{split}
		&\left\vert\left[\left[\tilde{h'}(x)S^{(4)}(x)+\tilde{h'}(y)S^{(4)}(y),  \sum_{a=1}^8 S^{(a)}(x)S^{(a)}(y)\right],  \tilde{h'}(x)S^{(4)}(x)+\tilde{h'}(y)S^{(4)}(y)\right]
		\right\vert\\
		&=
		\left\vert \tilde{h'}(y) - \tilde{h'}(x) \right\vert^2
		\left\vert\left[\left[S^{(4)}(y),  \sum_{a=1}^8 S^{(a)}(x)S^{(a)}(y)\right], S^{(4)}(y)\right]
		\right\vert\\
		&=
		\left\vert \tilde{h'}(y) - \tilde{h'}(x) \right\vert^2
		\left\vert 4S^{(5)}(x)S^{(5)}(y)+3S^{(8)}(x)S^{(8)}(y)+\sum_{a=1,2,3,6,7}S^{(a)}(x)S^{(a)}(y) \right\vert,
	\end{split}
\end{equation*}
where we have used the commutation relations (\ref{Scommu}).
Therefore, we obtain
\begin{equation}
	\label{eq.DCintmod}
	\left\| \left[ \left[S^{(4)}[\tilde{h}'],  H^{(\Lambda)}_{\mathrm{int}}\right],  S^{(4)}[\tilde{h}']\right] \right\|
	\le
	{\rm Const.}
	\frac{g\nu (4R)^\nu}{R^2}.
\end{equation}

Similarly, for the order parameter, we obtain
\[
\left\|
\left[\left[S^{(4)}[\tilde h'],  mO^{(\Lambda)}\right],  S^{(4)}[\tilde h']\right]
\right\|
\le
{\rm Const.}\times m(4R)^\nu.
\]
Combining this with (\ref{eq.DCkinemod}) and (\ref{eq.DCintmod}), we have the desired result.
\end{proof}
Since we have not used any peculiar information of $S^{(3)}$ elsewhere in Section~\ref{sect.NG}, 
we can show that $\tau_{f, 0}(A_R^{(4)})$ gives a gapless excitation by the same argument.
The proof above gives more, namely $\tau_{f, 0}(A_R^{(6)})$ yields a Nambu--Goldstone mode 
in the same way as for $A_R^{(3)}$ by replacing $\pi/4$ in (\ref{eq.Tchange}) with $\pi/2$.

For $\tau_{f, 0}(A_R^{(5)})$ and $\tau_{f, 0}(A_R^{(7)})$, we need the spin-$1$ representation of the interaction 
because $S^{(2)}$, $S^{(5)}$ and $S^{(7)}$ are pure imaginary hermitian.
To obtain the Gaussian domination bound, we introduce a unitary operator, 
\[
V :=-\frac{i}{\sqrt{2}}
\left(\begin{matrix}1 & 0 & -1\\i & 0 & i\\0 & \sqrt{2} & 0\end{matrix}\right),
\]
and consider the transformation $\Psi \to V \Psi$. Since the operator $V$ is unitary, 
the spectral decomposition implies that there exists a hermitian matrix $A$ such that 
the operator $V$ can be written $V = \exp[iA]$. As is well-known, there exists a unitary operator 
${\tilde V}$ on the fermion Fock space such that ${\tilde V}$ induces the same transformation as $V$.  
The explicit form\footnote{In order to prove the equivalence between $V$ and $\tilde V$, consider transformations, 
$V_t:=\exp[itA]$ and ${\tilde V}_t :=\prod_{x \in \Lambda}\exp[it\Psi^\dagger(x) A \Psi(x)]$, with the real parameter $t$.
It is enough to show the equivalence between $V_t\Psi(x)$ and ${\tilde V}_t^\dagger \Psi(x){\tilde V}_t$. 
By differentiating with respect to $t$, one can obtain the two differential equations with the same form. 
They have also the same initial value at $t=0$. Therefore, the uniquness of their solutions implies the desired result at $t=1$.} 
is given by $\tilde V :=\prod_{x \in \Lambda}\exp[i\Psi^\dagger(x) A \Psi(x)]$. Then, we have
\begin{equation}
\begin{split}
	\label{eq.VHtransform}
	\tilde V^\dagger H^{(\Lambda)}(m) \tilde V 
	&=i\kappa \sum_{x \in \Lambda} \sum_{\mu=1}^\nu(-1)^{\theta_\mu(x)}
	[\Psi^\dagger(x)\Psi(x+e_\mu) - \Psi^\dagger(x+e_\mu)\Psi(x)] \\
	&\quad+
	m\tilde{O}^{(\Lambda)}+ 
	g\sum_{x \in \Lambda}\sum_{\mu=1}^\nu\sum_{a=1}^8\tilde S^{(a)}(x)\tilde S^{(a)}(x+e_\mu)
\end{split}
\end{equation}
with 
$$
\tilde{O}^{(\Lambda)}:=\tilde V^\dagger O^{(\Lambda)}\tilde V=\sum_{x\in \Lambda}(-1)^{x^{(1)} + \cdots +x^{(\nu)}}{\tilde  S}^{(2)}(x)
$$
and 
$$
\tilde S^{(a)} := \Psi^\dagger V^\dagger \lambda^{(a)}V \Psi.
$$
The explicit forms of the transformed Gell-Mann matrices are given by  
\begin{align*}
V^\dagger \lambda^{(1)} V &=\left(\begin{matrix}0 & 0 & i\\0 & 0 & 0\\- i & 0 & 0\end{matrix}\right), \quad
V^\dagger \lambda^{(2)} V =\left(\begin{matrix}1 & 0 & 0\\0 & 0 & 0\\0 & 0 & -1\end{matrix}\right), \quad
V^\dagger \lambda^{(3)} V =\left(\begin{matrix}0 & 0 & -1\\0 & 0 & 0\\-1 & 0 & 0\end{matrix}\right)\\
V^\dagger \lambda^{(4)} V &=\frac{1}{\sqrt{2}} \left(\begin{matrix}0 & 1& 0\\ 1 & 0 & - 1\\0 & - 1& 0\end{matrix}\right), \quad
V^\dagger \lambda^{(5)} V =\frac{1}{\sqrt{2}} \left(\begin{matrix}0 & - i & 0\\ i & 0 & - i\\0 & i& 0\end{matrix}\right),\\
V^\dagger \lambda^{(6)} V &= \frac{1}{\sqrt{2}}  \left(\begin{matrix}0 & - i & 0\\ i& 0 & i \\0 & - i & 0\end{matrix}\right),\quad
V^\dagger \lambda^{(7)} V =\frac{1}{\sqrt{2}}  \left(\begin{matrix}0 & - 1 & 0\\- 1& 0 & - 1\\0 & - 1 & 0\end{matrix}\right),\\
V^\dagger \lambda^{(8)} V &=\frac{1}{\sqrt{3}} \left(\begin{matrix}1& 0 & 0\\0 & - 2 & 0\\0 & 0 & 1\end{matrix}\right).
\end{align*}
Thus, the three operators, $V^\dagger \lambda^{(2)} V$, $V^\dagger \lambda^{(5)} V$ and $V^\dagger \lambda^{(7)} V$, 
are corresponding to the spin-1 operators. In particular, $V^\dagger \lambda^{(7)} V$ is real hermitian. 
This is crucial in the following argument. 

Noting that
\begin{align*}
\tilde S^{(1)} &=\frac{i}{\sqrt{2}} \left(\begin{matrix}  \psi_1^\dagger \psi_3-  \psi_3^\dagger \psi_1 \end{matrix}\right),
\quad
\tilde  S^{(2)} =\frac{1}{\sqrt{2}} \left(\begin{matrix}n_1 - n_3\end{matrix}\right)\\
\tilde  S^{(3)} &=-\frac{1}{\sqrt{2}} \left(\begin{matrix}  \psi_3^\dagger\psi_1 +  \psi_1^\dagger \psi_3\end{matrix}\right),\quad
\tilde  S^{(4)} =\frac{1}{\sqrt{2}} \begin{matrix} \left(\psi_2^\dagger \psi_1 +  \left(\psi_1^\dagger - \psi_3^\dagger\right)\psi_2 - \psi_2^\dagger\psi_3 \right)\end{matrix}\\
\tilde  S^{(5)} &=\frac{i}{\sqrt{2}} \begin{matrix}\left( \psi_2^\dagger\psi_1  - \left(\psi_1^\dagger - \psi_3^\dagger\right)\psi_2  - \psi_2^\dagger \psi_3 \right)\end{matrix}\\
\tilde  S^{(6)} &=\frac{i}{\sqrt{2}} \begin{matrix}\left( \psi_2^\dagger \psi_1 -  \left(\psi_1^\dagger + \psi_3^\dagger\right)\psi_2 + \psi_2^\dagger \psi_3 \right)\end{matrix}\\
\tilde  S^{(7)} &=-\frac{1}{\sqrt{2}} \begin{matrix}\left( \psi_2^\dagger \psi_1 + \left(\psi_1^\dagger + \psi_3^\dagger\right)\psi_2 +  \psi_2^\dagger \psi_3\right)\end{matrix},\quad
\tilde  S^{(8)} =\frac{1}{\sqrt{3}}\begin{matrix} \left(n_1- 2 n_2 + n_3\right)\end{matrix}
\end{align*}
and using (\ref{eq.odd2}), we learn
\[
U^\dagger_\mathrm{odd}
\tilde S^{(a)}(x)
U_\mathrm{odd}
=
\begin{cases}
-\tilde S^{(a)}(x) & (x \in \Lambda_\mathrm{odd}, a\neq  1, 5, 6)\\
\tilde S^{(a)}(x) & (\text{otherwise}).
\end{cases}
\]
Similarly to (\ref{eq.Generalint}), we define
\begin{equation*}
\begin{split}
	\label{eq.Generalint7}
	\tilde H_{\mathrm{int}, 7}^{(\Lambda)}(h)
	&:=
	\frac{g}{2}\sum_{\mu=1}^\nu \tilde H_{\mathrm{int}, \mu, 7}^{(\Lambda)}(h)
	+\frac{g}{2}\sum_{x \in \Lambda}\sum_{\mu=1}^\nu\sum_{b = 2, 3, 4,8}[\tilde S^{(b)}(x)+ \tilde S^{(b)}(x+e_\mu)]^2\\
	\quad& -\frac{g}{2}\sum_{x \in \Lambda}\sum_{\mu=1}^\nu\sum_{b =1, 5, 6 }[\tilde S^{(b)}(x)- \tilde S^{(b)}(x+e_\mu)]^2\\
	&\quad-g\nu\sum_{a\neq 1, 5, 6}\sum_{x \in \Lambda}\tilde S^{(a)}(x)^2
	+g\nu\sum_{b =1, 5, 6}\sum_{x \in \Lambda}\tilde S^{(b)}(x)^2,
\end{split}
\end{equation*}
where 
\begin{equation*}
\tilde H_{\mathrm{int}, \mu, 7}^{(\Lambda)}(h)
:=
\frac{g}{2} \sum_{x \in \Lambda}
[\tilde S^{(7)}(x) +\tilde S^{(7)}(x+e_\mu) + (-1)^{\sum_{j=1}^\nu x^{(j)}} h^{(\mu)}(x)]^2.
\end{equation*}
Then, for $\tilde U_1 = U_1 U_\mathrm{odd}$, we find
\begin{align*}
\tilde U^\dagger_1 \tilde H_{\mathrm{int}, \mu, 7}^{(\Lambda)}(h)\tilde U_1
&=
\frac{g}{2}\sum_{\mu=1}^\nu \sum_{x \in \Lambda}
\left[\tilde S^{(7)}(x) -\tilde S^{(7)}(x+e_\mu) + h^{(\mu)}(x)\right]^2 \\
&\quad +\frac{g}{2}\sum_{x \in \Lambda}\sum_{\mu=1}^\nu\sum_{b = 2, 3, 4,8}\left[\tilde S^{(b)}(x)-\tilde S^{(b)}(x+e_\mu)\right]^2
\\
&\quad -\frac{g}{2}\sum_{x \in \Lambda}\sum_{\mu=1}^\nu\sum_{b =1,5,6}\left[\tilde S^{(b)}(x)- \tilde S^{(b)}(x+e_\mu)\right]^2.
\\
&\quad-g\nu\sum_{a\neq 1,5,6}
\sum_{x \in \Lambda}\tilde S^{(a)}(x)^2
+g\nu\sum_{b =1,5,6}\sum_{x \in \Lambda}\tilde S^{(b)}(x)^2
\end{align*}
This is the same form as (\ref{tildeHinth}) since $S^{(a)}$ $(a=1,5,6)$ are pure imaginary 
and $S^{(a)}$ $(a\neq 1,5,6)$ are real hermitian.
Therefore, for any real-valued $h=(h^{(\nu)}, \dots, h^{(\nu)})$ we have
\[
\mathrm{Tr} \left\{\exp\left[- \beta H_7^{(\Lambda)}(m, h)\right] \right\}
\le
\mathrm{Tr}\left\{\exp\left[- \beta   H^{(\Lambda)}(m) \right]\right\},
\]
where we have used $\tilde{V}\exp[X] {\tilde V}^\dagger = \exp[\tilde V  X \tilde V^\dagger ]$ and defined
\[
H_7^{(\Lambda)}(m, h)
:=
H_K^{(\Lambda)}+\tilde V \tilde H_{\mathrm{int}, 7}^{(\Lambda)}(h)\tilde V^\dagger+mO^{(\Lambda)}.
\]
Similarly to the proof of (\ref{eq.IBS4}) and Lemma~\ref{lem.DCmodify}, we find
\begin{equation*}
( S^{(7)}[\tilde{h}'],  S^{(7)}[\tilde{h}'])_{\beta,m}
\le
\frac{1}{\beta g}\sum_{x \in \Lambda}|h'(x)|^2,
\end{equation*}
and there are constants $C_1$ and $C_2$ such that
\[
\left\|
\left[\left[S^{(7)}[\tilde{h}'],  H^{(\Lambda)}(m)\right],  S^{(7)}[\tilde{h}']\right] 
\right\|
\le
C_1R^{\nu-2}+C_2\vert m\vert R^\nu,
\]
where 
\[
\tilde h'(x) := h'(x) + h'(x-e_1), \quad S^{(7)}[\tilde{h}']:=\sum_{x\in\Lambda}S^{(7)}(x) \tilde h'(x).
\]
As explained above, these properties and (\ref{eq.rotS5}) imply that $\tau_{f, 0}(A_R^{(7)})$ 
and $\tau_{f, 0}(A_R^{(5)})$ give gapless excitations.

In the remainder of this section, we will prove that these six excitation states are linearly independent. 
For this purpose, we will use the unbroken symmetries of the symmetry-breaking ground state $\Omega_\omega$. 
More precisely, these are given by the particle-hole symmetry (\ref{eq.ph}) and the rotational invariance by the rotation $U(\theta)$ 
of (\ref{eq.Unitary2}) with the generators $S^{(2)}(x)$. 

As explained at the beginning of Section~\ref{sect.NG}, the excited states that correspond to the gapless modes are given by
\[
\pi_\omega\left(\tau_{f, 0}\left(A_R^{(a)}\right)\right)\Omega_\omega
=:\Phi^{(a)}\Omega_\omega
\]
for the six indices $a$. Firstly, 
by using the particle-hole transformation (\ref{eq.ph}), for any pair of $a\in \{5,7\}$ and $b\in \{1,3,4,6\}$, we have
\[
\left\langle  \tau_{f, m}^{(\Lambda)}\left(A_R^{(a)}\right)^\dagger \tau_{f, m}^{(\Lambda)}\left(A_R^{(b)}\right) \right\rangle_{\beta, m}^{(\Lambda)}
=
-\left\langle  \tau_{f, m}^{(\Lambda)}\left(A_R^{(a)}\right)^\dagger \tau_{f, m}^{(\Lambda)}\left(A_R^{(b)}\right)  \right\rangle_{\beta, m}^{(\Lambda)}
\]
Taking the limit, we obtain
\[
0=\lim_{m \searrow 0} \lim_{\Lambda \nearrow \mathbb{Z}^\nu} \lim_{\beta \to \infty}\left\langle  \tau_{f, m}^{(\Lambda)}\left(A_R^{(a)}\right)^\dagger \tau_{f, m}^{(\Lambda)}\left(A_R^{(b)}\right) \right\rangle_{\beta, m}^{(\Lambda)}
=
\left(\Phi^{(a)}\Omega_\omega, \Phi^{(b)}\Omega_\omega \right)
\]
for $a\in \{5,7\}$ and $b\in \{1,3,4,6\}$.
Hence these pairs are linearly independent.

Next, by (\ref{eq.rot1}) and (\ref{eq.rotS4}), we find
\[
U(\pi)^\dagger S^{(1)}U(\pi) =S^{(1)},
\quad
U(\pi)^\dagger S^{(4)}U(\pi) =-S^{(4)}
\]
where $U(\theta)$ is given in (\ref{eq.Unitary2}).
This shows
\[
\left\langle  \tau_{f, m}^{(\Lambda)}\left(A_R^{(1)}\right)^\dagger \tau_{f, m}^{(\Lambda)}\left(A_R^{(4)}\right) \right\rangle_{\beta, m}^{(\Lambda)}
=
-\left\langle  \tau_{f, m}^{(\Lambda)}\left(A_R^{(1)}\right)^\dagger \tau_{f, m}^{(\Lambda)}\left(A_R^{(4)}\right)  \right\rangle_{\beta, m}^{(\Lambda)}
\]
and hence
\[
\left(\Phi^{(1)}\Omega_\omega, \Phi^{(4)}\Omega_\omega \right)=0.
\]
Similarly, we have
\[
U\left(\frac{\pi}{2}\right)^\dagger S^{(1)}U\left(\frac{\pi}{2}\right) =-S^{(1)},
\quad
U\left(\frac{\pi}{2}\right)^\dagger S^{(6)}U\left(\frac{\pi}{2}\right)=S^{(4)},
\]
which implies
\[
\left\langle  \tau_{f, m}^{(\Lambda)}\left(A_R^{(1)}\right)^\dagger \tau_{f, m}^{(\Lambda)}\left(A_R^{(6)}\right) \right\rangle_{\beta, m}^{(\Lambda)}
=
-\left\langle  \tau_{f, m}^{(\Lambda)}\left(A_R^{(1)}\right)^\dagger \tau_{f, m}^{(\Lambda)}\left(A_R^{(4)}\right)  \right\rangle_{\beta, m}^{(\Lambda)}=0.
\]
Hence 
\[
\left(\Phi^{(1)}\Omega_\omega, \Phi^{(6)}\Omega_\omega \right)=0.
\]
Then, taking $\theta = \pm \pi/4$, we have
\begin{align*}
0&=
\left\langle  \tau_{f, m}^{(\Lambda)}\left(A_R^{(1)}\right)^\dagger \tau_{f, m}^{(\Lambda)}\left(A_R^{(6)}\right) \right\rangle_{\beta, m}^{(\Lambda)}\\
&=
\frac{1}{\sqrt{2}}
\left\langle  \tau_{f, m}^{(\Lambda)}\left(A_R^{(3)}\right)^\dagger \left[  \tau_{f, m}^{(\Lambda)}\left(A_R^{(4)}\right) +\tau_{f, m}^{(\Lambda)}\left(A_R^{(6)}\right)\right] \right\rangle_{\beta, m}^{(\Lambda)}
\\
&=
-\frac{1}{\sqrt{2}}\left\langle  \tau_{f, m}^{(\Lambda)}\left(A_R^{(3)}\right)^\dagger \left[  \tau_{f, m}^{(\Lambda)}\left(A_R^{(4)}\right) -\tau_{f, m}^{(\Lambda)}\left(A_R^{(6)}\right)\right] \right\rangle_{\beta, m}^{(\Lambda)}.
\end{align*}
This yields
\[
\left(\Phi^{(3)}\Omega_\omega, \Phi^{(4)}\Omega_\omega+\Phi^{(6)}\Omega_\omega \right)
=
-\left(\Phi^{(3)}\Omega_\omega, \Phi^{(4)}\Omega_\omega-\Phi^{(6)}\Omega_\omega \right),
\]
which shows
\[
\left(\Phi^{(3)}\Omega_\omega, \Phi^{(4)}\Omega_\omega\right)=0=(\Phi^{(3)}\Omega_\omega, \Phi^{(6)}\Omega_\omega ).
\]
Collectively, we find that $\Phi^{(a)}\Omega_\omega$ and $\Phi^{(b)}\Omega_\omega$ are linearly independent for any $a\in \{1,3\}$ and $b\in \{4,6\}$.

Furthermore, we will prove that $\Phi^{(1)}\Omega_\omega$ and $\Phi^{(3)}\Omega_\omega$ are linearly independent.
For this purpose, we introduce ladder operators by
\[
A_\pm := \frac{1}{\sqrt 2}\left(A_R^{(3)} \pm iA_R^{(1)} \right)
\]
and define
\[
\Phi_\pm
:=
\pi_\omega\left(\tau_{f, 0}\left(A_\pm\right)\right).
\]
Now we claim that
\begin{equation}
\label{eq.nonvanishupdown}
\left\| \Phi_\pm \Omega_\omega\right\| >0,
\end{equation}
which yields the desired result.
Indeed, by the anti-commutation relation, we learn
\begin{equation}
\label{eq.updownApm}
\left[Q^{(2)}, A_+\right]= A_+, \quad \left[Q^{(2)}, A_-\right]= -A_-.
\end{equation}
Since $[H^{(\Lambda)}(m), Q^{(2)}]=0$, we can choose the ground-state vectors 
$\{\Psi_j^{(\Lambda)}\}_{j=1}^d$ of the Hamiltonian $H^{(\Lambda)}(m)$ so that they are also the eigenvactors of $Q^{(2)}$. Namely, they satisfy $Q^{(2)}\Psi_j^{(\Lambda)} = q_j \Psi_j^{(\Lambda)}$ with the eigenvalue $q_j$.
We deduce from (\ref{eq.updownApm}) that for any~$j$
\[
Q^{(2)} A_+\Psi_j^{(\Lambda)} = \left(q_j+1 \right)A_+\Psi_j^{(\Lambda)},\quad
Q^{(2)} A_-\Psi_j^{(\Lambda)} = \left(q_j-1 \right)A_-\Psi_j^{(\Lambda)},
\]
which implies that $A_\pm\Psi_j^{(\Lambda)}$ are the eigenvectors of $Q^{(2)}$
when they are non-vanishing. We also have
$$
Q^{(2)}f(\mathcal{H}^{(\Lambda)}(m))A_\pm\Psi_j^{(\Lambda)} = (q_j\pm 1)f(\mathcal{H}^{(\Lambda)}(m))A_\pm\Psi_j^{(\Lambda)}. 
$$ 
This implies 
$$
\langle f(\mathcal{H}^{(\Lambda)}(m))A_+ \Psi_j^{(\Lambda)}, f(\mathcal{H}^{(\Lambda)}(m))A_-\Psi_j^{(\Lambda)} \rangle = 0.
$$ 
Combining this with (\ref{alphatmA}) and (\ref{eq.timelocal}), we obtain 
\begin{align*}
\left\langle  \tau_{f, m}^{(\Lambda)}\left(A_+\right)^\dagger \tau_{f, m}^{(\Lambda)}\left(A_-\right) \right\rangle_{\beta, m}^{(\Lambda)}
&=
\frac{1}{d}\sum_{j=1}^d\left\langle A_+ \Psi_j^{(\Lambda)},  f(\mathcal{H}^{(\Lambda)}(m))^2 A_- \Psi_j^{(\Lambda)} \right\rangle\\
&=0.
\end{align*}
Thus, the two vectors $\Phi_+ \Omega_\omega$ and $\Phi_- \Omega_\omega$ are orthogonal with each other.
Then, together with $\sqrt 2\Phi_\pm \Omega_\omega = (\Phi^{(3)} \pm i \Phi^{(1)})\Omega_\omega$, the positivity (\ref{eq.nonvanishupdown}) immediately implies that $\Phi^{(1)}\Omega_\omega$ and $\Phi^{(3)}\Omega_\omega$ are linearly independent.

We now turn to the proof of (\ref{eq.nonvanishupdown}).
To repeat the argument in Sec.~\ref{sect.NG}, let $\sqrt 2 Q_\pm(x) := S^{(3)}(x) \pm iS^{(1)}(x)$ and write
\[
A_\pm = \frac{1}{\vert \Omega_R\vert}\sum_{x \in \Omega_R}(-1)^{x^{(1)}+\cdots +x^{(\nu)}}Q_\pm(x)
\]
for short. We will use an upper bound for the staggered magnetization.
Since $[Q_\pm(x), S^{(3)}(x)]=\pm 2S^{(2)}(x)$, we have
\begin{align*}
\left[S^{(3)}[\tilde h'], A_\pm \right]
&=
\frac{2}{\vert \Omega_R\vert} \sum_{x \in \Omega_R}(-1)^{x^{(1)}+\cdots +x^{(\nu)}} \left[S^{(3)}(x), Q_\pm(x)\right]\\
&=\mp
\frac{4}{\vert \Omega_R\vert} \sum_{x \in \Omega_R}(-1)^{x^{(1)}+\cdots +x^{(\nu)}}S^{(2)}(x).
\end{align*}
Therefore, the staggered magnetization $m_s^{(\Lambda)}(m)$ is written as
\begin{equation}
\label{eq.staggeredcommu}
\begin{split}
	\mp 4 m_s^{(\Lambda)}(m)
	&=
	\omega_m^{(\Lambda)}\left(\left[S^{(3)}[\tilde h'], A_\pm \right]\right)\\
	&=\omega_m^{(\Lambda)}\left(S^{(3)}[\tilde h']A_\pm \right)
	-
	\omega_m^{(\Lambda)}\left(A_\pm S^{(3)}[\tilde h']\right).
\end{split}
\end{equation}

Next, by Lemma~\ref{lem.A}, we note that
\begin{gather*}
U^{(3)}\left(\frac{\pi}{2}\right)^\dagger S^{(1)}(x)U^{(3)}\left(\frac{\pi}{2}\right)
=-S^{(1)}(x),
\, U^{(3)}\left(\frac{\pi}{2}\right)^\dagger S^{(2)}(x)U^{(3)}\left(\frac{\pi}{2}\right)
=-S^{(2)}(x) 
\end{gather*}
where $U^{(a)}(\theta)$ is given in (\ref{eq.globalrot}).
Using this and $[H^{(\Lambda)}(0), U^{(a)}(\theta)]=0$, we obtain that
\begin{equation}
\label{AS3h'exp}
\begin{split}
	\left\langle A_\pm S^{(3)}[\tilde h']\right\rangle_{\beta, m}^{(\Lambda)}
	&=
	\frac{1}{Z_{\beta, m}^{(\Lambda)}}\mathrm{Tr} \left(U^{(3)}\left(\frac{\pi}{2}\right)^\dagger A_\pm S^{(3)}[\tilde h']e^{-\beta H^{(\Lambda)}(m)} U^{(3)}\left(\frac{\pi}{2}\right)\right)\\
	&=\frac{1}{Z_{\beta, m}^{(\Lambda)}}
	\mathrm{Tr} \left( A_\mp S^{(3)}[\tilde h']e^{-\beta (H^{(\Lambda)}(0)-mO^{(\Lambda)})} \right).
\end{split}
\end{equation}

By the definition of the phase factors $\theta_j(x)$ on the hopping term, the free part of the Hamiltonian $H^{(\Lambda)}_K$ is translation invariant under the lattice shift in the $x^{(\nu)}$-direction 
except for the boundary condition. Write $\tilde T$ for the unite lattice shift which is defined by $\tilde T \psi_i(x)=\psi_i(x+e_\nu)$. 
Since the shift operator $\tilde T$ is realized by a unitary transformation on the fermion Fock space, 
one has $\tilde T \psi_i^\dagger(x)=\psi_i^\dagger(x+e_\nu)$.
Then, from the periodic boundary condition of the lattice, the equation  (\ref{AS3h'exp}) is written 
\[
\left\langle A_\pm S^{(3)}[\tilde h']\right\rangle_{\beta, m}^{(\Lambda)}
=
\frac{1}{Z_{\beta, m}^{(\Lambda)}}
\mathrm{Tr} \left(\tilde T\left(A_\mp S^{(3)}[\tilde h']\right)e^{-\beta (\tilde T H^{(\Lambda)}(0)+mO^{(\Lambda)})} \right),
\]
where we have used
\begin{align*}
\sum_{x \in \Lambda}(-1)^{\sum_{\mu=1}^\nu x^{(\mu)}}S^{(2)}(x + e_\nu)
&=
\sum_{x \in \Lambda_\mathrm{odd} }S^{(2)}(x)
-\sum_{x \in \Lambda \backslash \Lambda_\mathrm{odd}}S^{(2)}(x)\\
&=-O^{(\Lambda)}.
\end{align*}
Besides, using (\ref{eq.BC}) and (\ref{eq.BCtr}), the boundary condition of $\tilde T H^{(\Lambda)}_K$ can be changed by the unitary transformation $U_{\mathrm{BC}, \nu}(L \to L+1)$.
Hence we have
\[
\left\langle A_\pm S^{(3)}[\tilde h']\right\rangle_{\beta, m}^{(\Lambda)}=
\left\langle \tilde T\left(A_\mp S^{(3)}[\tilde h']\right)\right\rangle_{\beta, m}^{(\Lambda)}.
\]
In order to rewrite the operator $\tilde T(A_\mp S^{(3)}[\tilde h'])$ in the right-hand side, let 
\[
\tilde \Omega_R:= \left\{x \in \Lambda \colon -R+2 \le x^{(\nu)} \le R+1, \, -R+1 \le x^{(\mu)} \le R, \, \mu =1, 2,\dots, \nu-1 \right\}.
\]
Then, we obtain that 
\begin{align*}
&\sum_{x \in \Omega_R}(-1)^{x^{(1)}+\cdots +x^{(\nu)}}
\left\langle Q_\mp(x+e_1)  \tilde T(S^{(3)}[\tilde h'])\right\rangle_{\beta, m}^{(\Lambda)}\\
&=
\sum_{x \in \tilde \Omega_R \cap \Lambda_\mathrm{odd}}
\left\langle Q_\mp(x)  \tilde T(S^{(3)}[\tilde h'])\right\rangle_{\beta, m}^{(\Lambda)}
-   \sum_{x \in \tilde \Omega_R \cap (\Lambda \backslash \Lambda_\mathrm{odd})}
\left\langle Q_\mp(x)  \tilde T(S^{(3)}[\tilde h'])\right\rangle_{\beta, m}^{(\Lambda)}\\
&=-\sum_{x \in \tilde \Omega_R}(-1)^{x^{(1)}+\cdots +x^{(\nu)}} \left\langle 
Q_\mp(x)  \tilde T(S^{(3)}[\tilde h'])\right\rangle_{\beta, m}^{(\Lambda)}.
\end{align*}
Hence, the above equation (\ref{eq.staggeredcommu}) becomes
\begin{equation*}
\begin{split}
	\mp 4 m_s^{(\Lambda)}(m)
	&=
	\omega_m^{(\Lambda)}\left(S^{(3)}[\tilde h']A_\pm \right)
	+\omega_m^{(\Lambda)}\left(\tilde A_\mp  \tilde T(S^{(3)}[\tilde h'])\right),
\end{split}
\end{equation*}
where
\[
\tilde A_\mp := \frac{1}{\vert \Omega_R\vert}\sum_{x \in \tilde \Omega_R}(-1)^{x^{(1)}+\cdots +x^{(\nu)}}
Q_\mp(x).
\]
We write 
\[
D_\mp:=\tilde A_\mp - A_\mp .
\]
Note that 
\begin{eqnarray*}
\tilde T(S^{(3)}[\tilde h'])&=&\sum_{x\in\Lambda}S^{(3)}(x+e_\nu)\tilde{h}'(x)\\
&=&\sum_{x\in\Lambda}S^{(3)}(x)\tilde{h}'(x-e_\nu)\\
&=&\sum_{x\in\Lambda}S^{(3)}(x){\tilde h}'_{\rm shift}(x)=S^{(3)}[{\tilde h}'_{\rm shift}],
\end{eqnarray*}
where we have written ${\tilde h}'_{\rm shift}(x):=\tilde{h}'(x-e_\nu)$. 
From these observations, we have
\begin{eqnarray*}
4\left\vert m_s^{(\Lambda)}(m)\right\vert
&\le&
\left\vert\omega_m^{(\Lambda)}\left(S^{(3)}[\tilde h']A_\pm \right)\right\vert
+\left\vert\omega_m^{(\Lambda)}\left({\tilde A}_\mp S^{(3)}[\tilde h'_{\rm shift}]\right)\right\vert\\
&\le& \left\vert\omega_m^{(\Lambda)}\left(S^{(3)}[\tilde h']A_\pm \right)\right\vert
+\left\vert\omega_m^{(\Lambda)}\left(A_\mp S^{(3)}[\tilde h'_{\rm shift}]\right)\right\vert\\
&\quad&+\left\vert\omega_m^{(\Lambda)}\left(D_\mp S^{(3)}[\tilde h'_{\rm shift}]\right)\right\vert.
\end{eqnarray*}
Clearly, we have 
\begin{equation}
\begin{split}
	\label{eq.Staggeredbound}
	\frac{16}{3}\left\vert m_s^{(\Lambda)}(m)\right\vert^2
	&\le \left\vert\omega_m^{(\Lambda)}\left(S^{(3)}[\tilde h']A_\pm \right)\right\vert^2
	+\left\vert\omega_m^{(\Lambda)}\left(A_\mp S^{(3)}[\tilde h'_{\rm shift}]\right)\right\vert^2\\
	&\quad+\left\vert\omega_m^{(\Lambda)}\left(D_\mp S^{(3)}[\tilde h'_{\rm shift}]\right)\right\vert^2.
\end{split}
\end{equation}

In order to estimate the right-hand side of (\ref{eq.Staggeredbound}),
we introduce the spectral decomposition $1 = P[0, 2\delta) +P[2\delta, +\infty)$, where $PI$ denotes the spectral projection onto the interval $I$ for Hamiltonian $\mathcal{H}^{(\Lambda)}(m)$, and $\delta$ is defined in the support of $f$: $\mathrm{supp} \, f \subset (\delta, 2r)$.
From now on, we also assume $r>1$. We write $P_0$ for the projection onto the sector of the ground states. 

Consider the first term in the right-hand side of (\ref{eq.Staggeredbound}),
Using $Q_\pm(x) = Q_\mp^\dagger$ and the Cauchy--Schwarz inequality, we have
\[
\left\vert\omega_m^{(\Lambda)}\left(S^{(3)}[\tilde h']P_0  A_\pm \right)\right\vert^2
\le
\omega_m^{(\Lambda)}\left(A_\mp  A_\pm \right)
\omega_m^{(\Lambda)}\left(S^{(3)}[\tilde h']  P_0S^{(3)}[\tilde h'] \right).
\]
Here the right-hand side must be zero since the expectation value of $S^{(3)}[\tilde h']$  in the ground states vanishes (see the last line of Appendix~\ref{Appendix:eq.DuhamelEq}).
Therefore, by $P[0, 2\delta) = P_0+P(0, 2\delta)$, we deduce from  (\ref{eq.GSIB})  that
\begin{equation}
\label{Eq.SApm}
\begin{split}
	&\left\vert\omega_m^{(\Lambda)}\left(S^{(3)}[\tilde h']P[0, 2\delta)  A_\pm \right)\right\vert^2\\
	&\le
	\omega_m^{(\Lambda)}\left(A_\mp \mathcal{H}^{(\Lambda)}(m)P(0, 2\delta)  A_\pm \right)
	\omega_m^{(\Lambda)}\left(S^{(3)}[\tilde h'] \frac{ P(0, 2\delta)  }{\mathcal{H}^{(\Lambda)}(m)} S^{(3)}[\tilde h'] \right)
	\\
	&\le 2\delta\,
	\omega_m^{(\Lambda)}\left(A_\mp P(0, 2\delta)  A_\pm \right)
	\omega_m^{(\Lambda)}\left(S^{(3)}[\tilde h'] \frac{1-P_0 }{\mathcal{H}^{(\Lambda)}(m)} S^{(3)}[\tilde h'] \right)\\
	&\le \mathcal{K}_1 \delta \cdot R^{\nu}.
\end{split}
\end{equation}
where $\mathcal{K}_1$ is the positive constant. 

Similarly, using $f(s) = s^{\epsilon/2}$ for $2\delta \le s \le r$, we have
\begin{equation}
\label{Eq.SApm2}
\begin{split}
	&\left\vert\omega_m^{(\Lambda)}\left(S^{(3)}[\tilde h']P[ 2\delta, +\infty)  A_\pm \right)\right\vert^2\\
	&\le
	\omega_m^{(\Lambda)}\left(A_\mp P[ 2\delta, +\infty)   A_\pm \right)
	\omega_m^{(\Lambda)}\left(S^{(3)}[\tilde h']P[ 2\delta, +\infty)   S^{(3)}[\tilde h'] \right)\\
	&\le
	\left[\delta^{-\epsilon}\omega_m^{(\Lambda)}\left(A_\mp f\left(\mathcal{H}^{(\Lambda)}(m) \right)^2  A_\pm \right)
	+\omega_m^{(\Lambda)}\left(A_\mp P[r, +\infty)   A_\pm \right) \right]\\
	&\quad \times
	\omega_m^{(\Lambda)}\left(S^{(3)}[\tilde h'](1- P_0)   S^{(3)}[\tilde h'] \right).
\end{split}
\end{equation}
In addition, by (\ref{eq.numfirst}), we obtain
\begin{align*}
\omega_m^{(\Lambda)}\left(A_\mp P[r, +\infty)   A_\pm \right)
&\le
r^{-1}\omega_m^{(\Lambda)}\left(A_\mp \mathcal{H}^{(\Lambda)}(m)   A_\pm \right) \\
&\le
3\left[\omega_m^{(\Lambda)}\left(A_R^{(1)} \mathcal{H}^{(\Lambda)}(m)   A_R^{(1)} \right) 
+\omega_m^{(\Lambda)}\left(A_R^{(3)} \mathcal{H}^{(\Lambda)}(m)   A_R^{(3)} \right)\right]\\
&\le \frac{\mathcal{K}_2}{ R^\nu},
\end{align*}
where $\mathcal{K}_2$ is the positive constant and we have used the Cauchy--Schwarz inequality for $\omega_m^{(\Lambda)}(A_R^{(1)}  \mathcal{H}^{(\Lambda)}(m)A_R^{(3)} )$ in the second inequality.
Moreover, the Cauchy--Schwarz inequality and (\ref{eq.GSIB}) imply that
\begin{align*}
\omega_m^{(\Lambda)}\left(S^{(3)}[\tilde h'](1- P_0)   S^{(3)}[\tilde h'] \right)^2
&\le
\omega_m^{(\Lambda)}\left(S^{(3)}[\tilde h']\mathcal{H}^{(\Lambda)}(m)    S^{(3)}[\tilde h'] \right)\\
&\quad \times\omega_m^{(\Lambda)}\left(S^{(3)}[\tilde h']\mathcal{H}^{(\Lambda)}(m) ^{-1} (1- P_0)   S^{(3)}[\tilde h'] \right)\\
&\le \frac{(4R)^\nu}{g}\omega_m^{(\Lambda)}\left(S^{(3)}[\tilde h']\mathcal{H}^{(\Lambda)}(m)    S^{(3)}[\tilde h'] \right).
\end{align*}
Furthermore, we obtain from Lemma~\ref{lem.DC} that
\begin{align*}
2\omega_m^{(\Lambda)}\left(S^{(3)}[\tilde h']\mathcal{H}^{(\Lambda)}(m)   S^{(3)}[\tilde h'] \right)
&=\omega_m^{(\Lambda)}\left( \left[S^{(3)}[\tilde h'], \left[ H^{(\Lambda)}(m),   S^{(3)}[\tilde h'] \right]\right]\right)
\\
&\le \mathcal{K}_3\left(R^{\nu-2}+\vert m\vert R^\nu\right)
\end{align*}
with the positive constant $\mathcal{K}_3$. 
Hence (\ref{Eq.SApm2}) becomes
\begin{equation}
\label{Eq.SApm3P}
\begin{split}
	\left\vert\omega_m^{(\Lambda)}\left(S^{(3)}[\tilde h']P[ 2\delta, +\infty)  A_\pm \right)\right\vert^2
	&\le
	\mathcal{K}_4\left[\delta^{-\epsilon}\omega_m^{(\Lambda)}\left(A_\mp f\left(\mathcal{H}^{(\Lambda)}(m) \right)^2  A_\pm \right)
	+\frac{\mathcal{K}_2}{ R^\nu}\right]\\
	&\quad \times R^\nu
	\sqrt{R^{-2}+\vert m\vert}.
\end{split}
\end{equation}
with the positive constant $\mathcal{K}_4$. 
Combining this with (\ref{Eq.SApm}), we obtain 
\begin{equation*}
\label{1term}
\begin{split}
	&\left\vert\omega_m^{(\Lambda)}\left(S^{(3)}[\tilde h']A_\pm \right)\right\vert^2\\
	&\le 2\mathcal{K}_1\delta\cdot R^\nu
	+2 \mathcal{K}_4\left[\delta^{-\epsilon}\omega_m^{(\Lambda)}\left(A_\mp f\left(\mathcal{H}^{(\Lambda)}(m) \right)^2  A_\pm \right)
	+\frac{\mathcal{K}_2}{ R^\nu}\right] R^\nu \sqrt{R^{-2}+\vert m\vert}\\
\end{split}
\end{equation*}
for the first term in the right-hand side of (\ref{eq.Staggeredbound}).

In the same way, one can treat the second term $\omega_m^{(\Lambda)}( {A}_\mp  S^{(3)}[\tilde h'_{\rm shift}])$ 
in the right-hand side of (\ref{eq.Staggeredbound}). 
Namely, the above arguments give the same bound for the quantity $\omega_m^{(\Lambda)}( {A}_\mp  S^{(3)}[\tilde h'_{\rm shift}])$.

Finally, we evaluate the third term in the right-hand side of (\ref{eq.Staggeredbound}).  
We will use the spectral projection again, and will not consider the contribution from the sector of the ground states 
because of the same reason as above. 
To begin with, we write  
\[
D^{(a)}_R
:=
\frac{1}{\vert \Omega_R\vert}\sum_{x \in \Omega_{R} \backslash \tilde \Omega_{R}}(-1)^{x^{(1)}+\cdots +x^{(\nu)}}
S^{(a)}(x),
\]
and note that
\begin{align*}
2\omega_m^{(\Lambda)}\left(D^{(a)}_R  \mathcal{H}^{(\Lambda)}(m) D^{(a)}_R\right)
&=
\omega_m^{(\Lambda)}\left(\left[ D^{(a)}_R, \left[ H^{(\Lambda)}(m), D^{(a)}_R\right]\right]\right) \\
&\le
{\rm Const.}\frac{R^{\nu-1}}{R^{2\nu}} 
= 
\frac{{\rm Const.}}{R^{\nu+1}}.
\end{align*}
Moreover, by the Cauchy--Schwarz inequality,
\begin{align*}
\omega_m^{(\Lambda)}\left(D_\mp \mathcal{H}^{(\Lambda)}(m)  D_\pm \right)
&\le
3	\left[\omega_m^{(\Lambda)}\left(D_R^{(1)} \mathcal{H}^{(\Lambda)}(m)   D_R^{(1)} \right) 
+\omega_m^{(\Lambda)}\left(D_R^{(3)} \mathcal{H}^{(\Lambda)}(m)   D_R^{(3)} \right)\right]\\
&\le 		\frac{{\rm Const.}}{R^{\nu+1}}.
\end{align*}
This yields 
\begin{equation}
\label{omegaDP01S3hbound}
\begin{split}
	&\left\vert\omega_m^{(\Lambda)}\left(D_\mp P(0, 1) S^{(3)}[{\tilde h}'_{\rm shift}]  \right)\right\vert^2\\
	&\le
	\omega_m^{(\Lambda)}\left(D_\mp \mathcal{H}^{(\Lambda)}(m)P(0, 1)  D_\pm \right)
	\omega_m^{(\Lambda)}\left(S^{(3)}[{\tilde h}'_{\rm shift}] \frac{ P(0, 1)  }{\mathcal{H}^{(\Lambda)}(m)} S^{(3)}[{\tilde h}'_{\rm shift}] \right)
	\\
	&\le 
	\omega_m^{(\Lambda)}\left(D_\mp \mathcal{H}^{(\Lambda}(m)  D_\pm \right)
	\omega_m^{(\Lambda)}\left(S^{(3)}[{\tilde h}'_{\rm shift}] \frac{1-P_0 }{\mathcal{H}^{(\Lambda)}(m)} S^{(3)}[{\tilde h}'_{\rm shift}] \right)\\
	&\le
	\frac{\mathcal{K}_5}{R},
\end{split}
\end{equation}
where $\mathcal{K}_5$ is the positive constant, and we have used
\begin{equation}
\label{omegaS3hHinvhS3h}
\begin{split}
	\omega_m^{(\Lambda)}(S^{(3)}[{\tilde h}'_{\rm shift}] \mathcal{H}^{(\Lambda)}(m)^{-1}(1-P_0)S^{(3)}[{\tilde h}'_{\rm shift}])
	&\le
	\frac{1}{g}\sum_{x \in \Lambda}\vert{\tilde h}'_{\rm shift}(x))\vert^2\\
	&\le {\rm Const.}\frac{R^\nu}{g}.
\end{split}
\end{equation}
Similarly,
\begin{equation}
\label{Eq.SApm2P}
\begin{split}
	&\left\vert\omega_m^{(\Lambda)}\left(D_\mp P[1, +\infty))S^{(3)}[{\tilde h}'_{\rm shift}]\right)\right\vert^2\\
	&\le
	\omega_m^{(\Lambda)}\left(D_\mp P[1, +\infty)   D_\pm \right)
	\omega_m^{(\Lambda)}\left(S^{(3)}[{\tilde h}'_{\rm shift}](1- P_0)   S^{(3)}[{\tilde h}'_{\rm shift}] \right).
\end{split}
\end{equation}
In order to estimate the right-hand side, we use the following inequalities: 
\begin{equation}
\label{omegaDPDbound}
\begin{split}
	\omega_m^{(\Lambda)}\left(D_\mp P[1, +\infty)   D_\pm \right)
	&\le
	\omega_m^{(\Lambda)}\left(D_\mp \mathcal{H}^{(\Lambda)}(m)   D_\pm \right) \\
	&\le \mathrm{Const.}\,\frac{1}{ R^{\nu+1}},
\end{split}
\end{equation}
and 
\begin{align*}
&\omega_m^{(\Lambda)}\left(S^{(3)}[{\tilde h}'_{\rm shift}](1- P_0)   S^{(3)}[{\tilde h}'_{\rm shift}] \right)^2\\
&\le
\omega_m^{(\Lambda)}\left(S^{(3)}[{\tilde h}'_{\rm shift}] \mathcal{H}^{(\Lambda)}(m)    S^{(3)}[{\tilde h}'_{\rm shift}] \right) \\
&\quad \times
\omega_m^{(\Lambda)}\left(S^{(3)}[{\tilde h}'_{\rm shift}] \mathcal{H}^{(\Lambda)}(m)^{-1}  (1- P_0)  S^{(3)}[{\tilde h}'_{\rm shift}] \right) \\
&\le
{\rm Const.}R^\nu	\omega_m^{(\Lambda)}\left(S^{(3)}[{\tilde h}'_{\rm shift}] \mathcal{H}^{(\Lambda)}(m)    S^{(3)}[{\tilde h}'_{\rm shift}] \right),
\end{align*}
where we have used the above bound (\ref{omegaS3hHinvhS3h}) for getting the second inequality. 
Further, in the same way as in Lemma 9.2, the right-hand side can be estimated as 
\begin{align*}
&2\omega_m^{(\Lambda)}\left(S^{(3)}[{\tilde h}'_{\rm shift}] \mathcal{H}^{(\Lambda)}(m) S^{(3)}[{\tilde h}'_{\rm shift}]\right)\\
&=
\omega_m^{(\Lambda)}\left(\left[ S^{(3)}[{\tilde h}'_{\rm shift}], \left[ H^{(\Lambda)}(m), S^{(3)}[{\tilde h}'_{\rm shift}]\right]\right]\right) \\
&\le
{\rm Const.}(R^{\nu-2} + \vert m\vert R^{\nu}).
\end{align*}
Hence
\begin{align*}
\omega_m^{(\Lambda)}\left(S^{(3)}[{\tilde h}'_{\rm shift}](1- P_0)   S^{(3)}[{\tilde h}'_{\rm shift}] \right)
&\le
{\rm Const.}R^\nu\sqrt{R^{-2}+\vert m\vert}.
\end{align*}
Substituting this and (\ref{omegaDPDbound}) into the right-hand side of (\ref{Eq.SApm2P}), we obtain   	
\[
\left\vert\omega_m^{(\Lambda)}\left(D_\pm P[1,+\infty)S^{(3)}[{\tilde h}'_{\rm shift}]   \right)\right\vert^2
\le
{\rm Const.}\frac{\sqrt{R^{-2}+\vert m\vert}}{R}.
\]
Further, by combining this with (\ref{omegaDP01S3hbound}), one has 
\begin{equation}
\label{3term}
\left\vert\omega_m^{(\Lambda)}\left(D_\pm S^{(3)}[{\tilde h}'_{\rm shift}]   \right)\right\vert^2
\le
\frac{2\mathcal{K}_5}{R}+	\frac{2{\mathcal{K}_6}\sqrt{R^{-2}+\vert m\vert}}{R},
\end{equation}
where $\mathcal{K}_6$ is the positive constant. 

Consequently, by substituting (\ref{1term}),  (\ref{3term}) and the bound for 
$\omega_m^{(\Lambda)}( {A}_\mp  S^{(3)}[\tilde h'_{\rm shift}])$, whose upper bound is the same as that for (\ref{1term}) as mentioned above, into the right-hand side of (\ref{eq.Staggeredbound}), we arrive at
\begin{equation*}
\begin{split}
	\frac{8}{3}\left\vert m_s^{(\Lambda)}(m) \right\vert^2
	&\le 
	2\mathcal{K}_4\left[\delta^{-\epsilon}\omega_m^{(\Lambda)}\left(A_\mp f\left(\mathcal{H}^{(\Lambda)}(m) \right)^2  A_\pm \right)
	+\frac{\mathcal{K}_2}{ R^\nu}\right]
	R^\nu\sqrt{R^{-2}+\vert m\vert} \\
	&\quad+2\mathcal{K}_1\delta \cdot R^{\nu}+\frac{\mathcal{K}_5+\mathcal{K}_6\sqrt{R^{-2}+\vert m\vert}}{R} 
\end{split}
\end{equation*}
where $\mathcal{K}_i$ are the positive constants. 
Taking the double limit $m\searrow 0$ and $\Lambda\nearrow \mathbb{Z}^\nu$, we have
\begin{equation}
\label{eq.Apmpositive}
\frac{8}{3}\left\vert m_s\right\vert^2
\le
2\mathcal{K}_4\times\omega_0\left(\tau_{f, 0} (A_\mp ) \tau_{f, 0}( A_\pm) \right)\frac{R^{\nu-1}}{\delta^\epsilon }
+2\mathcal{K}_2\delta \cdot R^{\nu}
+\frac{\mathcal{K}_7}{R}
\end{equation}
with the positive constant $\mathcal{K}_7$. 
We choose the parameter $\delta$ to satisfy $\delta < R^{-\nu-1}$. Then, one has 
\begin{equation}
\frac{8}{3}|m_s|^2-\frac{\mathcal{K}_8}{R}\le 2\mathcal{K}_4\times\omega_0\left(\tau_{f, 0} (A_\mp ) \tau_{f, 0}( A_\pm) \right)\frac{R^{\nu-1}}{\delta^\epsilon }
\end{equation}
with the positive constant $\mathcal{K}_8$. This inequality implies that for a sufficinetly large $R$, 
the value $\omega_0\left(\tau_{f, 0} (A_\mp ) \tau_{f, 0}( A_\pm) \right)$ is strictly positive.  
This is nothing but the desired result (\ref{eq.nonvanishupdown}).

The similar argument proves that the two sets $\{\Phi^{(4)}\Omega_\omega, \Phi^{(6)}\Omega_\omega\}$ and  $\{\Phi^{(5)}\Omega_\omega, \Phi^{(7)}\Omega_\omega\}$ 
are linearly independent respectively.

To summarize, we conclude that the set $\{\Phi^{(a)} \Omega_\omega\}_{a \neq 2,8}$ is linearly independent with each other. 
Consequently, we have the six Nambu--Goldstone modes, i.e., $N_\mathrm{NG}=N_\mathrm{BS}=6$.

Following through all the steps in this section with the appropriate substitutions, 
we can establish $N_\mathrm{NG}=N_\mathrm{BS}=2$ for the SU(2) NJL model. 
This completes the proof of Theorem~\ref{cor.NGn}.
\qed
\appendix

\section{Properties of SU(3) Gell-Mann matrices}
\label{sect.SU(3)}
We write $\lambda^{(a)}$, $a=1,2,\ldots,8$, for the SU(3) Gell-Mann matrices.   
The explicit expressions are given by 
\begin{equation*}
\lambda^{(1)}=
\begin{pmatrix}
	0 & 1 & 0 \\
	1 & 0 & 0 \\
	0 & 0 & 0 
\end{pmatrix},\quad 
\lambda^{(2)}=
\begin{pmatrix}
	0 & -i & 0 \\
	i & 0 & 0 \\
	0 & 0 & 0 
\end{pmatrix},\quad 
\lambda^{(3)}=
\begin{pmatrix}
	1 & 0 & 0 \\
	0 & -1 & 0 \\
	0 & 0 & 0
\end{pmatrix},
\end{equation*}
\begin{equation*}
\lambda^{(4)}=
\begin{pmatrix}
	0 & 0 & 1 \\
	0 & 0 & 0 \\
	1 & 0 & 0 
\end{pmatrix},\quad 
\lambda^{(5)}=
\begin{pmatrix}
	0 & 0 & -i \\
	0 & 0 & 0 \\
	i & 0 & 0 
\end{pmatrix},\quad 
\lambda^{(6)}=
\begin{pmatrix}
	0 & 0 & 0 \\
	0 & 0 & 1 \\
	0 & 1 & 0
\end{pmatrix},
\end{equation*}
\begin{equation*}
\lambda^{(7)}=
\begin{pmatrix}
	0 & 0 & 0 \\
	0 & 0 & -i \\
	0 & i & 0 
\end{pmatrix},\quad 
\lambda^{(8)}=\frac{1}{\sqrt{3}}
\begin{pmatrix}
	1 & 0 & 0 \\
	0 & 1 & 0 \\
	0 & 0 & -2 
\end{pmatrix}.
\end{equation*}
These satisfy the commutation relations, 
\begin{equation}
\label{lambdacommu}
[\lambda^{(a)},\lambda^{(b)}]=i\sum_{c=1}^8 f_{abc}\lambda^{(c)},
\end{equation}
where $f_{abc}$ are the structure constants which are totally antisymmetric. 
As is well known, for given two indices $a$ and $b$, 
the third index $c$ of the non-vanishing $f_{abc}$ are uniquely determined, and hence the above right-hand side is often written 
$if_{abc}$ without the sum about $c$.  
Their non-zero elements are $f_{123}=2$, $f_{147}=f_{246}=f_{257}=f_{345}=-f_{156}=-f_{367}=1$, and $f_{458}=f_{678}=\sqrt 3$. 

Let us check the SU(3) invariance of the Hamiltonian $H^{(\Lambda)}(0)$, i.e., $m=0$.  
We write 
\begin{equation}
S^{(a)}(x):=\Psi^\dagger(x)\lambda^{(a)}\Psi(x).
\end{equation}
Consider a transformation, 
\begin{equation}
\Psi(x)\rightarrow U(s)\Psi(x),
\end{equation}
where $U(s)=\exp[is \lambda^{(b)}]$ is a unitary transformation of SU(3) rotation with a small parameter $s\in\mathbb R$. 
Note that  
\begin{equation}
\begin{split}
	&\Psi^\dagger(x)U^\dagger(s)\lambda^{(a)}U(s)\Psi(x)\\
	&=\Psi^\dagger(x)\lambda^{(a)}\Psi(x)
	-is\Psi^\dagger(x)[\lambda^{(b)}\lambda^{(a)}-\lambda^{(a)}\lambda^{(b)}]\Psi(x)+\cdots\\
	&=\Psi^\dagger(x)\lambda^{(a)}\Psi(x)-s\sum_{c=1}^8 f_{abc}\Psi^\dagger(x)\lambda^{(c)}\Psi(x)+\cdots\\
	&=S^{(a)}(x)-s\sum_{c=1}^8 f_{abc}S^{(c)}(x)+\cdots,
\end{split}
\end{equation}
where we have used the relation (\ref{lambdacommu}). This implies that the operator $S^{(a)}(x)$ is transformed as 
\begin{equation}
S^{(a)}(x)\rightarrow S^{(a)}(x)-s\sum_{c=1}^8 f_{abc}S^{(c)}(x)+\cdots
\end{equation}
under the unitary transformation $U(s)$. Therefore, we have 
\begin{equation}
\begin{split}
	\sum_{a=1}^8 &S^{(a)}(x)S^{(a)}(y)\\
	&\rightarrow \sum_{a=1}^8 S^{(a)}(x)S^{(a)}(y)
	-s\sum_{a,c=1}^8 f_{abc}[S^{(c)}(x)S^{(a)}(y)+S^{(a)}(x)S^{(c)}(y)]+\cdots.
\end{split}
\end{equation}
By using the antisymmetricity $f_{cba}=-f_{abc}$, one can show that the second sum in the right-hand side is vanishing. 
This implies that the quantity $\sum_a S^{(a)}(x)S^{(a)}(y)$ is invariant under the SU(3) transformation $U(s)$. 
Since the interaction term in the Hamiltonian $H^{(\Lambda)}(m)$ of (\ref{HamSU(3)}) has the form 
$\sum_a S^{(a)}(x)S^{(a)}(x+e_\mu)$, the Hamiltonian $H^{(\Lambda)}(0)$ with $m=0$ is invariant under the SU(3) transformation. 
(See Appendix~\ref{Ap.U(1)} below.)


\section{Algebra of the operators $S^{(a)}(x)$} 
\label{app.alg}
In this appendix, we show that the following commutation relations are valid: 
\begin{equation}
\label{Scommu}
[S^{(a)}(x),S^{(b)}(x)]=i\sum_{c=1}^8 f_{abc}S^{(c)}(x)
\end{equation}
for $a,b=1,2,\ldots,8$ and $x\in\Lambda$. These yield 
\begin{equation}
\label{eq.B2}
\sum_{a=1}^8 [S^{(a)}(x)S^{(a)}(y),S^{(b)}(x)+S^{(b)}(y)]=0
\end{equation}
for any $b$ and $x\ne y$. Actually, 
\begin{equation}
\begin{split}
	&\sum_{a=1}^8 [S^{(a)}(x)S^{(a)}(y),S^{(b)}(x)+S^{(b)}(y)]\\
	&=\sum_{a=1}^8 [S^{(a)}(x),S^{(b)}(x)]S^{(a)}(y)+\sum_{a=1}^8 S^{(a)}(x)[S^{(a)}(y),S^{(b)}(y)]\\
	&=\sum_{a,c=1}^8 if_{abc}\{S^{(c)}(x)S^{(a)}(y)+S^{(a)}(x)S^{(c)}(y)\}=0,
\end{split}
\end{equation}
where we have also used $f_{cba}=-f_{abc}$. 

Let us show the relations (\ref{Scommu}). From the expression of $S^{(a)}(x)$, one has 
\begin{equation}
\begin{split}
	[S^{(a)}(x),S^{(b)}(x)]&=
	\sum_{i,j,k,\ell}[\psi_i^\dagger(x)\lambda_{i,j}^{(a)}\psi_j(x),\psi_k^\dagger(x)\lambda_{k,\ell}^{(b)}\psi_\ell(x)]\\
	&=\sum_{i,j,k,\ell}\lambda_{i,j}^{(a)}\lambda_{k,\ell}^{(b)}[\psi_i^\dagger(x)\psi_j(x),\psi_k^\dagger(x)\psi_\ell(x)].
\end{split}
\end{equation}
The commutator in the summand in the right-hand side can be calculated as follows: 
\begin{equation}
\label{eq.SaSb}
\begin{split}
	& \psi_i^\dagger(x)\psi_j(x)\psi_k^\dagger(x)\psi_\ell(x)-\psi_k^\dagger(x)\psi_\ell(x)\psi_i^\dagger(x)\psi_j(x)\\
	&=\psi_i^\dagger(x)\psi_j(x)\psi_k^\dagger(x)\psi_\ell(x)-\delta_{i,\ell}\psi_k^\dagger(x)\psi_j(x)
	+\psi_k^\dagger(x)\psi_i^\dagger(x)\psi_\ell(x)\psi_j(x)\\
	&=\psi_i^\dagger(x)\psi_j(x)\psi_k^\dagger(x)\psi_\ell(x)-\delta_{i,\ell}\psi_k^\dagger(x)\psi_j(x)
	+\psi_i^\dagger(x)\psi_k^\dagger(x)\psi_j(x)\psi_\ell(x)\\
	&=-\delta_{i,\ell}\psi_k^\dagger(x)\psi_j(x)+\delta_{j,k}\psi_i^\dagger(x)\psi_\ell(x). 
\end{split}
\end{equation}
Substituting this into the above right-hand side, we have 
\begin{equation}
\begin{split}
	[S^{(a)}(x),S^{(b)}(x)]&=\sum_{i,j,\ell}\lambda_{i,j}^{(a)}\lambda_{j,\ell}^{(b)}\psi_i^\dagger(x)\psi_\ell(x)
	-\sum_{i,j,k}\lambda_{i,j}^{(a)}\lambda_{k,i}^{(b)}\psi_k^\dagger(x)\psi_j(x)\\
	&=\Psi^\dagger(x)\lambda^{(a)}\lambda^{(b)}\Psi(x)-\Psi^\dagger(x)\lambda^{(b)}\lambda^{(a)}\Psi(x)\\
	&=i\sum_c f_{abc}\Psi^\dagger(x)\lambda^{(c)}\Psi(x)\\
	&=i\sum_c f_{abc}S^{(c)}(x),
\end{split} 
\end{equation}
where we have used the commutation relations (\ref{lambdacommu}). This is the desired result (\ref{Scommu}). 

\section{Rotation of the order parameters}
\label{Ap.U(1)}

In this Appendix, we show that 
\begin{equation}
\label{eq.order}
\langle S^{(a)}(x) \rangle_{\beta, m=0}^{(\Lambda)} = 0\quad  \ \mbox{for \ } a=1,2,\dots,8.
\end{equation}
This is the consequence of the SU(3) rotational symmetry of the thermal equilibrium state without 
the symmetry breaking field. 

First, we prove the following lemma.
\begin{lemma}
\label{lem.A}
Let $U_x^{(a)}(\theta)= e^{i\theta S^{(a)}(x)}$ with
a real parameter $\theta$. 
Then the following relations hold: 
\begin{align}
	& U_x^{(7)}(\theta)^\dagger S^{(1)}(x)U_x^{(7)}(\theta)
	=
	S^{(1)}(x)\cos \theta + S^{(4)}(x) \sin \theta \label{eq.rot3}, \\
	& U_x^{(2)}\left(\frac{\theta}{2}\right)S^{(1)}(x)U_x^{(2)}\left(\frac{\theta}{2}\right)^\dagger 
	=
	S^{(1)}(x)\cos \theta +S^{(3)}(x) \sin \theta  \label{eq.rot1}, \\
	& U_x^{(2)}\left(\frac{\theta}{2}\right)S^{(3)}(x)U_x^{(2)}\left(\frac{\theta}{2}\right)^\dagger 
	=
	S^{(3)}(x)\cos \theta -S^{(1)}(x) \sin \theta  \label{eq.rotS3}, \\
	& U_x^{(3)}\left(\frac{\theta}{2}\right)^\dagger S^{(1)}(x)U_x^{(3)}\left(\frac{\theta}{2}\right)
	=
	S^{(1)}(x)\cos \theta +S^{(2)}(x) \sin \theta  \label{eq.rot2}, \\
	& U_x^{(2)}(\theta) S^{(4)}(x)U_x^{(2)}(\theta)^\dagger
	=
	S^{(4)}(x)\cos \theta + S^{(6)}(x) \sin \theta  \label{eq.rotS4}\\
	& U_x^{(2)}(\theta) S^{(6)}(x)U_x^{(2)}(\theta)^\dagger
	=
	S^{(6)}(x)\cos \theta - S^{(4)}(x) \sin \theta  \label{eq.rotS6},\\
	& U_x^{(2)}(\theta) S^{(5)}(x)U_x^{(2)}(\theta)^\dagger
	=
	S^{(5)}(x)\cos \theta + S^{(7)}(x) \sin \theta  \label{eq.rotS7}\\
	& U_x^{(2)}(\theta) S^{(7)}(x)U_x^{(2)}(\theta)^\dagger
	=
	S^{(7)}(x)\cos \theta - S^{(5)}(x) \sin \theta  \label{eq.rotS5},\\
	& U_x^{(7)}(\theta)^\dagger S^{(2)}(x)U_x^{(7)}(\theta)
	=S^{(2)}(x)\cos \theta +S^{(5)}(x) \sin \theta  \label{eq.rot6}, \\
	& U_x^{(7)}(\theta)^\dagger S^{(3)}(x)U_x^{(7)}(\theta)\nonumber\\
	&=
	S^{(3)}(x)\cos^2 \theta +\frac{S^{(3)}(x) +\sqrt 3 S^{(8)}(x)}{2}\sin^2 \theta -S^{(6)}(x) \sin \theta \cos \theta,
	\label{eq.rot7}
\end{align}
\end{lemma}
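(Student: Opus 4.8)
The plan is to prove every identity by the same differentiation method for one-parameter conjugations, reducing each to the local algebra (\ref{Scommu}). For a representative case such as (\ref{eq.rot3}), I would set $A(\theta) := U_x^{(7)}(\theta)^\dagger S^{(1)}(x) U_x^{(7)}(\theta)$ and differentiate, using $\frac{d}{d\theta}U_x^{(7)}(\theta) = iS^{(7)}(x)U_x^{(7)}(\theta)$, to get $A'(\theta) = iU_x^{(7)}(\theta)^\dagger[S^{(1)}(x), S^{(7)}(x)]U_x^{(7)}(\theta)$. The relation (\ref{Scommu}) together with the explicit structure constants of Appendix~\ref{sect.SU(3)} rewrites the bracket as a single generator (here $[S^{(1)}(x), S^{(7)}(x)] = -iS^{(4)}(x)$ since $f_{174}=-1$), so that $A'(\theta) = U_x^{(7)}(\theta)^\dagger S^{(4)}(x)U_x^{(7)}(\theta)$. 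Differentiating this second conjugated operator closes the system into a finite set of linear constant-coefficient ODEs for the conjugated generators, which I solve with the initial data read off at $\theta=0$. (Equivalently one could expand by Hadamard's lemma, but the ODE form is cleaner to present.)

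For the identities (\ref{eq.rot3})--(\ref{eq.rot6}), the adjoint action of the chosen generator closes on a two-dimensional subspace spanned by a pair of $S^{(a)}(x)$'s, so the system reduces to a harmonic pair $A'=B$, $B'=-A$ (up to an overall factor), whose solution is exactly the stated $\cos,\sin$ combination. The only subtlety is the frequency: when the relevant structure constant equals $1$ (as for $f_{147}$, $f_{246}$, $f_{257}$) the oscillation frequency is $1$ and the natural variable is $\theta$, whereas for the rotations inside the $\{S^{(1)}(x),S^{(2)}(x),S^{(3)}(x)\}$ subalgebra the relevant constant is $f_{123}=2$, which doubles the frequency; the half-angle argument $\theta/2$ appearing in (\ref{eq.rot1})--(\ref{eq.rot2}) is precisely what restores the plain $\cos\theta,\sin\theta$ dependence. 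I would carry these out uniformly, reading off the single nonzero $f_{abc}$ for each pair from Appendix~\ref{sect.SU(3)} and fixing the two placements of the dagger (left versus right) once at the outset so the signs are handled consistently.

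The main obstacle is the last identity (\ref{eq.rot7}), where the adjoint action of $S^{(7)}(x)$ on $S^{(3)}(x)$ no longer closes on a two-dimensional space: successive commutators generate the three-dimensional subspace spanned by $\{S^{(3)}(x),S^{(6)}(x),S^{(8)}(x)\}$, reflecting the spin-$1$ (adjoint) structure. Setting $A=U^\dagger S^{(3)}(x)U$, $C=U^\dagger S^{(6)}(x)U$, $D=U^\dagger S^{(8)}(x)U$ with $U=U_x^{(7)}(\theta)$, the relations (\ref{Scommu}) give the coupled system $A'=-C$, $C'=A-\sqrt3\,D$, $D'=\sqrt3\,C$. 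The key simplification is to notice that $\sqrt3\,A+D$ is conserved, which lets me eliminate $D$ and arrive at the inhomogeneous equation $A''+4A = 3S^{(3)}(x)+\sqrt3\,S^{(8)}(x)$. Solving this with $A(0)=S^{(3)}(x)$ and $A'(0)=-S^{(6)}(x)$ produces $\cos 2\theta,\sin 2\theta$ terms, which I would then convert via $\cos 2\theta = 2\cos^2\theta-1$ and $\sin 2\theta = 2\sin\theta\cos\theta$ into exactly the $\cos^2\theta$, $\sin^2\theta$, $\sin\theta\cos\theta$ form of (\ref{eq.rot7}). The doubled frequency and the inhomogeneous term are the only genuinely new features relative to the two-dimensional cases; everything else is routine bookkeeping with the structure constants.
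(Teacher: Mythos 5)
Your proposal is correct and follows essentially the same route as the paper: both differentiate the conjugated operator in $\theta$, reduce via the commutation relations (\ref{Scommu}) and the explicit structure constants, and rely on uniqueness of solutions of the resulting constant-coefficient linear ODE systems. The only difference is one of direction -- the paper verifies that the stated right-hand side satisfies the same first-order ODE as the left-hand side and invokes uniqueness of the initial value problem, whereas you integrate the closed system forward (including the conserved combination $\sqrt{3}A+D$ and the inhomogeneous equation $A''+4A=3S^{(3)}(x)+\sqrt{3}S^{(8)}(x)$ for (\ref{eq.rot7})) -- which is a presentational variant, not a different method.
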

\begin{proof}[Proof of Lemma~\ref{lem.A}]
Consider first the case of (\ref{eq.rot3}). 
We write 
$$
L(\theta)= U_x^{(7)}(\theta)^\dagger S^{(1)}(x)U_x^{(7)}(\theta)
$$ 
for the right-hand side of (\ref{eq.rot3}). Then, one has 
\[
\frac{d}{d\theta}L(\theta, x)
=-iS^{(7)}(x) L(\theta) + i L(\theta)S^{(7)}(x)
=i[L_x(\theta), S^{(7)}(x)].
\]
For the left-hand side of (\ref{eq.rot3}), we write 
$$
R(\theta):= S^{(1)}(x)\cos \theta + S^{(4)}(x) \sin \theta.  
$$
By using the commutation relation $[S^{(a)}(x), S^{(b)}(x)] = i\sum_cf_{abc}S^{(c)}(x)$, one obtains 
\[
\frac{d}{d\theta}R(\theta) =i[R(\theta), S^{(7)}(x)].
\]
Since $L(0)=S^{(1)}(x)=R(0)$, we obtain $L(\theta) =R(\theta)$ for any $\theta \in \mathbb{R}$ 
by the uniqueness of the solution for the initial value problem.
This shows (\ref{eq.rot3}).
The proofs of (\ref{eq.rot1})--(\ref{eq.rot7}) are the same.
\end{proof}

Let $U^{(a)}(\theta)$ be the global rotation
\begin{equation}
\label{eq.globalrot}
U^{(a)}(\theta):=\prod_{x \in \Lambda}U^{(a)}_x(\theta).
\end{equation}
Our Hamiltonian $H^{(\Lambda)}(0)$ is invariant under this global rotation.
Indeed, the invariance of the hopping part follows from the anti-commutation relations.
For the interaction part, we note that $[S^{(a)}(x), S^{(b)}(y)] = 0$ and, by (\ref{eq.B2}),
\[
\sum_{x\in\Lambda}[H^{(\Lambda)}_\mathrm{int},  S^{(b)}(x)]
=\frac{g}{2}\sum_{\substack{x \in \Lambda:\\ |x-y|=1}}\sum_{a=1}^8[S^{(a)}(x)S^{(a)}(y), S^{(b)}(x)+S^{(b)}(y)] =0.
\]
This implies $[U^{(b)}(\theta), H_\mathrm{int}^{(\Lambda)}]=0$ for all $b$, and hence $[U^{(b)}(\theta), H^{(\Lambda)}(0)]=0$.
Using $U^{(7)}(\pi)^\dagger S^{(1)}U^{(7)}(\pi) = -S^{(1)}$, we have
\[
\langle S^{(1)}(x) \rangle_{\beta, m=0}^{(\Lambda)} =\langle U^{(7)}(\pi)^\dagger S^{(1)}(x)U^{(7)}(\pi) \rangle_{\beta, m=0}^{(\Lambda)}  
=
-\langle S^{(1)}(x) \rangle_{\beta, m=0}^{(\Lambda)}
=0.
\]
Changing the role of the indices in the equations (\ref{eq.rot3})--(\ref{eq.rot7}), we conclude (\ref{eq.order}).\qed

The following is also shown in the above proof.
\begin{proposition}
\label{prop.rot}
Let $U^{(a)}(\theta)$ be the global rotation defined by (\ref{eq.globalrot}).
Then, for any $\theta \in \mathbb{R}$ and $a=1, \dots, 8$, it holds that
\[
[U^{(a)}(\theta), H^{(\Lambda)}(0)]=0.
\]
\end{proposition}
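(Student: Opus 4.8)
The plan is to reduce the statement to a single commutator identity for the total generators and then verify it term by term. Fix a generator index $b \in \{1, \dots, 8\}$. Since $[S^{(a)}(x), S^{(b)}(y)] = 0$ for $x \neq y$, the local rotations at distinct sites commute, so the global rotation is $U^{(b)}(\theta) = \exp[i\theta Q^{(b)}]$ with the total charge $Q^{(b)} := \sum_{x \in \Lambda} S^{(b)}(x)$. Hence $[U^{(b)}(\theta), H^{(\Lambda)}(0)] = 0$ for all $\theta \in \mathbb{R}$ if and only if $[Q^{(b)}, H^{(\Lambda)}(0)] = 0$, and it suffices to check that $Q^{(b)}$ commutes separately with the hopping term $H_K^{(\Lambda)}$ and the interaction term $H_{\mathrm{int}}^{(\Lambda)}$.

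For the hopping term, I would use that each bond operator $\Psi^\dagger(x)\Psi(y) = \sum_i \psi_i^\dagger(x)\psi_i(y)$ is a flavor singlet. At the infinitesimal level this is the invariance of $\Psi^\dagger(x)\Psi(y)$ under the uniform substitution $\Psi(z) \to e^{is\lambda^{(b)}}\Psi(z)$, where the two exponentials cancel; equivalently, one checks from the anti-commutation relations, by a computation analogous to (\ref{eq.SaSb}), that $[Q^{(b)}, \Psi^\dagger(x)\Psi(y)] = 0$. Summing over bonds and directions then yields $[Q^{(b)}, H_K^{(\Lambda)}] = 0$ term by term.

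For the interaction term, the decisive input is the local identity (\ref{eq.B2}), namely $\sum_{a=1}^8 [S^{(a)}(x)S^{(a)}(y), S^{(b)}(x) + S^{(b)}(y)] = 0$ for $x \neq y$. Because $[S^{(b)}(z), S^{(a)}(x)S^{(a)}(x+e_\mu)] = 0$ whenever $z \notin \{x, x+e_\mu\}$, only the two on-site contributions survive in $[Q^{(b)}, \sum_a S^{(a)}(x)S^{(a)}(x+e_\mu)]$, and (\ref{eq.B2}) makes each bond term vanish. Summing over $x \in \Lambda$ and $\mu = 1, \dots, \nu$ gives $[Q^{(b)}, H_{\mathrm{int}}^{(\Lambda)}] = 0$, and combining the two pieces we obtain $[Q^{(b)}, H^{(\Lambda)}(0)] = 0$. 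Exponentiating recovers the claim for every $\theta \in \mathbb{R}$ and every $b = 1, \dots, 8$.

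The only genuine content lies in the cancellation (\ref{eq.B2}), which in turn rests on the commutation relations (\ref{Scommu}) together with the total antisymmetry $f_{cba} = -f_{abc}$ of the structure constants. This is the single place where the SU(3) algebra is essential, and I expect it to be the crux; the reduction to the generators and the bond-by-bond bookkeeping are routine once it is in hand.
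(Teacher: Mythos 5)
Your proposal is correct and follows essentially the same route as the paper's own argument in Appendix~\ref{Ap.U(1)}: reduce the claim to the vanishing of the commutator of $H^{(\Lambda)}(0)$ with the total generator $\sum_{x}S^{(b)}(x)$, handle the hopping term via the anti-commutation relations (the flavor-singlet cancellation you describe), and kill the interaction term with the local identity (\ref{eq.B2}), which rests on (\ref{Scommu}) and the antisymmetry of the structure constants. The only difference is that you spell out the bond-by-bond computation for $H_K^{(\Lambda)}$ that the paper leaves implicit, which is a harmless elaboration.
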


\section{Symmetries}
\label{AppendixSymmetries}

In this Appendix, we prove two relations, (\ref{eq.indinv}) and (\ref{eq.S8}), below. 

\subsection{Spatial symmetry}
\label{Appendix:SpatialSymmetry}

For any $\mu$, it holds that
\begin{equation}
\label{eq.indinv}
\sum_{x \in \Lambda}\left\langle S^{(a)}(x)S^{(a)}(x+e_\mu) \right\rangle_{\beta, m}^{(\Lambda)}
=
\sum_{x \in \Lambda}\left\langle S^{(a)}(x)S^{(a)}(x+e_1) \right\rangle_{\beta, m}^{(\Lambda)}.
\end{equation}
To see this, let $P_\mu$ be the transformation of the permutation
\[
\left(x^{(1)}, x^{(2)}, \dots, x^{(\mu)} \right) \to (x^{(\mu)}, x^{(1)}, x^{(2)}, \dots, x^{(\mu-1)}).
\]
Then one has
\[
\sum_{x \in \Lambda}\left\langle S^{(a)}(x)S^{(a)}(x+e_1) \right\rangle_{\beta, m}^{(\Lambda)}
=
\frac{1}{Z_{\beta, m}^{(\Lambda)}}
\sum_{x \in \Lambda}\mathrm{Tr} \, \left[S^{(a)}(x)S^{(a)}(x+e_\mu) e^{-\beta P_\mu H^{(\Lambda)}(m)}\right].
\]
Since $U_\mathrm{HA}(j \to 1)$ given in (\ref{eq.gaugeHA}) does not change $S^{(a)}(x)$ and
\[
U_\mathrm{HA}^{-1}(j \to 1) P_\mu H^{(\Lambda)}(m) U_\mathrm{HA}(j \to 1) = H^{(\Lambda)}(m)
\]
by (\ref{eq.invHA}), we have (\ref{eq.indinv}). 

\subsection{Rotational symmetry} 

We show that
\begin{equation}
\label{eq.S8}
\left\langle S^{(3)}(x) S^{(3)}(x+e_1)  \right\rangle_{\beta,0} = \left\langle S^{(8)}(x) S^{(8)}(x+e_1)  \right\rangle_{\beta,0}.
\end{equation}
In order to prove this equality, we intoroduce the transformation, 
\begin{equation}
\Psi(x)\rightarrow \mathcal{U}^{(b)}(\theta)\Psi(x),
\end{equation}
for the fermion operator $\Psi(x)$ with 
\begin{equation}
\mathcal{U}^{(b)}(\theta)=e^{i\theta\lambda^{(b)}},
\end{equation}
where $\theta$ is a real parameter. The corresponding transformation is given by 
\[
U^{(b)}(\theta):=\prod_{x \in \Lambda}e^{i\theta S^{(b)}(x)}
\]
for the operators $S^{(a)}$. Namely, we write 
\begin{equation}
S_{\theta,b}^{(a)}(x):=U^{(b)}(\theta)^\dagger S^{(a)}(x)U^{(b)}(\theta).
\end{equation}
We also write
\begin{equation}
\mathcal{S}_{\theta,b}^{(a)}(x):=\Psi^\dagger(x)\mathcal{U}^{(b)}(\theta)^\dagger \lambda^{(a)}\mathcal{U}^{(b)}(\theta)\Psi(x).
\end{equation}
One can easily check that these two transformations are equivalent to each other. 

We choose $b=1,4,6$ for $\lambda^{(b)}$. For example, for $b=1$, one has 
\begin{equation}
\mathcal{U}^{(1)}(\theta)=\begin{pmatrix}
	\cos\theta & i\sin\theta & 0 \\
	i\sin\theta & \cos\theta & 0 \\
	0 & 0 & 1
\end{pmatrix}.
\end{equation}
In particular, for $\theta=\pi/2$, we have 
\begin{equation}
\mathcal{U}^{(1)}(\pi/2)=\begin{pmatrix}
	0& i & 0 \\
	i & 0 & 0 \\
	0 & 0 & 1
\end{pmatrix}.
\end{equation}
Namely, this interchanges the two components, $\psi_1(x)$ and $\psi_2(x)$, of $\Psi(x)$. 
Similarly, the rest of two, $b=4,6$, also interchange the two components of $\Psi(x)$. 
Therefore, for $x\ne y$, we have
\begin{equation}
\label{eq.ij}
\left\langle n_i(x)n_j(y) \right\rangle_{\beta,0}
=
\left\langle n_k(x)n_l(y) \right\rangle_{\beta,0} \quad \mbox{for all \ }i \neq j, k\neq l.
\end{equation}
and 
\begin{equation}
\left\langle n_i(x)n_i(y) \right\rangle_{\beta,0}
=
\left\langle n_j(x)n_j(y) \right\rangle_{\beta,0}\ \mbox{for all \ }i\ne j. 
\end{equation}
Here, $n_j=\psi_j^\dagger(x)\psi_j(x)$. These imply 
\begin{equation}
\left\langle S^{(8)}(x) S^{(8)}(y) \right\rangle_{\beta,0}
=\left\langle S^{(3)}(x) S^{(3)}(y) \right\rangle_{\beta,0}.
\end{equation}
for $x\ne y$.

\section{Derivation of (\ref{eq.infDuhamel})}
\label{Appendix:eq.DuhamelEq}

In order to show the bound (\ref{eq.infDuhamel}), we write
\begin{equation}
\label{eq.DuhamelEq}
\begin{split}
	\beta &(S^{(3)}[\tilde h'] , (S^{(3)}[\tilde h'] )_{\beta, m}\\
	&=
	\frac{\beta }{Z_{\beta, m}^{(\Lambda)}  }\int_0^1 ds \; e^{-s \beta E_0^{(\Lambda)}}
	\sum_{j=1}^d\left\langle \Psi_j^{(\Lambda)}, S^{(3)}[\tilde h']  
	e^{-(1-s)\beta H^{(\Lambda)}(m) } S^{(3)}[\tilde h']  \Psi_j^{(\Lambda)}\right\rangle \\
	&\quad+
	\frac{\beta}{Z_{\beta, m}^{(\Lambda)}  }\int_0^1 ds 
	\sum_{j: E_j^{(\Lambda)} \neq E_0^{(\Lambda)}}e^{-s \beta E_j^{(\Lambda)}}\left\langle \phi_{j}, S^{(3)}[\tilde h'] 
	e^{-(1-s)\beta H^{(\Lambda)}(m) } S^{(3)}[\tilde h']  \phi_{j}\right\rangle,
\end{split}
\end{equation}
where $\Psi_j^{(\Lambda)}$ are the ground states for $H^{(\Lambda)}(m)$ with the energy $E_0^{(\Lambda)}$ 
and $\phi_j$ are the excited states which satisfy $H^{(\Lambda)}(m) \phi_j = E_j^{(\Lambda)} \phi_j$.
The second term in the right-hand side vanishes as $\beta \to \infty$ as follows.
We first note that, as $\beta \to \infty$,
\begin{equation}
\label{eq.infPart}
e^{  \beta E_0^{(\Lambda)}}Z_{\beta, m}^{(\Lambda)}  
=
d + \sum_{j: E_j^{(\Lambda)} \neq E_0^{(\Lambda)}}e^{- \beta( E_j^{(\Lambda)} - E_0^{(\Lambda)})} \to d,
\end{equation}
where we have used $E_j^{(\Lambda)} - E_0^{(\Lambda)} > 0$.
Writing the second term in (\ref{eq.DuhamelEq}) as
\begin{align*}
\frac{\beta e^{- \beta E_0^{(\Lambda)}}}{Z_{\beta, m}^{(\Lambda)}  }\int_0^1 ds &
\sum_{j: E_j^{(\Lambda)} \neq E_0^{(\Lambda)}} e^{-s \beta( E_j^{(\Lambda)}-E_0^{(\Lambda)}) } \\
\times &\left\langle \phi_{j}, S^{(3)}[\tilde h']  e^{-(1-s)\beta (H^{(\Lambda)}(m) -E_0^{(\Lambda)})} S^{(3)}[\tilde h']  
\phi_{j}\right\rangle
\end{align*}
and using $H^{(\Lambda)}(m) -E_0^{(\Lambda)}\ge 0 $, one has 
\begin{align*}
\beta \sum_{j: E_j^{(\Lambda)} \neq E_0^{(\Lambda)}}&e^{-s \beta( E_j^{(\Lambda)} - E_0^{(\Lambda)})}\left\langle 
\phi_{j}, S^{(3)}[\tilde h']  e^{-(1-s)\beta (H^{(\Lambda)}(m) -E_0^{(\Lambda)})} S^{(3)}[\tilde h']  \phi_{j}\right\rangle\\
&\le
\|S^{(3)}[\tilde h']  \|^2 
\sum_{j: E_j^{(\Lambda)} \neq E_0^{(\Lambda)}}\beta e^{-s \beta (E_j^{(\Lambda)} - E_0^{(\Lambda)})} \to 0 \quad 
\ \mbox{as \ }\beta \to \infty.
\end{align*}
Therefore, the second term in the right-hand side of (\ref{eq.DuhamelEq}) goes to zero by the dominated convergence theorem 
as $\beta\to\infty$.

Furthermore, we decompose the first term in the right-hand side of (\ref{eq.DuhamelEq}) into two parts as follows: 
\begin{equation}
\label{eq.DuhFirst}
\begin{split}
	\frac{\beta e^{- \beta E_0^{(\Lambda)}}}{Z_{\beta, m}^{(\Lambda)}  }&\int_0^1 ds 
	\sum_{j=1}^d\left\langle \Psi_j^{(\Lambda)}, S^{(3)}[\tilde h']  e^{-(1-s)\beta (H^{(\Lambda)}(m) -E_0^{(\Lambda)})} 
	(1-P_0)S^{(3)}[\tilde h']  \Psi_j^{(\Lambda)}\right\rangle \\
	&+
	\frac{\beta e^{- \beta E_0^{(\Lambda)}}}{Z_{\beta, m}^{(\Lambda)}  }
	\sum_{j=1}^d\left\langle \Psi_j^{(\Lambda)}, S^{(3)}[\tilde h']   P_0 S^{(3)}[\tilde h']  \Psi_j^{(\Lambda)}\right\rangle.
\end{split}
\end{equation}
Using the operator identity
\[
\int_0^1 ds\;e^{-As}=\frac{1-e^{-A}}{A}
\]
for any hermitian matrix $A>0$, the first part can be written 
\begin{align*}
&\lim_{\beta \to \infty} 
\frac{e^{- \beta E^{(\Lambda)}(m)}}{Z_{\beta, m}^{(\Lambda)}  }
\sum_{j=1}^d\left\langle \Psi_j^{(\Lambda)}, S^{(3)}[\tilde h']   \mathcal{H}^{(\Lambda)}(m)^{-1} (1-P_0)S^{(3)}[\tilde h']  
\Psi_j^{(\Lambda)}\right\rangle\\
&=\frac{1}{d}\sum_{j=1}^d\left\langle \Psi_j^{(\Lambda)}, S^{(3)}[\tilde h']  
\mathcal{H}^{(\Lambda)}(m)^{-1}  (1-P_0)S^{(3)}[\tilde h']  \Psi_j^{(\Lambda)}\right\rangle
\end{align*}
in the limit $\beta \to \infty$, where we have used (\ref{eq.infPart}). Clearly, the second part is non-negative. 
Since it is enough to obtain the lower bound of (\ref{eq.infDuhamel}), we have 
\begin{align*}
& \lim_{\beta \to \infty} \beta (S^{(3)}[\tilde h'], S^{(3)}[\tilde h'])_\beta\\
&\ge \frac{1}{d}\sum_{j=1}^d\left\langle \Psi_j^{(\Lambda)}, S^{(3)}[\tilde h']  
\mathcal{H}^{(\Lambda)}(m)^{-1}  (1-P_0)S^{(3)}[\tilde h']  \Psi_j^{(\Lambda)}\right\rangle. 
\end{align*}
This proves (\ref{eq.infDuhamel}). 

In passing, we remark the following: 
In the same way, we notice that the second part of (\ref{eq.DuhFirst}) behaves like 
\begin{equation}
\frac{\beta}{d}  
\sum_{j=1}^d\left\langle \Psi_j^{(\Lambda)}, S^{(3)}[\tilde h']   P_0 S^{(3)}[\tilde h']  \Psi_j^{(\Lambda)}\right\rangle
\end{equation}
for a large $\beta$. As mentioned above, the summand is non-negative. 
If it is non-vanishing, then it contradicts with 
the upper bound (\ref{eq.Duhamel}) in the limit $\beta\to\infty$. Thus, it must be vanishing. 
Namely, the expectation value of $S^{(3)}[\tilde h']$ in the sector of the ground states is inevitably 
vanishing. This is stronger than the fact $\omega_m^{(\Lambda)}(S^{(3)}[\tilde h'])=0$ of (\ref{GSexpecS3}).

\subsection*{Acknowledgements} 
Y.~G. is grateful to Tetsuo Hatsuda and Masaru Hongo for informing her about the paper by Vafa and Witten~\cite{VW}, and for helpful discussions.
Partial financial support from JSPS Kakenhi Grant Number 23K12989 (Y.~G.), and the RIKEN iTHEMS Mathematical Physics Working Group (Y.~G.) is gratefully acknowledged.


\end{document}